\crefname{tcb@cnt@mybox}{box}{boxes}
\setlist[enumerate]{
  itemjoin={{, }},
  itemjoin*={{, and }},
}
\newacronym{anova}{ANOVA}{analysis of variance}
\newacronym{nhanes}{NHANES}{National Health and Nutrition Examination Survey}
\newacronym{bmd}{BMD}{bone mineral density}
\newacronym{cesd}{CES-D}{Center for Epidemiologic Studies-Depression Scale}
\newacronym{mmse}{MMSE}{Mini-Mental State Examination}
\newacronym{mdrs}{MDRS}{Mattis Dementia Rating Scale}
\newacronym{mle}{MLE}{Maximum Likelihood Estimator}
\newtheorem{theorem}{Theorem}[section]
\newtheorem{remark}{Remark}
\newtheorem{definition}{Definition}
\newcommand\iidsim{\stackrel{\mathclap{iid}}{\sim}}
\newcommand{\1}{\mathbbm{1}}
\newcommand{\R}{\mathbb{R}}
\newcommand{\hov}{\ensuremath{H_0^{1:p}}}
\newcommand{\hoM}{\ensuremath{H_0^M}}
\newcommand{\honM}{\ensuremath{H_0^{-M}}}
\newcommand{\FhoM}{\ensuremath{F_{\hoM}}}
\newcommand{\phoM}{\ensuremath{p_{\hoM}}}
\newcommand{\ptest}{\ensuremath{p_{\hoM}}^{\text{test}}}
\newcommand{\E}{{\mathbb{E}}}
\newcommand{\hoj}{\ensuremath{H_0^j}}
\newcommand{\hob}{\ensuremath{H_0^j(b)}}
\newcommand{\hoone}{\ensuremath{H_0^1}}
\newcommand{\alphaov}{\ensuremath{\alpha_0}}
\newcommand{\Fov}{{F_{\hov}}}
\newcommand{\Fovtrain}{\ensuremath{F_{\hov}^{\mathrm{train}}}}
\newcommand{\Xtrain}{X^{\mathrm{tr}}}
\newcommand{\Xtest}{X^{\mathrm{te}}}
\newcommand{\Ytrain}{Y^{\mathrm{tr}}}
\newcommand{\Ytest}{Y^{\mathrm{te}}}
\newcommand{\Rrho}{R_\rho}
\newcommand{\psel}{p_{H_0^M \mid E}}
\newcommand{\pselb}{p_{\hob \mid \tilde E^b}}
\newcommand{\CI}{\mathrm{CI}_\alpha^j(y)}
\newcommand{\RR}{\mathrm{R}}
\newcommand{\RSE}{\mathrm{RSE}}
\newcommand{\dretro}{d_{\mathrm{retro}}}
\newcommand{\aretro}{a_{\mathrm{retro}}}
\newcommand{\BvA}{B\hspace{.5mm}\text{vs.}\hspace{.5mm}A}
\newcommand{\kvk}{k_1\hspace{.25mm}\text{vs.}\hspace{.25mm}k_2}
\newcommand{\tZ}{\tilde{Z}}
\newcommand{\tW}{\tilde{W}}
\newcommand{\tV}{\tilde{V}}
\newtheorem{proposition}{Proposition}[section]
\newtheorem{lemma}{Lemma}[section]
\newtcolorbox[auto counter]{mybox}[2][]{float,title={\textcolor{black}{Box~\thetcbcounter: #2}},#1, 
colframe=white!70!gray}
\title{Valid F-screening in linear regression}
\author[1]{Olivia McGough}
\author[1,2]{Daniela Witten}
\author[3,4]{Daniel Kessler}
\affil[1]{Department of Statistics, University of Washington}
\affil[2]{Department of Biostatistics, University of Washington}
\affil[3]{Department of Statistics and Operations Research, University of North Carolina at Chapel Hill}
\affil[4]{School of Data Science and Society, University of North Carolina at Chapel Hill}
\begin{document}

\maketitle

\begin{abstract}
  Suppose that a data analyst wishes to report the results of a least squares
linear regression \emph{only if the overall null hypothesis---$\hov: \beta_1=
  \beta_{2} = \ldots = \beta_p=0$---is rejected}. This practice, which we refer
to as \emph{F-screening} (since the overall null hypothesis is typically tested
using an $F$-statistic), is in fact common  across a number of applied
fields. Unfortunately, it poses a problem: standard guarantees for the
inferential outputs of linear regression, such as Type 1 error control of
hypothesis tests and nominal coverage of confidence intervals, hold
\emph{unconditionally}, but fail to hold conditional on rejection of the overall
null hypothesis. In this paper, we develop an inferential toolbox for the
coefficients in a least squares model that are valid conditional on rejection of
the overall null hypothesis. We develop selective p-values that lead to tests
that are consistent and control the selective Type 1 error, i.e., the Type 1 error conditional on
having rejected the overall null hypothesis. 
Furthermore, they can be computed
without access to the raw data, i.e., using only the standard outputs of a least
squares linear regression, and therefore are suitable for use in a retrospective
analysis of a published study. We also develop confidence intervals that attain the
nominal selective coverage, and point estimates that account for
having rejected the overall null hypothesis. We derive an expression for the Fisher information about the coefficients resulting from the proposed approach, and compare this to the Fisher information that results from an alternative approach that relies on sample splitting. We investigate the proposed approach in simulation and via re-analysis of two
datasets from the biomedical literature.

\end{abstract}

\section{Introduction}
\label{sec:intro}

Not all data analyses yield interesting results, and many uninteresting findings
are left in the ``file drawer'' \citep{rosenthal1979FileDrawerProblem} rather than published. Furthermore, when
papers are published, analyses yielding null results are often omitted \citep{benjamini2005FalseDiscoveryRate}. Such \emph{selective reporting}
has been empirically observed in the biomedical literature \citep{chan2004EmpiricalEvidenceSelective, ioannidis2005WhyMostPublished}.

Why is selective reporting a problem? Standard statistical guarantees, such as
Type 1 error control for hypothesis tests and nominal coverage of confidence
intervals, only hold when results are reported regardless of whether they are
``interesting.''

In this work, we consider a particular and pervasive form of selective reporting
that arises in the context of a regression model of the form
\begin{equation}\label{eq:model}
  Y = \beta_0 + X_1 \beta_1 + \ldots + X_p \beta_p  + \epsilon, \;\;\; \epsilon \sim \mathcal{N}(0, \sigma^2).
\end{equation}
Given $n$ independent observations from \eqref{eq:model}, suppose that the data
analyst first tests
\begin{equation} \label{eq:hov}
\hov: \beta_1 = \beta_2 = \ldots = \beta_p = 0
\end{equation}
at level \(\alphaov\in (0,1)\) in order to determine whether the model in \eqref{eq:model} merits further study. If they fail to reject $ \hov$, then \eqref{eq:model} receives no further
consideration, and the fact that \eqref{eq:model} was considered may not even be reported.
On the other hand, if (and only if) they reject \(\hov\), they then proceed to
test $\hoM : \beta_M = 0$ for a fixed (and typically strict) subset $M$ of $\left\{1,\ldots,p\right\}$, where $\beta_M := \{\beta_j:j\in M\}.$
Because \(\hov\) is typically tested by means of an \(F\)-statistic, we refer to
this procedure---summarized in Box~\ref{box:box1}---as \emph{F-screening}. If conducted, the test in Step 2 of Box~\ref{box:box1} is sometimes referred to as a \emph{post hoc} test. 

\begin{mybox}[floatplacement=!h,label={box:box1}]{\emph{F-screening in the model \eqref{eq:model}.}
  } \begin{enumerate}[start=1,label={Step \arabic*:},leftmargin=*]
  \item Test $\hov$ as given in \eqref{eq:hov} at level $\alphaov$, using an $F$-test.
  \item Only if $\hov$ is rejected, conduct inference on $\beta_M$ for fixed $M \subset \{ 1,\ldots, p \}$.
  \end{enumerate}
\end{mybox}

To see that conducting ``standard'' inference in Step 2 of F-screening will not provide appropriate
statistical guarantees, consider the extreme setting where \(p = 1\) and $\beta_1 = 0$. In this
case, \(\hov \equiv \hoone\). Suppose we carry out F-screening in Box~\ref{box:box1}
using the ``standard'' $t$-test for \(\hoone\) in Step 2. Then for fixed $\alphaov$, the tests in Steps 1 and
2 are identical, and so we will report a result in Step 2 only if $\hoone$ is
rejected. Thus the Type 1 error rate among reported
results equals one! The left plot in \cref{fig:triple_plot} illustrates the inflation of Type 1
error that results from conducting ``standard'' inference in Step 2 in a setting with
$p=10$ and $n=100$.

As pointed out by \citet{olshen1973ConditionalLevelTest}, conducting a test of $\hoM$ that accounts for the fact that $\hoM$ is tested if and only if 
$\hov$ is rejected is distinct from the task of correcting
multiple tests of the form $\hoM: \beta_M=0$ for
multiple comparisons. The latter has been very well-explored in the literature:
see \citet{dunn1961MultipleComparisonsMeans,sidak1967RectangularConfidenceRegions,tukey1949ComparingIndividualMeans,scheffe1953MethodJudgingAll}. For instance, if our interest in Step 2 of Box~\ref{box:box1} lies in $\hoj :
\beta_j = 0 $ for $j\in\{1,\ldots, p\}$, then we could consider testing each null hypothesis using a standard test, and applying a Bonferroni correction to account
for the fact that $p$ tests were conducted. However, this will not address the
fact that F-screening was performed: namely, that we test $\{\hoj\}_{j=1}^p$ \emph{if and only if we have rejected $\hov$}. 
In Appendix \ref{app:multipletesting}, we elaborate on how the guarantees provided by multiple testing corrections---such as family-wise error rate control---differ from those achieved by our proposed method. Further, we describe a closed-testing procedure that parallels F-screening and show that its guarantees are quite different. \cref{fig:qqplots-appendix} demonstrates empirically that muliple testing approaches fail to provide valid inference conditional on the rejection of $\hov$.

Our goal of conducting inference on $\beta_M$ that is valid \emph{conditional on the fact that we rejected $\hov$} falls under the umbrella of \emph{selective
inference} or \emph{post-selection inference}; we will view rejection of $\hov$ as the ``selection event.'' The selective inference framework was laid out in
\citet{fithian2017OptimalInferenceModel} and further explored in a variety of
settings \citep[among others]{taylor2015StatisticalLearningSelective,
  lee2016ExactPostselectionInference, zhang2019ValidPostclusteringDifferential,
  chen2020ValidInferenceCorrected, jewell2022TestingChangeMean,
  chen2024TestingDifferenceMeans, chen2023MorePowerfulSelective,
  chen2023SelectiveInferenceKmeans, gao2024SelectiveInferenceHierarchical,
  neufeld2022TreeValues}. We with to control the selective Type 1 error for the Step 2 test, conditional on rejection in Step 1; that is,
$$\Pr_{\hoM}\left(\text{$\hoM$ is rejected} \mid \text{$\hoM$ is tested}\right)\leq \alpha.$$
Our work is related to \citet{heller2019PostSelectionEstimationTesting},
who considered inference following aggregate testing for a coefficient that
follows a multivariate normal distribution with known covariance. However, in \cref{app:heller}, we show that their proposal does not control the selective Type 1 error in the setting of Box~\ref{box:box1}.

Our contributions are as follows:
\begin{enumerate}
\item We provide a valid end-to-end approach for conducting F-screening in the sense of Box~\ref{box:box1}. That is, we enable reporting of all of the standard multiple linear regression outputs---such as p-values, point estimates, and confidence intervals---conditional on rejection of $\hov$ in Step 1; we will refer to them as \textit{selective} p-values, point estimates, and confidence intervals. %
\item We establish consistency and  selective Type 1 error control of the test resulting from our selective p-value.
\item We characterize the Fisher information about $\beta_M$ resulting from our procedure, and we show that it compares favorably to that of a procedure that relies on sample splitting.
\item We show that the selective p-values can be computed using
only the standard multiple linear regression outputs \emph{without access to the raw data}. This is particularly critical because, while it is firmly established that selective reporting is common practice in the published literature \citep{chan2004EmpiricalEvidenceSelective}, \emph{the raw data is often not available to the reader who wishes to correct for it}.
\item We specialize our approach to the setting of \gls{anova}, in which the F-screening procedure in Box~\ref{box:box1}, with Step 2 conducted using a ``standard'' approach that does not account for Step 1, is particularly pervasive (see, e.g., \cref{fig:table1}).
\end{enumerate}

\begin{figure}[h]
  \includegraphics[width = \textwidth]{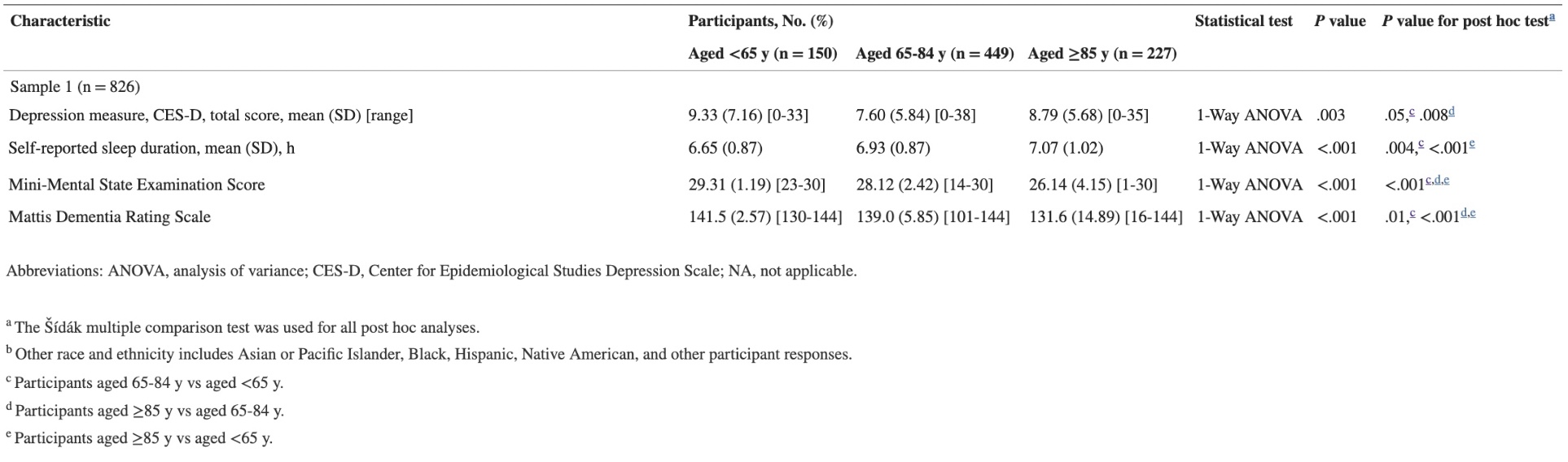}
  \caption{A subset of Table 1 from \citet{keil2023LongitudinalSleepPatterns}. The $n=826$ participants in the dataset are split into three age groups: under 65 years old, between 65 and 85 years old, and over 85 years old. Each row in the table corresponds to a continuous variable of interest, with means and standard deviations reported for each age group. For each of these continuous variables, \gls{anova} is conducted to test for a difference in means between the three age groups (corresponding to Step 1 of Box~\ref{box:box1}). The associated p-value is reported in the ``P Value'' column. The last column holds p-values for ``post hoc tests'' which are typically conducted \emph{only} if the p-value associated with the \gls{anova} test is small (corresponding to Step 2 of Box~\ref{box:box1}). In this case, the p-values in the last column would be invalid since they do not account for the fact that they were only computed because $\hov$ \eqref{eq:hov} was rejected. }
  \label{fig:table1}
\end{figure}

Addressing the inferential problems posed by F-screening is especially important
from an applied perspective. A theoretical statistician unaccustomed to scientific
collaboration might believe that F-screening is an unusual procedure unlikely to
be conducted in practice, or that conducting Step 2 of Box~\ref{box:box1}
without accounting for Step 1 is ``obviously wrong'' and no data analyst would
consider it. The theoretical statistician would be mistaken on both counts.
Statistical references that encourage F-screening with ``standard'' inference in Step 2 as part of a routine data
analysis abound, and some are authored by luminaries in the field.\footnote{\citet[\S 24]{fisher1935DesignExperiments} writes:
  ``If the yields of
  the different varieties in the experiment fail to satisfy the test of
  significance they will not often need to be considered further . . . If,
  however . . . the null hypothesis has been falsified . . . we shall thereafter
  proceed to interpret the differences between the varietal yields . . . and
  shall be concerned to know with what precision these different yields have
  been evaluated.''
    A few years later, \citet[page
  98]{lindquist1940StatisticalAnalysisEducational} writes
  ``The student may well ask why it was necessary to apply the \(F\)-test at
  all, if we were subsequently going to test the individual differences anyway.
  The answer is that the \(F\)-test tells us whether it is worthwhile to test the
  individual difference at all. If the \(F\)-test had not proven significant, we
  would have known at once that \emph{all} observed differences in . . . means
  could be due to chance alone. In that case it would not only have been
  unnecessary, but decidedly \emph{improper} to apply the \(t\)-test to individual
  differences.''
  The abstract of \citet{scheffe1953MethodJudgingAll} poses
  the following question: ``Under the usual assumptions, if the conventional
  $F$-test of $H: \mu_1=\mu$ . . . rejects $H$, what further inference are valid
  about the contrasts $H_i$?'' \citet{scheffe1953MethodJudgingAll} later
  remarks that ``The problem of making further inferences about the contrasts,
  arising when the $F$-test rejects $H$, has been considered by various writers,
  including \citet{fisher1935DesignExperiments},
  \citet{newman1939DistributionRangeSamples},
  \citet{tukey1949ComparingIndividualMeans},
  \citet{nandi1951AnalysisVarianceTest}.''
  As pointed out in \citet{olshen1973ConditionalLevelTest}, to the extent that these references consider selection, they focus on the multiplicity issues that arise from testing $\hoj: \beta_j=0$, $j=1,\ldots,p$, rather than on the challenge that arises from inference conditional on rejecting $\hov$. See Appendix~\ref{app:multipletesting} for simulation results demonstrating that multiple testing corrections do not correct for selection in the F-screening procedure.}
Thus, it is hardly surprising that in the scientific literature, conducting Step 2 of Box~\ref{box:box1} with a ``standard'' approach that does not account for having conducted Step 1 is often
viewed as the correct analysis.\footnote{As one of many possible examples, \citet{webb2024NeighborhoodResourcesAssociated} describe their analysis as follows: ``A one-way \gls{anova} revealed that amygdala reactivity was not significantly different by trajectory groups . . . Therefore, additional analyses testing whether greenspace was associated with trajectory assignment via amygdala reactivity were not conducted.''} In clinical journals, F-screening is a
routine part of ``Table 1,'' in which patient characteristics are summarized; an
example is shown in \cref{fig:table1}.

In \cref{fig:table1}, \citet{keil2023LongitudinalSleepPatterns} report the results of multiple ANOVAs for several outcomes of interest, along with the corresponding post hoc tests. We note that if these post hoc results are reported for all outcomes, regardless of the result in Step 1 of Box~\ref{box:box1}, then no selective reporting problem arises and standard marginal guarantees remain intact. However, if subsequent inference is only conducted or reported when $\hov$ \eqref{eq:hov} is rejected, then this inference is implicitly conditional on the rejection event, and standard procedures no longer provide valid guarantees. Further, we emphasize that this issue is not limited to studies that consider multiple ANOVAs or linear models: \emph{even when Step 1 of Box~\ref{box:box1} is conducted for a single model, Step 2 must account for the fact that we would not have conducted it had $\hov$ not been rejected in Step 1.} 

Finally, we note that F-screening is likely conducted much more
frequently than the published literature suggests, both because (i) it is typically unknown whether an analysis would have been included in a published paper if $\hov$ had not been rejected, and (ii) an analysis for which $\hov$ is
not rejected may not be published at all. Given that F-screening is both
commonplace and problematic if inference in Step 2 does not account for Step 1, it came as a surprise to us that---to the best of
our knowledge---no statistical solution has yet been proposed for valid inference in Step 2. The rest of this paper fills this gap.

The paper is organized as follows. In Section~\ref{sec:method}, we propose a
conditional selective inference approach for valid inference in Step 2 of Box~\ref{box:box1}. We establish its theoretical properties and derive an expression for the leftover Fisher information after rejection of $\hov$ \eqref{eq:hov} in Step 1. Section~\ref{sec:retrospective} shows that it is possible to compute selective p-values using the results of a ``standard'' (non-selective) analysis in Step 2, \emph{without having access to the raw data}. Section~\ref{sec:anova} specializes the results in
Sections~\ref{sec:method} and \ref{sec:retrospective} to the case of
\gls{anova}. Simulation results are in Section~\ref{sec:simulation} (and \cref{app:extended_sim}), and re-analyses of published findings in \cref{sec:data}. The discussion is in \cref{sec:discussion}. The main text focuses on developing valid selective p-values; confidence intervals
and point estimates are deferred to Appendix \ref{app:ci_point_estimates}.
\cref{app:geometry} contains mathematical details for Figures~\ref{fig:3D_geometry} and \ref{fig:2D_geometry}. Proofs of all results are given in Appendix~\ref{app:proofs}.

\section{Developing valid p-values in Step 2 of F-screening }
\label{sec:method}

\subsection{Preliminaries} \label{subsec:formulation}

Let $P_A := A(A^\top A)^{-1}A^\top$ denote the projection onto the column space
of a matrix $A$ with full column rank, and let $A_{-S}$ denote the matrix $A$
with the columns indexed by $S\subset \{1,\ldots,p\}$ removed. Furthermore,
let $\mathbbm{1}_n$ denote the $n$-dimensional column vector of 1's, and $I_n$ the $n\times n$ identity matrix. This section restates some standard results in
the theory of linear models \citep[a more
thorough treatment can be found elsewhere,
e.g.,][]{seber2003LinearRegressionAnalysis}.

Consider $n$ independent observations from the model in \eqref{eq:model}. We
assume that the intercept is not of scientific interest, and so both $\hov$ and $\hoM$ will operate on centered data. To
do this, we let $U\in \R^{n\times (n-1)}$ be any matrix satisfying both \(U^\top
U = I_{n-1}\) and $UU^\top = I_n - P_{\1_n}$, and then with slight abuse of
notation we redefine $Y := U^\top Y$, $X_1 := U^\top X_1$, and so on. Thus, in
what follows, we will treat $\mathbb{R}^{n-1}$ as the ambient space and dispense
with the intercept.

First, consider the null hypothesis $\hov$ given in \eqref{eq:hov}. Under $\hov$, the
$F$-statistic defined as
  \begin{align}
        \Fov := \frac{n-p-1}{p} \cdot \frac{Y^\top P_X Y}{Y^\top (I_{n-1}-P_X) Y}  \label{eq:Foverall}
\end{align}
follows an $F_{p, n-p-1}$ distribution (that is, an $F$ distribution with $p$ and $n-p-1$ degrees of freedom). Next, fix \(\hoM\) by
letting $M\subset \{1,\ldots,p\}$ denote a set of indices with
cardinality $\lvert M \rvert=m<p$, and let $\beta_M$ denote the subset of coefficients indexed
by $M$ (in the simplest case, $M=\{j\}$ for fixed \(j \in \left\{ 1, \ldots, p
\right\}\)). Under the null hypothesis $\hoM$, the $F$-statistic
\begin{equation}
\FhoM := \frac{n-p-1}{m}\cdot \frac{Y^\top (P_X - P_{X_{-M}}) Y }{Y^\top (I_{n-1}- P_X) Y} 
 \label{eq:fstat}
\end{equation}
follows an $F_{m,n-p-1}$ distribution. Thus, for a realization \(y\) of the
random variable \(Y\), we can compute the p-value for $\hoM: \beta_M=0$ as
\begin{align}
\phoM ( y ) & := \Pr_{\beta_M=0} \left(  \frac{Y^\top (P_X - P_{X_{-M}}) Y }{Y^\top (I_{n-1}- P_X) Y} \geq \frac{ y^\top (P_X - P_{X_{-M}}) y }{y^\top (I_{n-1}- P_X) y} \right)\nonumber \\
&= 1-F_{m,n-p-1} \left(  \frac{ y^\top (P_X - P_{X_{-M}}) y }{y^\top (I_{n-1}- P_X) y} \cdot \frac{(n-p-1)}{m}   \right), \label{eq:pnaive}
\end{align}
where \(F_{m, n - p - 1} \left( \cdot \right)\) is the cumulative distribution
function of a random variable with an \(F_{m, n - p - 1}\) distribution. This is the ``standard'' test of $\hoM:\beta_M=0$, i.e., it does not account for Step 1 of Box~\ref{box:box1}. 

\cref{fig:3D_geometry} displays the rejection regions corresponding to a test of $\hov$ using \eqref{eq:Foverall} and a test of $\hoM$ using \eqref{eq:fstat} in a simple setting with $n=3$, $p=2$, and $X$ orthogonal. 
The geometry of these two events interplays in non-trivial ways, and failing to account for the first when evaluating the second can yield invalid inference.
By contrast, the selective test that we will develop in the next section will take the sequential nature of the analysis into account, and it will adjust the geometry of the second test accordingly.

\begin{figure}[h]
  \includegraphics[width = \textwidth]{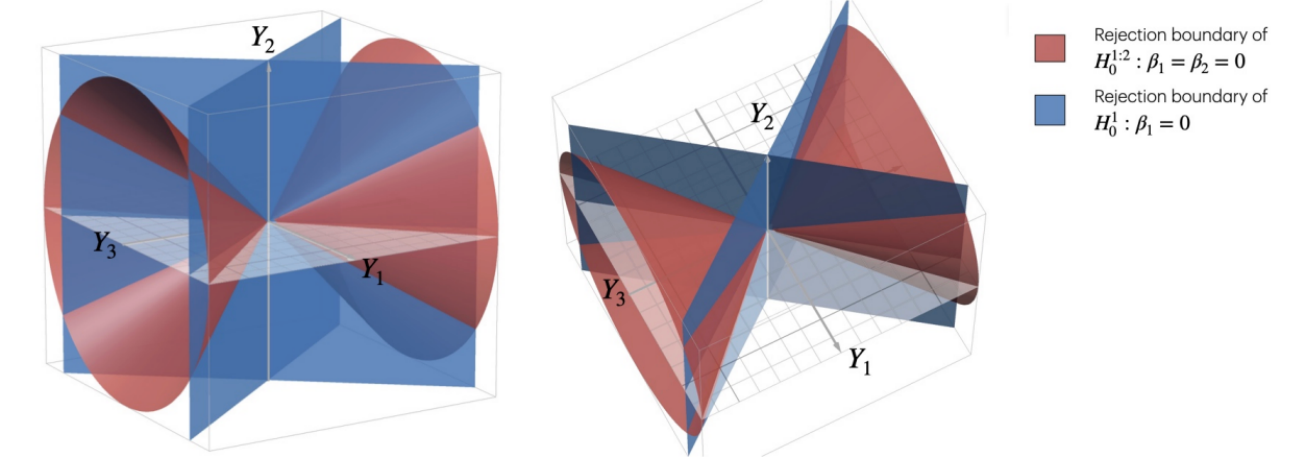}
  \caption{We consider testing $H_0^{1:2}:\beta_1=\beta_2=0$ using $F_{H_0^{1:2}}$ \eqref{eq:Foverall} and $\hoone:\beta_1=0$ using $F_{H_0^1}$ \eqref{eq:fstat} in the model $Y=X\beta+\epsilon$, where $ X = \begin{bmatrix} 1 & 0 \\ 0 & 1
  \\ 0 & 0 \end{bmatrix}$. The red double cone corresponds to the rejection boundary of the test $H_0^{1:2}$, and the blue planes correspond to the rejection boundary of $\hoone$. Thus data points that lie farther from the $Y_3$-axis than the double cone lead to rejection of $H_0^{1:2}$, and data points that lie farther from the $(Y_2,Y_3)$-plane than the two blue planes lead to rejection of $\hoone$. See \cref{app:geometry} for mathematical details. 
    }
  \label{fig:3D_geometry}
\end{figure}

\subsection{A conditional selective inference approach}
\label{subsec:method}

We test $\hoM$ in Step 2 of Box~\ref{box:box1} if and only if we reject $\hov$ \eqref{eq:hov}
at level $\alphaov\in (0,1)$: that is, if and only if $Y\in E_1$, where
\begin{equation}
  E_1 := \left\{Y:\frac{n-p-1}{p} \cdot \frac{Y^\top P_X Y}{Y^\top (I_{n-1}- P_X) Y}
    \geq F^{-1}_{p,n-p-1}(1-\alphaov)\right\} .\label{eq:rejectFov}
\end{equation}

The left panel of \cref{fig:triple_plot} shows that conditional on \(Y \in E_{1}\) \eqref{eq:rejectFov}, the p-values obtained using \(\FhoM\) are far from uniform and indeed anti-conservative---i.e., a test based on $\phoM$ \eqref{eq:pnaive} does not control the selective Type 1 error in Step 2 of Box~\ref{box:box1}. This is because $\FhoM$ \eqref{eq:fstat} does not follow an $F_{m, n-p-1}$ distribution under $\hoM$ when we condition on \(Y \in E_{1}\) \eqref{eq:rejectFov}. Our goal is to develop an alternative to
$\phoM$ \eqref{eq:pnaive} that \emph{does} lead to selective Type 1 error control.

To illustrate this idea, we consider in \cref{fig:2D_geometry} a simple setting with $p=2$ and  $\sigma^2$  known. The left-hand panel displays the rejection regions for $\chi^2$-tests of $H_0^{1:2}$ and $H_0^1$, respectively; this does not account for the sequential nature of the two tests in the context of F-screening.  The right-hand panel displays the rejection region of a $\chi^2$-test of $H_0^{1:2}$, alongside the rejection region of
a test of $H_0^1$ that conditions upon rejection of $H_0^{1:2}$. 

\begin{figure}[h]
  \includegraphics[width = \textwidth]{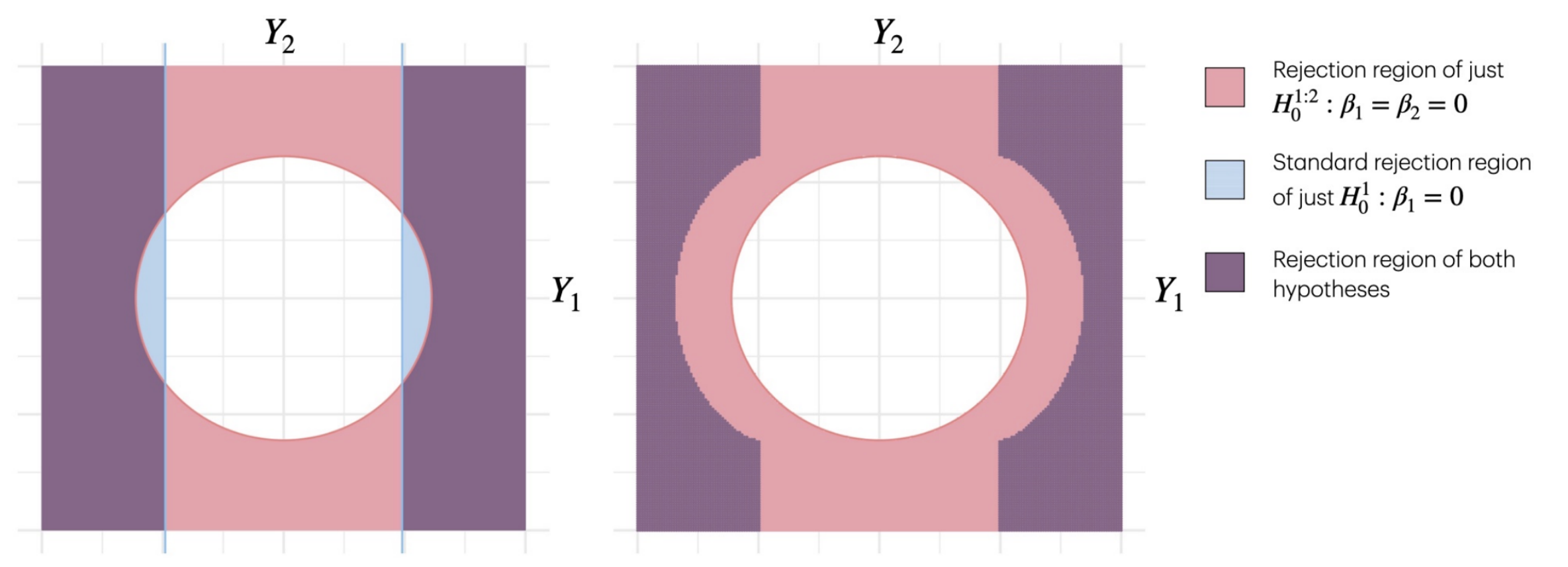}
  \caption{We consider testing $H_0^{1:2}:\beta_1=\beta_2=0$ and $\hoone:\beta_1=0$ in the model $Y=X\beta+\epsilon$ where $ X = \begin{bmatrix} 1 & 0 \\ 0 & 1 \end{bmatrix}$ and we take the variance of $\epsilon$ to be known. In the left plot we use $\chi^2$-tests to test $H_0^{1:2}$ and $\hoone$; this is in analogy to F-screening in Box~\ref{box:box1} where Step 2 is conducted using a ``standard'' test. On the right, we use a $\chi^2$-test to test $H_0^{1:2}$, and then we use a test of $\hoone$ in Step 2 that accounts for the rejection of $H_0^{1:2}$ in Step 1. 
  This leads to a substantial change in the geometry of the rejection region of $\hoone$. Mathematical details are in \cref{app:geometry}.
    }
  \label{fig:2D_geometry}
\end{figure}

To extend this idea to the setting of the paper, we make use of $F$-tests rather than $\chi^2$-tests. 
We modify $\phoM$ in \eqref{eq:pnaive} to
condition on the selection event in \eqref{eq:rejectFov}, leading to
\begin{align}
\label{eq:p-ideal}
 \Pr_{\beta_M=0} \left(  \frac{ Y^\top (P_X - P_{X_{-M}}) Y } {Y^\top (I_{n-1}- P_X) Y}  \geq  \frac{ y^\top (P_X - P_{X_{-M}}) y }{y^\top (I_{n-1}- P_X) y}  \middle|\: Y\in E_1 \right).
\end{align}
Unfortunately, \eqref{eq:p-ideal} is not a well-defined
probability: it involves nuisance parameters $\left\{ \beta_j \right\}_{j \notin M}$ and $\sigma^2$ which are not fixed under \(\hoM\), and thus it cannot be computed. To remedy this, we
additionally condition on the events
\begin{equation}\label{eq:E2E3}
    E_2(y):= \left\{ Y: \|P_{{X}_{-M}} Y\| =  \|P_{{X}_{-M}} y\|\right\}\quad \text{and} \quad E_3(y) =\left\{ Y: \|Y\| = \|y\| \right\}.
\end{equation}
We then define
\begin{equation}
E(y) := E_1 \cap E_2(y) \cap E_3(y).
\label{eq:E}
\end{equation}
This leads to the p-value
\begin{equation}
\psel (y)   :=  \Pr_{\beta_M=0} \left(  \frac{ Y^\top (P_X - P_{{X}_{-M}}) Y }{Y^\top (I_{n-1}- P_X) Y}  \geq  \frac{ y^\top (P_X - P_{{X}_{-M}}) y }{y^\top (I_{n-1}- P_X) y} \middle| Y \in E(y)  \right).
\label{eq:pselective}
\end{equation}
\begin{remark}\label{rmk:conditioning}
    In $\psel$, we condition on the realizations of $\|Y\|$ and $\|P_{X_{-M}}Y\|$. As discussed in Section 4.1 of \citet{fithian2017OptimalInferenceModel}, the natural parameters for the model $Y\sim \mathcal{N}_{n-1}(X\beta, \sigma^2 I_{n-1})$ are $\beta/\sigma^2$ and $1/(2\sigma^2),$ with sufficient statistics $X_k^\top Y$ for each $\beta_k/\sigma^2$ and $\|Y\|^2$ for $1/(2\sigma^2)$. Hence, conditioning on $\|Y\|$ and $P_{X_{-M}}Y$ removes dependence on $\beta_{-M}$ and $\sigma^2$, and we are left with a distribution that depends on $\beta_M/\sigma^2$. In fact, it suffices to condition on $\|P_{X_{-M}}Y\|$ in place of the full vector $P_{X_{-M}}Y$. Thus, we are able to test $\beta_M/\sigma^2=0$, or equivalently $\hoM:\
    \beta_M=0$, as we show in \cref{prop:chisq}.  
\end{remark}
The next result establishes that a test of $\hoM$ based on $\psel(y)$ controls the selective Type 1 error, conditional on the event that $\hov$ was rejected. 
\begin{theorem}
  \label{thm:psel-type1}
A test of $\hoM$ based on the p-value $\psel(y)$ \eqref{eq:pselective} controls the selective Type 1 error, conditional on the event that $\hov$ was rejected. That is, for any $\alpha' \in (0,1)$, 

$$\Pr_{\beta_M=0}\left( \psel(Y) \leq \alpha' \mid Y\in E_1 \right) = \alpha'. $$
\end{theorem}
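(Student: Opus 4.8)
The plan is to establish the stronger claim that $\psel(Y)$ is \emph{exactly} $\mathrm{Uniform}(0,1)$ conditional on the finer selection-plus-nuisance event $E(y) = E_1 \cap E_2(y)$ — for (almost) every value of the statistic $Y^\top P_{X_{-M}} Y$ held fixed by $E_2(y)$ — and then to recover the stated result conditional on $E_1$ alone by averaging over that statistic with the tower rule.

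First I would set up an orthogonal decomposition in the ambient space $\R^{n-1}$. Since $\mathrm{col}(X_{-M}) \subseteq \mathrm{col}(X)$, the matrices $P_{X_{-M}}$, $P_X - P_{X_{-M}}$, and $I_{n-1} - P_X$ are projections onto mutually orthogonal subspaces of dimensions $p-m$, $m$, and $n-1-p$, summing to $I_{n-1}$. Write $T := Y^\top P_{X_{-M}} Y$, $N := Y^\top (P_X - P_{X_{-M}}) Y$, and $D := Y^\top (I_{n-1} - P_X) Y$, so that $Y^\top P_X Y = T + N$; then the statistic in \eqref{eq:pselective} is $R := N/D$, the selection event is $E_1 = \{(n-p-1)(T+N) \ge p\,c\,D\}$ with $c := F^{-1}_{p,n-p-1}(1-\alphaov)$ (a function of $(T,N,D)$ only), and $E_2(y) = \{T = t\}$ with $t := t(y)$ the observed value of $T$. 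Under $\hoM$ we have $\E[Y] = X_{-M}\beta_{-M} \in \mathrm{col}(X_{-M})$, so $(P_X - P_{X_{-M}})Y$ and $(I_{n-1}-P_X)Y$ are \emph{mean-zero} Gaussian and, being projections of a spherical Gaussian onto subspaces orthogonal to $\mathrm{col}(X_{-M})$, are jointly independent of $P_{X_{-M}}Y$; in particular, under $\hoM$ the pair $(N,D)$ is independent of $T$ and its joint law does not depend on the nuisance parameters $\beta_{-M}$.

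Then I would use this decomposition twice. (i) Conditioning on $E_2(y) = \{T = t\}$, the independence above shows that the conditional law of $(N,D)$ given $E(y) = E_1 \cap \{T = t\}$ is just the (nuisance-free) law of $(N,D)$ restricted to $\{(n-p-1)(t+N) \ge p\,c\,D\}$; hence the right-hand side of \eqref{eq:pselective} does not depend on $\beta_{-M}$ and $\psel(y)$ is a genuine probability. Writing $g_t(r) := \Pr_{\beta_M=0}(R \ge r \mid Y \in E_1,\, T = t)$ for the conditional survival function, we have $\psel(y) = g_{t(y)}(r(y))$, i.e.\ $\psel(Y) = g_T(R)$ as a random variable. (ii) Since $N$ and $D$ are squared norms of independent non-degenerate Gaussians, $R = N/D$ is absolutely continuous, and remains so after restricting to the positive-probability event $\{(n-p-1)(t+N) \ge p\,c\,D\}$ (positive for every $t \ge 0$, as $N$ and $D$ have full support on $(0,\infty)$); therefore $r \mapsto g_t(r)$ is continuous and strictly decreasing on the relevant support, and the probability integral transform gives $\Pr_{\beta_M=0}(g_t(R) \le \alpha' \mid Y \in E_1,\, T = t) = \alpha'$ for every $\alpha'\in(0,1)$ and every admissible $t$. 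Averaging over $T$ by the tower property,
\begin{align*}
\Pr_{\beta_M=0}\!\left(\psel(Y) \le \alpha' \mid Y \in E_1\right)
&= \E_{\beta_M=0}\!\left[\, \Pr_{\beta_M=0}\!\left(g_T(R) \le \alpha' \mid Y \in E_1,\, T\right) \,\middle|\, Y \in E_1 \right] \\
&= \E_{\beta_M=0}\!\left[\alpha' \mid Y \in E_1\right] = \alpha',
\end{align*}
which is the claim.

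I expect the main obstacle to be purely measure-theoretic: because $E_2(y)$ conditions on the continuous statistic $T$, the two uses of the decomposition should be phrased through regular conditional distributions (disintegration) rather than naive conditioning, and one must check that $\Pr(Y \in E_1 \mid T = t) > 0$ for almost every $t$ so that $g_t$ is well-defined — both of which the orthogonal decomposition makes routine. A secondary point requiring care is that the conclusion is an \emph{equality}, not merely an inequality, which relies on the no-atoms property of the conditional law of $R$; this again follows from the decomposition.
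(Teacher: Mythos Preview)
Your proposal is correct and follows essentially the same approach as the paper: establish that $\psel(Y)$ is exactly $\mathrm{Uniform}(0,1)$ conditional on $E(y)=E_1\cap E_2(y)$ via the probability integral transform, then average over the nuisance statistic with the tower rule to drop down to $E_1$. The paper's proof is terser---it packages your orthogonal decomposition and the independence of $(N,D)$ from $T$ into a separately stated lemma, and the probability integral transform into a one-line remark---whereas you spell out the decomposition inline and are more explicit about the measure-theoretic points (disintegration, no atoms needed for equality) that the paper passes over in silence; but the skeleton is identical.
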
  
The left panel of \cref{fig:triple_plot} empirically corroborates this result. The next result shows that $\psel(y)$ can be easily approximated via Monte Carlo.

\begin{proposition}
    \label{prop:chisq}
    Let $Z \sim \chi_{n-p-1}^2$ and $W \sim \chi_{m}^2$, where $Z \perp W$ (i.e., \(Z\) is independent of \(W\)). Define 
    \begin{equation}\label{eq:acdr}
        a(y) : = \|y\|^2, \quad c := F^{-1}_{p,n-p-1}(1-\alphaov)\cdot\tfrac{p}{n-p-1}, \quad d(y) := \|P_{X_{-M}}y\|^2, \quad 
    r(y) := \tfrac{ y^\top (P_X - P_{{X}_{-M}}) y }{y^\top (I_{n-1}- P_X) y}.
    \end{equation} 
    Then,
\begin{equation}\label{eq:pselchisq}
\psel(y) = 
        \Pr \left ( 
        \frac{W}{Z} \geq r(y)\,\middle|\frac{a(y)\cdot W + d(y)\cdot  Z}{(a(y)-d(y))\cdot Z}\geq c\right).
\end{equation}
\end{proposition}

To summarize, a test based on $\psel(y)$ in \eqref{eq:pselective} controls the
selective Type 1 error for $\hoM$ in the context of F-screening as described in
Box~\ref{box:box1} (Theorem~\ref{thm:psel-type1}). Furthermore, $\psel(y)$ can be
easily approximated via Monte Carlo (Proposition~\ref{prop:chisq}). Obtaining confidence intervals for
$\beta_1,\ldots,\beta_p$ that attain the nominal selective coverage (conditional
on rejecting $\hov$), and correcting the corresponding point estimates for
selection, are straightforward extensions of the ideas in this section. Details
are provided in Appendix~\ref{app:ci_point_estimates}.

We close this subsection with a result concerning the power of the test that results from using the p-value in \eqref{eq:pselchisq}.

\begin{proposition}\label{prop:test_consistency}
Consider a fixed sequence of design matrices \(X_{n} \in \mathbb{R}^{n \times p}\) and corresponding realizations of the linear model \(Y_{n} = X_{n} \beta + \epsilon_{n}\) from \eqref{eq:model}. Suppose that the design matrices satisfy \(\frac{1}{n} X_{n}^{\top} X_{n} \to Q\) for some positive definitive Q. Then the test of $\hoM:\beta_M=0$  that uses the selective p-value in \eqref{eq:pselchisq} is consistent under every alternative, conditional on $Y_{n} \in E_1$. That is, for any \(\beta\) satisfying \(\beta_{M} \neq 0\) and arbitrary \(\alpha' \in (0, 1)\),
$\lim_{n\to\infty}\;\Pr_{\beta}\left(\psel \left( Y_{n} \right) \leq \alpha'|Y_{n} \in E_1\right)=  1.$
\end{proposition}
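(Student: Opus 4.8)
The plan is first to prove the \emph{unconditional} statement $\psel(Y_n)\xrightarrow{p}0$ under $\Pr_\beta$, and then to upgrade it to the conditional one. The upgrade is immediate once we know $\Pr_\beta(Y_n\in E_1)\to 1$: then $\Pr_\beta\!\big(\psel(Y_n)>\alpha'\mid Y_n\in E_1\big)\le \Pr_\beta(\psel(Y_n)>\alpha')/\Pr_\beta(Y_n\in E_1)\to 0$. To see that $\Pr_\beta(Y_n\in E_1)\to 1$, note $(I_{n-1}-P_X)Y_n=(I_{n-1}-P_X)\epsilon_n$ and $P_XY_n=X_n\beta+P_X\epsilon_n$, so $\tfrac1n Y_n^\top(I_{n-1}-P_X)Y_n\xrightarrow{p}\sigma^2$ while $\tfrac1n Y_n^\top P_XY_n\xrightarrow{p}\beta^\top Q\beta>0$ (after dividing by $n$, the cross term is $O_P(n^{-1/2})$ and the pure-noise term is $O_P(n^{-1})$); hence $\Fov\to\infty$ in probability, whereas the threshold $F^{-1}_{p,n-p-1}(1-\alphaov)$ stays bounded, giving $\Pr_\beta(Y_n\in E_1)\to1$.

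For the unconditional claim I would invoke Proposition~\ref{prop:chisq}. Fixing a realization and writing $R_n:=r(Y_n)$, $D_n:=d(Y_n)/\sigma^2$, and $c_n$ for the constant $c$ of \eqref{eq:cdr} (which depends on $n$ through the degrees of freedom), with $W\sim\chi^2_m$ and $Z\sim\chi^2_{n-p-1}$ independent we have $\psel(Y_n)=\Pr\!\big(W/Z\ge R_n\mid W-c_nZ\ge D_n\big)$. Two observations drive the proof. First, $D_n\le0$ always, since $d(y)=-y^\top P_{X_{-M}}y\le0$. Second, $R_n\xrightarrow{p}\rho^\star:=\beta_M^\top Q_{M\mid-M}\,\beta_M/\sigma^2>0$, where $Q_{M\mid-M}$ is the Schur complement of the $(-M)$ block of $Q$: the denominator of $r$ again equals $\epsilon_n^\top(I_{n-1}-P_X)\epsilon_n$, so $\tfrac1n$ of it tends to $\sigma^2$; and since $(P_X-P_{X_{-M}})X_n\beta=(P_X-P_{X_{-M}})X_M\beta_M$, the numerator is $\|(P_X-P_{X_{-M}})X_M\beta_M\|^2$ plus a cross term of order $\sqrt n$ and a $\chi^2_m$ noise term of order $1$, while $\tfrac1n\|(P_X-P_{X_{-M}})X_M\beta_M\|^2\to\beta_M^\top Q_{M\mid-M}\beta_M$, which is strictly positive because $\beta_M\neq0$ and $Q\succ0$ forces $Q_{M\mid-M}\succ0$.

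It therefore suffices to show $a_n:=\sup_{\delta\le0}\Pr\!\big(W/Z\ge\tfrac{\rho^\star}{2}\mid W-c_nZ\ge\delta\big)\to0$; indeed, on $\{R_n\ge\rho^\star/2\}$, whose probability tends to $1$, we get $\psel(Y_n)\le a_n$ because $\rho\mapsto\Pr(W/Z\ge\rho\mid W-c_nZ\ge\delta)$ is non-increasing. I would bound numerator and denominator of $a_n$ separately. For the denominator, $\Pr(W-c_nZ\ge\delta)\ge\Pr(W-c_nZ\ge0)$ for $\delta\le0$; since $F^{-1}_{p,n-p-1}(1-\alphaov)$ converges to the $(1-\alphaov)$-quantile of $\chi^2_p/p$ and is in particular bounded, $c_n=O(n^{-1})$, hence $c_nZ=O_P(1)$, and picking $T$ with $\Pr(c_nZ\le T)\ge\tfrac12$ for all large $n$ gives $\Pr(W-c_nZ\ge0)\ge\Pr(W\ge T)\Pr(c_nZ\le T)\ge\tfrac12\Pr(\chi^2_m\ge T)>0$, a bound uniform in $\delta$ and (large) $n$. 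For the numerator, $\Pr\!\big(W/Z\ge\tfrac{\rho^\star}{2},\,W-c_nZ\ge\delta\big)\le\Pr\!\big(W\ge\tfrac{\rho^\star}{2}Z\big)\le\Pr\!\big(Z<\tfrac{n-p-1}{2}\big)+\Pr\!\big(\chi^2_m\ge\tfrac{\rho^\star}{4}(n-p-1)\big)\to0$ by concentration of $Z\sim\chi^2_{n-p-1}$ about its mean and because $m$ is fixed. Dividing, $a_n\to0$, completing the argument.

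The main obstacle is the uniform-in-$\delta$ control in the last paragraph: for a generic selection event one cannot a priori rule out that $\{W-c_nZ\ge D_n\}$ is nearly null, which would leave $\psel(Y_n)$ uncontrolled. This is exactly where the sign constraint $D_n=d(Y_n)/\sigma^2\le0$ is used — it forces the conditioning event to contain $\{W-c_nZ\ge0\}$, whose probability is bounded away from zero since $\alphaov$ is fixed and the relevant $F$-quantile is bounded, so $c_nZ$ is stochastically bounded. Everything else is routine: laws of large numbers for quadratic forms in Gaussian vectors together with standard $\chi^2$ tail bounds.
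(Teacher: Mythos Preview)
Your argument is correct and complete. The overall architecture matches the paper's---prove unconditional convergence $\psel(Y_n)\xrightarrow{p}0$, then upgrade to the conditional statement---but two ingredients differ in interesting ways.

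For the upgrade step, you use $\Pr_\beta(Y_n\in E_1)\to 1$ (valid here since $\beta_M\neq 0\Rightarrow\beta\neq 0\Rightarrow\beta^\top Q\beta>0$), whereas the paper uses the simpler observation $\Pr(Y_n\in E_1)\ge\alpha_0$ for \emph{every} $\beta$ (the noncentral $F$ is stochastically larger than the central one). The paper's bound is cruder but reusable: the same lemma is invoked later to prove consistency of $\tilde\sigma^2$ and asymptotic uniformity of $\pseltilde$, where $\beta_M\neq 0$ is not assumed.

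For the denominator control, your route is more direct. The paper proves a separate lemma characterizing the asymptotic behavior of $d(Y_n)$ via two cases (under $\beta\neq 0$, $d(Y_n)/n\to\phi^*<0$; under $\beta=0$, $d(Y_n)$ is tight), and then argues the conditioning event has probability bounded away from zero. You instead exploit the deterministic sign constraint $d(Y_n)\le 0$ to get a bound on $\Pr(W-c_nZ\ge\delta)$ that is \emph{uniform} over all $\delta\le 0$, reducing the problem to the single event $\{W-c_nZ\ge 0\}$. This sidesteps the case analysis entirely and is the cleaner argument for this proposition; the trade-off is that the paper's lemma is again reused downstream (for $\pseltilde$), where the sign trick alone is not quite enough.
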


\begin{remark}\label{remark:assumptions}
    In \cref{prop:test_consistency}, the following quantities are fixed with respect to $n$: $p\geq 1$, $\beta\in \R^p$, $\sigma^2>0$, $M\subset \{1,\ldots,p\}$. In contrast, $X_n$ and $Y_n$ grow in length with $n$, and the p-value in \eqref{eq:pselchisq} depends on $n$ through $Z\sim \chi^2_{n-p-1}$, as well as $c$, $d$, and $r$ in \eqref{eq:acdr}. 
    In the remainder of the paper, for conciseness, we only include the subscript $n$ when we wish to emphasize dependence on $n$. 
    We may also suppress the dependence of some quantities on $y$ (e.g., $\psel(y)$ in \eqref{eq:pselchisq} may be written \(\psel\)). %
\end{remark}

\subsection{F-Screening and (Conditional) Fisher Information}
\label{sec:fisher-info}

We now consider the test of $H_0^M$  in Step 2 of Box~\ref{box:box1} through the lens of Fisher information in a simplified setting in which $\sigma^2$ is known and $\beta_0=0$. Hence, we consider the model \eqref{eq:model} without the intercept, and the F-tests in Steps 1 and 2 of Box~\ref{box:box1} are replaced with  $\chi^2$-tests. The hypothesis of interest in Step 2 is $H_0^1:\beta_1=0$. We  compare our proposed approach (specialized to this setting) to a sample splitting procedure.

In the sample splitting procedure, we partition the data into $(\Xtrain, \Ytrain)$ and $(\Xtest, \Ytest)$, where a proportion $\rho \in (0,1)$ of the observations are assigned to the training set and the remaining observations are assigned to the test set such that $\rho n$ and $(1-\rho)n$ are integers. We test $\hov$ with $(\Xtrain, \Ytrain)$, and then $(\Xtest, \Ytest)$ is used to test $\hoone$ if and only if $\hov $ is rejected.

In the spirit of \citet[Section 2.5]{fithian2017OptimalInferenceModel}, we derive expressions for:
(i) the leftover Fisher information in $\Ytest$ about $\beta_1$ conditional on the event that $H_0^{1:p}$ is rejected with $(\Xtrain,\Ytrain)$; and
(ii) the leftover Fisher information in $Y$ about $\beta_1$ conditional on the event that both the overall null $H_0^{1:p}$ is rejected using the full data $(X,Y)$ and $P_{X_{-1}}Y = P_{X_{-1}}y$ for some realization $y\in \R^n$ (note that we do not additionally condition on the realization of $\|Y\|$ because $\sigma^2$ is known). %

\begin{definition}[adapted from \citealp{fithian2017OptimalInferenceModel}]\label{def:leftover_info}
    Let $\ell(\theta; Y\mid Y\in A)$ denote the conditional log-likelihood of $Y$ given $Y\in A$. The \textit{leftover Fisher information} about the parameter $\theta$ in data $Y$ conditional on $Y \in A$ is $\mathcal{I}_{Y|Y\in A}(\theta\mid Y\in A) := -\E\left[\nabla ^2 \ell (\theta; Y\mid Y\in A)\mid Y\in A\right].$ 
    \end{definition}

\begin{proposition}\label{prop:fisher-info}
\textcolor{white}{.}
\begin{enumerate}[i.]
\item Define $c_\chi = \sigma^2 (\chi^2_p)^{-1}(1-\alphaov)$ and the set 
\begin{equation}\Rrho(X)  := \{Y \in  \R^{(\rho  n) }: Y^\top P_X Y \geq c_\chi\}. \label{eq:Rrho}
\end{equation}

Treating $\beta_2,\ldots,\beta_p$ as nuisance parameters, the leftover Fisher information in $\Ytest$ about $\beta_1$ conditional on  $\Ytrain \in \Rrho(\Xtrain)$ equals the (unconditional) Fisher information about $\beta_1$ in $\Ytest$. This quantity is 
$\frac{1}{\sigma^2}\left((\Xtest)_1^\top (I - P_{(\Xtest)_{-1}})(\Xtest)_1\right)$.\\ 
\item For any $\tilde y\in \R^n,$
define 
\begin{equation}
R(\tilde y;X) := \{Y\in \R^n: Y^\top P_XY \geq c_\chi,\; P_{X_{-1}}Y = P_{X_{-1}}\tilde y\}.\label{eq:RyX}
\end{equation}
Treating $\beta_2,\ldots,\beta_p$ as nuisance parameters, the leftover Fisher information in $Y$ about $\beta_1$ conditional on $Y \in R(\tilde y; X)$ is given by %
\begin{align}
&\mathcal{I}_{Y\mid   Y \in R(\tilde y; X)}(\beta_1;Y \in R(\tilde y; X)) \nonumber \\
&=\begin{cases}
    \tfrac{X_1^\top (I-P_{X_{-1}})X_1 }{\sigma^2} & \text{if }\; \tilde y^\top P_{X_{-1}} \tilde y \geq  c_\chi\\
    \tfrac{X_1^\top (I-P_{X_{-1}})X_1 }{\sigma^2} \left[ 1 - \left(\tfrac{[\phi(a) -\phi(b)]^2}{[1-\Phi(b)+\Phi(a)]^2}+\tfrac{\phi(a)\cdot a - \phi(b) \cdot b}{1-\Phi(b)+\Phi(a)}\right)\right] & \text{if } \; \tilde y^\top P_{X_{-1}} \tilde y < c_\chi \label{eq:info_sel}
\end{cases}
\end{align} 
where
\begin{align}a=\tfrac{-\sqrt{c_\chi - \tilde y^\top P_{X_{-1}} \tilde y}-\beta_1\sqrt{X_1^\top (I_n-P_{X_{-1}})X_1}}{\sigma}, \quad b=\tfrac{\sqrt{c_\chi - \tilde y^\top P_{X_{-1}} \tilde y}-\beta_1\sqrt{X_1^\top (I_n-P_{X_{-1}})X_1}}{\sigma},%
\end{align}
and $\phi$ and $\Phi$ denote the density and distribution functions, respectively, of the standard normal distribution.
\end{enumerate}
\end{proposition}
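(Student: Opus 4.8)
The two parts call for different arguments, so I would treat them separately.

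For part (i), the essential point is that sample splitting renders the selection event irrelevant to $\Ytest$. Because $\Ytrain$ and $\Ytest$ are formed from disjoint sets of independent observations of \eqref{eq:model}, they are independent; the selection event $\{\Ytrain \in \Rrho(\Xtrain)\}$ is measurable with respect to $\Ytrain$ alone, so conditioning on it leaves the law of $\Ytest$ unchanged. Hence the conditional log-likelihood of $\Ytest$ given this event equals its unconditional log-likelihood, and the leftover information reduces to the ordinary Fisher information for $\beta_1$ in the Gaussian linear model $\Ytest \sim N(\Xtest\beta, \sigma^2 I)$ with $\beta_2,\dots,\beta_p$ as nuisance parameters. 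That is the standard partitioned-information computation: the full information matrix is $\sigma^{-2}(\Xtest)^\top \Xtest$, and the Schur complement of its nuisance block is $\sigma^{-2}(\Xtest)_1^\top (I - P_{(\Xtest)_{-1}}) (\Xtest)_1$.

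For part (ii), I would first reduce the problem to a scalar one via an orthogonal decomposition of $\R^n$ adapted to $(X_1, X_{-1})$. Assuming $X_1 \notin \mathrm{col}(X_{-1})$ (otherwise every displayed quantity is $0$), set $\nu := \sqrt{X_1^\top (I - P_{X_{-1}}) X_1}$ and $u := \nu^{-1}(I - P_{X_{-1}}) X_1$, so that $\|u\| = 1$, $u \perp \mathrm{col}(X_{-1})$, and $P_X = P_{X_{-1}} + uu^\top$. Decompose $Y$ into its projections onto $\mathrm{col}(X_{-1})$, $\mathrm{span}(u)$, and $\mathrm{col}(X)^\perp$; since $Y$ is spherically Gaussian these three are independent, with means $\beta_1 P_{X_{-1}} X_1 + X_{-1}\beta_{-1}$, $\beta_1 \nu u$ (using $u^\top X_{-1} = 0$ and $u^\top X_1 = \nu$), and $0$. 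Because $Y^\top P_X Y = Y^\top P_{X_{-1}} Y + (u^\top Y)^2$, conditioning on $R(\tilde y; X)$ fixes the first projection at $P_{X_{-1}}\tilde y$, leaves the third alone (it never depended on $\beta$), and restricts only the scalar $T := u^\top Y \sim N(\beta_1 \nu, \sigma^2)$ to $\{T^2 \geq k\}$ with $k := c_\chi - \tilde y^\top P_{X_{-1}}\tilde y$. The key consequence is that the conditional law of $Y$ given $Y \in R(\tilde y; X)$ depends on $\beta$ only through $\beta_1$---the nuisance coordinates have been conditioned away---so the leftover information matrix has a single nonzero entry, equal to the Fisher information about $\beta_1$ in $T$ given $\{T^2 \geq k\}$, and no further nuisance adjustment is needed.

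It then remains to evaluate that scalar information. When $k \leq 0$, i.e.\ $\tilde y^\top P_{X_{-1}}\tilde y \geq c_\chi$, the event $\{T^2 \geq k\}$ is vacuous and the information equals the unconditional value $\nu^2/\sigma^2 = X_1^\top (I - P_{X_{-1}}) X_1 / \sigma^2$, which is the first branch. When $k > 0$, the conditional law of $T$ given $\{T^2 \geq k\}$ is a one-parameter exponential family with natural parameter $\eta = \beta_1\nu/\sigma^2$ (truncation to a fixed set preserves the natural parameter and sufficient statistic), so its Fisher information in $\eta$ is $\Var(T \mid T^2 \geq k)$; the chain rule then yields $\mathcal{I}(\beta_1; Y \in R(\tilde y; X)) = (\nu^2/\sigma^2)\,\Var\!\big(S \mid S \in (-\infty, a] \cup [b, \infty)\big)$, where $S := (T - \beta_1\nu)/\sigma \sim N(0,1)$ and $a < b$ are exactly the standardized endpoints in the statement. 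The final step is the elementary moment calculation for the doubly truncated standard normal: writing $p := 1 - \Phi(b) + \Phi(a)$ and using the primitives $\int s\phi(s)\,ds = -\phi(s)$ and $\int s^2\phi(s)\,ds = \Phi(s) - s\phi(s)$, one obtains $\E[S \mid S \in (-\infty,a]\cup[b,\infty)] = (\phi(b) - \phi(a))/p$ and $\E[S^2 \mid S \in (-\infty,a]\cup[b,\infty)] = 1 - (a\phi(a) - b\phi(b))/p$, and forming the variance reproduces exactly the bracketed factor in \eqref{eq:info_sel}. I expect the only real bookkeeping to be this last computation (and tracking signs through the standardization); the conceptual crux is the orthogonal decomposition, which simultaneously collapses the problem to one dimension and eliminates the nuisance parameters.
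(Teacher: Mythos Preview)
Your proposal is correct and, in its structure, matches the paper's proof closely: both parts (i) and (ii) proceed exactly as you outline, with the same independence argument for part (i) and the same orthogonal decomposition into $\mathrm{col}(X_{-1})$, $\mathrm{span}(u)$, and $\mathrm{col}(X)^\perp$ for part (ii), reducing the problem to the scalar $T = u^\top Y \sim N(\beta_1\nu,\sigma^2)$ truncated to $\{T^2 \geq k\}$. The one place you diverge is in the final scalar computation. The paper differentiates the truncated-normal log-likelihood directly (``after some calculus'') to obtain the second derivative. You instead observe that truncation to a fixed set preserves the exponential-family structure, so the Fisher information in the natural parameter equals the conditional variance of $T$, and then evaluate the first two moments of the doubly-truncated standard normal via the primitives $\int s\phi = -\phi$ and $\int s^2\phi = \Phi - s\phi$. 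Both routes are sound and yield the same expression; your exponential-family argument is slightly more conceptual and avoids differentiating the normalizing constant by hand, while the paper's direct calculation is more self-contained for a reader who may not recall the information--variance identity.
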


For insight into the two cases in \eqref{eq:info_sel}, note that if $\tilde y^\top P_{X_{-1}} \tilde y \geq c_\chi$, then the set $R(\tilde y; X)$ in \eqref{eq:RyX} can more succinctly be characterized as $R(\tilde{y}; X) := \left\{Y \in \mathbb{R}^n : P_{X_{-1}} Y = P_{X_{-1}} \tilde y \right\}$, because $\tilde y^\top P_{X_{-1}} \tilde y \geq c_\chi$ and $P_{X_{-1}} Y = P_{X_{-1}} \tilde y$ imply that $Y^\top  P_{X_{-1}} Y \geq c_\chi$ and hence $Y^\top  P_X Y \geq c_\chi$.
Furthermore,
$\tfrac{\partial^2}{\partial \beta_1^2}\ell(\beta_1; Y \mid P_{X_{-1}} Y =P_{X_{-1}} \tilde{y})  =  X_1^\top (I-P_{X_{-1}})X_1 / \sigma^2 ;$
this is simply the (unconditional) Fisher information about $\beta_1$ when $\beta_2,\ldots,\beta_p$ are treated as nuisance parameters.
Therefore, when $\tilde y^\top P_{X_{-1}} \tilde y \geq c_\chi$, conditioning on the set $R(\tilde y; X)$  does not affect the Fisher information about $\beta_1$.
On the other hand,
when $\tilde y^\top P_{X_{-1}}\tilde y < c_\chi$, this more succinct characterization of $R(\tilde y; X)$ is no longer possible, and
$\tfrac{\partial^2}{\partial \beta_1^2}\ell(\beta_1; Y \mid R(\tilde y;X)) \neq  X_1^\top (I-P_{X_{-1}})X_1 / \sigma^2.$ Thus, when $\tilde y^\top P_{X_{-1}}\tilde y < c_\chi$, conditioning on $Y\in R(\tilde y; X)$ indeed affects the Fisher information about $\beta_1$.

Recall that we test $\hoone$ if and only $\hov$ is rejected with a $\chi^2$-test, so the information about $\beta_1$ is only relevant when $Y^\top P_XY \geq c_\chi$. Therefore, the quantity of interest is actually the leftover Fisher information available for inference about $\beta_1$ \emph{times an indicator for the rejection of $\hov$}: that is, the random variable $\mathcal{I}_{Y|Y\in R(\tilde Y ;X)}(\beta_1;Y\in R(\tilde Y ;X))\cdot \bm{I}_{R_1(X)}(\tilde Y)$,
where $\tilde Y\overset{d}{=} Y$, $R_1(X)$ arises from setting $\rho=1$ in \eqref{eq:Rrho}, and $\bm{I}_{R_1(X)}(\tilde Y)$ is the indicator for $\tilde Y \in R_1(X)$. 
The expectation of this quantity (with respect to $\tilde{Y})$ can be expressed as
\begin{align}\label{eq:info_exp_sel}
    \E\Big[\mathcal{I}_{Y|Y\in R(\tilde Y ;X)}&(\beta_1;Y\in R(\tilde Y ;X))\cdot \bm{I}_{R_1(X)}(\tilde Y)\Big]\nonumber  \\
    &=\E\left[\mathcal{I}_{Y|Y\in R(\tilde Y ;X)}(\beta_1;Y\in R(\tilde Y ;X))\mid \tilde Y\in R_1(X)\right]\cdot \Pr(\tilde Y\in R_1(X)).
\end{align}
The right-hand side of \eqref{eq:info_exp_sel} is the product of two terms: the expected leftover Fisher information conditional on rejecting $\hov$, and  the probability of rejecting $\hov$.

To approximate the conditional expectation in \eqref{eq:info_exp_sel}, in the left-hand panel of \cref{fig:info} we generate 1000 draws of $\tilde Y$ according to \eqref{eq:model} (with $\beta_0=0$) and display the average of \eqref{eq:info_sel} \textit{over all realizations of $\tilde Y$ such that $\tilde Y^\top P_X \tilde Y\geq c_\chi$}.
 The second term on the right-hand side of \eqref{eq:info_exp_sel} can be computed explicitly by noting that, for $R_\rho(X)$ defined in \eqref{eq:Rrho} and $\lambda = \frac{1}{2\sigma^2}\beta^\top X^\top X\beta$, 
$\Pr(Y\in \Rrho(X))  = 1  - F_{p,n\cdot\rho - p - 1}(\lambda; F^{-1}_{p,n\cdot\rho-p-1}(0;1-\alphaov)),$ where $F_{df_1,df_2}(\lambda; \cdot)$ denotes the cumulative distribution function of a noncentral $F$ random variable with $df_1$ and $df_2$ degrees of freedom and noncentrality parameter $\lambda$. This quantity is displayed in the middle panel of \cref{fig:info}. Finally, in the right-hand panel of \cref{fig:info} we display the product of the two terms in the left-hand and center panels, i.e., the right-hand side of \eqref{eq:info_exp_sel}.

For the sample splitting procedure, we plot an analogous set of quantities. Recalling from Proposition~\ref{prop:fisher-info}(i) that the leftover Fisher information in $\Ytest$ about $\beta_1$ conditional on $\Ytrain \in R_\rho(\Xtrain)$ is $\frac{1}{\sigma^2}\left((\Xtest)_1^\top (I - P_{(\Xtest)_{-1}})(\Xtest)_1\right)$, it follows that 
\begin{align}
    \E&\left[\tfrac{(\Xtest)_1^\top (I - P_{(\Xtest)_{-1}})(\Xtest)_1}{\sigma^2} \cdot \bm{I}_{ R_\rho(\Xtrain)}(\Ytrain)\right] =\tfrac{(\Xtest)_1^\top (I - P_{(\Xtest)_{-1}})(\Xtest)_1}{\sigma^2} \cdot \Pr(\Ytrain \in \Rrho(\Xtrain)).\label{eq:info_exp_split}
     \end{align}
In the left-hand and center panels of \cref{fig:info}, we display the two terms on the right-hand side of  \eqref{eq:info_exp_split}, and in the right-hand panel we display their product.

In the left-hand panel of \cref{fig:info}, we see that the selective approach generally results in higher  expected leftover Fisher information about $\beta_1$, conditional on rejection of $\hov$, as compared to sample splitting. Furthermore, the center panel shows that the probability of rejecting $\hov$ is always larger using the selective approach than using sample splitting; this is due to the fact that the former uses all $n$ observations whereas the latter uses only $\rho n \leq n$.  Finally, the right-hand panel shows that the product of the left and middle panels (representing \eqref{eq:info_exp_sel} and \eqref{eq:info_exp_split}) is larger for the selective approach than for sample splitting, for all values of $\beta_1,\ldots,\beta_p$ considered.

\begin{remark}
We see from \eqref{eq:info_sel} that when $\beta_1 = 0$, conditioning on $R(\tilde{Y}; X)$ in \eqref{eq:RyX} can increase---and cannot decrease---the leftover Fisher information about $\beta_1$ relative to the unconditional Fisher information. This is reflected in the bottom-left panel of Figure~\ref{fig:info}. However, for larger values of $|\beta_1|$, conditioning typically does not increase the leftover Fisher information. 
\end{remark}

\begin{figure}
\centering
\begin{tabular}{@{}m{.5cm} c}   %
  \raisebox{13em}{%
    \begin{tabular}{c}
      \scriptsize $\beta_1 = 0.5$ \\[8.8em]
      \scriptsize $\beta_1 = 0.25$ \\[8.8em]
      \scriptsize $\beta_1 = 0$
    \end{tabular}
  }
  &
  \begin{tabular}{c}
    \scriptsize
    \begin{tabular}{ccc}
    \hspace{2em}\textbf{$\E\big[$Leftover Fisher $\mid \text{rej }\bm{\hov}\big]$} & \hspace{4em}\textbf{Pr(rej $\bm{\hov}$)} & \hspace{4em}\makecell{\textbf{$\E\big[$Leftover Fisher $\mid \text{rej }\bm{\hov}\big]\bm{\cdot} $}\\\textbf{Pr(rej $\bm{\hov}$)}}
    \end{tabular} \\
    \includegraphics[width=0.85\linewidth]{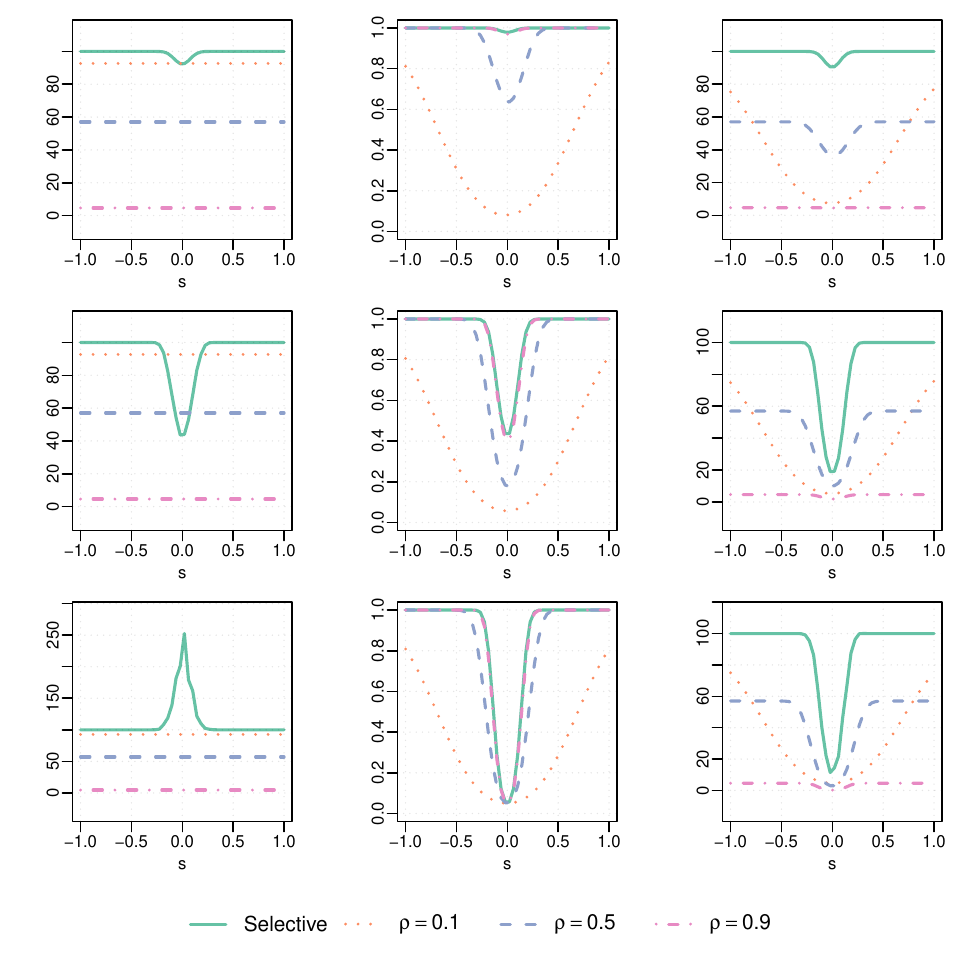}
  \end{tabular}
\end{tabular}
\caption{We construct an $n \times p$ orthogonal design matrix $X$ with $n=100$ and $p=5$. Then, for  $\beta_1\in \{0.5,0.25,0\}$  and for a range of values of $s\in [-1,1]$, we generate 1000 response vectors $\tilde Y$ according to \eqref{eq:model} with  $\sigma = 1$,  $\beta_{j} = s$ for all $j\in \{2, \ldots, p \}$, and $\beta_0=0$. 
We then compute quantities related to the leftover Fisher information about $\beta_1$ for our selective approach, and for sample splitting with split proportions  $\rho\in\{0.1,0.5,0.9\}$. 
\textit{Left:} For each of the 1000 realizations $\tilde y$ of $\tilde Y$, we compute the leftover Fisher information $\mathcal{I}_{Y\mid   Y \in R(\tilde y; X)}(\beta_1;Y \in R(\tilde y; X))$ for our conditional selective inference procedure and display the average \textit{over all datasets such that $\tilde Y\in R_1(X)$}. This approximates $\E\left[\mathcal{I}_{Y\mid   Y \in R(\tilde Y; X)}(\beta_1;Y \in R(\tilde Y; X))\mid \tilde Y\in R_1(X)\right]$. Further, for $\rho\in \{0.1,0.5,0.9\}$, we display $(\Xtest)_1^\top (I - P_{(\Xtest)_{-1}})(\Xtest)_1/\sigma^2$. \textit{Center:} We display $\Pr(Y \in R_1(X))$, along with  $\Pr(\Ytrain_\rho\in \Rrho(\Xtrain))$ for $\rho \in \{0.1,0.5,0.9\}$. \textit{Right:} We display the product of the left-hand and center columns, i.e., the right-hand sides of \eqref{eq:info_exp_sel} and \eqref{eq:info_exp_split}. \label{fig:info}}
\end{figure}

\section{Selective p-values in a retrospective analysis}\label{sec:retrospective}

In the
applied literature, examples abound of analyses where Step 2 would not have been carried out if Step 1 had not rejected\footnote{Here are some examples of published papers that clearly state that Step 2 was carried out if and only if $\hov$ was rejected in Step 1: (i) ``If the groups
  tested are different, we report the p-values of post hoc tests but omit the p-value of the omnibus test. If the groups did not test as different, only the P
  value of the omnibus test is reported'' \citep[page
  13]{kula2024DbhydroxybutyrateStabilizesHippocampala}; (ii) Table 2 of
  \citet{wang2024MarginStabilityAffected}, where NA's denote hypotheses that
  were left untested because the overall test was not rejected; (iii) ``Post hoc
  pairwise comparisons were only performed if the omnibus test was significant''
  \citep[caption of Figure 5]{tams2020MethadoneClinicalPathway}.}. If the
original data for such an analysis were available to the reader, then they could reanalyze the data using the methods in
Section~\ref{sec:method}. Unfortunately, the raw data are often not available.
In this section, we demostrate that one can use the usual
software output of a least squares linear regression (e.g., the output
of \verb=summary(lm(y~x))= in \verb=R=) to ``correct'' Step 2 of
Box~\ref{box:box1} when a standard test of $\hoM$ was
used. We refer to this as a \emph{retrospective} analysis, as it can be
applied retrospectively to the results of a published paper, without access to
the raw data.

Standard software packages for fitting linear models
output the proportion of variance explained ($\RR^2$) and the residual standard error ($\RSE$), defined as
\begin{align} \label{eq:RSE_R2}
\RR^2 & = 1 - \frac{ y^\top (I_{n-1}-P_X)y }{ y^\top y},  \;\;\; 
\RSE  = \sqrt{\frac{y^\top (I_{n-1}- P_X) y}{n-p-1}},  
\end{align}
as well as $\FhoM$ \eqref{eq:fstat}. These three quantities suffice to compute $\psel$ \eqref{eq:pselchisq}.

\begin{proposition} \label{prop:retropsel}
Let $\FhoM$ be defined as in \eqref{eq:fstat}, 
$c$ defined as in \eqref{eq:acdr},
and $\RR^2$ and $\RSE$ defined as in \eqref{eq:RSE_R2}.
Define $Z \sim  \chi_{n-p-1}^2$ and $W \sim  \chi_{m}^2$, where $Z \perp W$, and 
$$\dretro :=  \RSE^2 \cdot \left( \tfrac{\RR^2\cdot (n-p-1)}{1-\RR^2}-\FhoM \cdot m \right), \quad \aretro := \tfrac{\RSE^2\cdot (n-p-1)}{1-\RR^2}.$$
The p-value in \eqref{eq:pselchisq} can be written as
\begin{equation}\label{eq:pselectiveretro}
        \psel(y) = 
        \Pr \left (  \frac{W}{Z} \geq \FhoM \cdot \frac{n-p-1}{m}\,\,\middle|\frac{\aretro\cdot W + \dretro\cdot  Z}{(\aretro-\dretro)\cdot Z}\geq c\right).
    \end{equation}
\end{proposition}

In contrast, the selective confidence interval and point estimates described in
Appendix~\ref{app:ci_point_estimates} involve quantities that are not generally
reported in software output, and thus cannot be easily obtained retrospectively.

\section{The special case of \gls{anova}} \label{sec:anova}

F-screening is particularly pervasive in the context of \gls{anova}, where it dates back almost a century \citep{fisher1935DesignExperiments,lindquist1940StatisticalAnalysisEducational}. As a simple example, suppose that we are interested in how plant height, $Y$, varies between treatments by fertilizers
$A$, $B$, and $C$. We fit the model $Y = \beta_A \cdot \bm{I}_A + \beta_B \cdot \bm{I}_B + \beta_C \cdot \bm{I}_C + \epsilon,$
where $\bm{I}_A$, $\bm{I}_B$, and $\bm{I}_C$ are $n$-dimensional indicator variables for
treatment with fertilizers $A$, $B$, and $C$. In the language of \gls{anova}, $\hov:
\beta_A= \beta_B= \beta_C$ is often referred to as the ``omnibus test.'' If
$\hov$ is rejected, then one typically conducts post hoc tests
\citep{everitt2010CambridgeDictionaryStatistics} to see how the expected plant
height differs across fertilizers; for instance, we may test $H_0: \beta_A =
\beta_B$. This post hoc test can be equivalently expressed as a test of $H_0:
\beta_{\BvA}=0$ in the reparameterized model
\begin{align}\label{eq:plant2}
  Y &= \beta_{A} \cdot (\bm{I}_{A}+\bm{I}_{B}) + (\beta_{B} -\beta_{A})\cdot \bm{I}_{B} + \beta_{C}\cdot 
  \bm{I}_{C} + \epsilon \nonumber\\
  &=\beta_{A} \cdot (\bm{I}_{A}+\bm{I}_{B}) + \beta_{\BvA} \cdot \bm{I}_{B} + \beta_{C} \cdot \bm{I}_{C} + \epsilon.
\end{align} 
However, as we have seen in \cref{sec:method}, a post hoc test of $H_0: \beta_{\BvA}=0$ that does not account for the fact that it is tested if and only if $\hov$ was rejected will fail to
control the Type 1 error.

The machinery developed in Sections~\ref{sec:method}--\ref{sec:retrospective} for valid inference in Step 2 of F-screening applies directly in the setting of \gls{anova}. In fact, retrospective analysis, as detailed in Section~\ref{sec:retrospective}, is even more straightforward because in the context of \gls{anova}, it is not necessary to have access to $\FhoM$,
$\RR^2$, and $\RSE$ to conduct a retrospective analysis: instead, it suffices to know the number of samples within each group, and the mean and standard deviation of the outcome variable within each group. This is made precise in the following remark.

\begin{remark}\label{prop:retro}
Consider a one-way \gls{anova} with $K$ groups, parameterized analogously to \eqref{eq:plant2}. Suppose we have access to only the following quantities, for $k = 1,\ldots, K$:
\begin{enumerate}
    \item $n_k,$ the number of observations in the $k^{\text{th}}$ group,
    \item $\hat{\mu}_k$, the sample mean of $Y$ among observations in the $k^{\text{th}}$ group, and 
    \item $\hat{\sigma}_k = \sqrt{\frac{1}{n_k-1}\sum_{i\in \text{ group }k}(y_i-\hat\mu_k)^2}$, the sample standard deviation of $Y$ among observations in the $k^{\text{th}}$ group.
\end{enumerate} 
Then $\RR^2$ \eqref{eq:RSE_R2}, $\RSE$ \eqref{eq:RSE_R2}, $F_{H_0^{\kvk}}$ \eqref{eq:fstat}, and $F_{H_0^{1:K}}$ \eqref{eq:Foverall} take the form
\begin{align*}
&\RR^2 = \tfrac{\text{SSB}}{\text{SSB}+\text{SSW}} , \quad
\RSE = \sqrt{\tfrac{\text{SSW}}{n-K}},\quad F_{H_0^{\kvk}} = \tfrac{(n-K)(\hat{\mu}_{k_1} - \hat{\mu}_{k_2})^2}{\text{SSW} \left( \frac{1}{n_{k_1}} + \frac{1}{n_{k_2}} \right)}, \quad F_{H_0^{1:K}} = \tfrac{\text{SSB}/(K-1)}{\text{SSW}/(n-K)},
\end{align*}
where we define $n = \sum_{k=1}^K n_k$, $\bar{y} = \frac{1}{n} \sum_{k=1}^K \hat{\mu}_k n_k$, $\text{SSB} = \sum_{k=1}^{K} n_k (\hat{\mu}_k - \bar{y})^2$, and $\text{SSW} = \sum_{k=1}^{K} (n_k - 1) \hat{\sigma}_k^2$. Thus $p_{H_0^{1:K}} = 1 - F_{K-1,n-K}(F_{H_0^{1:K}})$, and $\psel$ can be computed via \eqref{eq:pselectiveretro}.

\end{remark}

\section{Simulation study}
\label{sec:simulation}

\subsection{Type 1 error}\label{subsec:t1control}

We generate data according to \eqref{eq:model} with $n=100$, $p=10$, $\sigma^2
=1$, $X_{ij} \iidsim \mathcal{N}(0,1)$, and
$\beta_1=\ldots=\beta_{10}=0$, so that $\hov$ in
\eqref{eq:hov} holds. We consider testing $\hoone: \beta_1=0$ if and only if $\hov$ is
rejected at level $\alphaov = 0.05$ using $\Fov$ \eqref{eq:Foverall}. For the subset of \num{100000} datasets for which $\hov$
is rejected, the left-hand
panel of \cref{fig:triple_plot} displays (i) $\phoM$ from \eqref{eq:pnaive}, which
does not account for rejection of $\hov$, and (ii) $\psel$ from \eqref{eq:pselchisq},
which accounts for rejection of $\hov$. We see that $\psel$ controls selective Type 1 error. Because $\phoM$ does not control the selective Type 1 error, we do not consider it in Section~\ref{subsec:power}.

\begin{figure}[h]
    \centering
    \includegraphics[width=\linewidth]{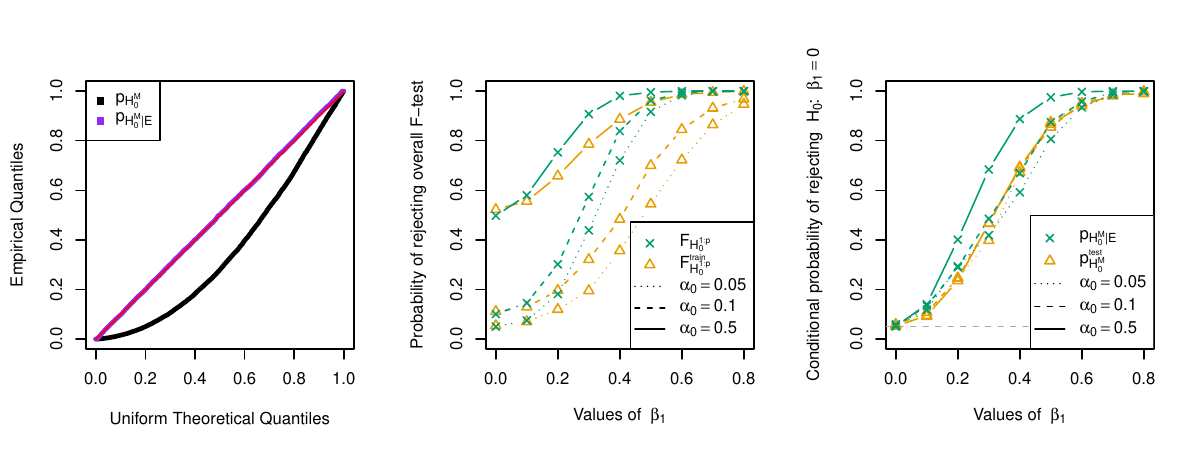}
    \caption{ \emph{Left:} For the simulation study detailed in
      Section~\ref{subsec:t1control}, we display a $\operatorname{Unif}(0,1)$
      quantile-quantile plot of p-values for the test of $\hoone:\beta_1 = 0$,
      conditional on the event that $\hov$ \eqref{eq:hov} is rejected. Here,
      $\hov$ (and hence $\hoone$) holds, and so a p-value for $\hoone:\beta_1=0$
      that controls the selective Type 1 error should follow a
      $\operatorname{Unif}(0,1)$ distribution. If the trace lies below the diagonal, the
      p-value is anti-conservative. For the subset of simulated datasets for
      which $\hov$ is rejected using $\Fov$, we display $\phoM$ and
      \eqref{eq:pnaive} $\psel$ \eqref{eq:pselchisq}.
      \emph{Middle:} For the simulation setting described in
      \cref{subsec:power}, we plot the power to reject $\hov$ at three different
      significance levels (\(\alphaov \in \left\{ 0.05, 0.1, 0.5 \right\}\)) with
      (i) the $F$-statistic based on all of the data, $F_{\hov}$
      \eqref{eq:Foverall}, and (ii) the $F$-statistic based on only the training
      set from sample splitting, $\Fovtrain$. \emph{Right:} For the simulation
      setting described in \cref{subsec:power}, we plot the conditional
      probability of rejecting $\hoone$ at significance level 0.05 (marked with a
      gray dashed line) given that $\hov$ was rejected at three levels of
      significance ($\alphaov\in \{0.05,0.1, 0.5\})$. When we reject $\hov$ with $\Fov$, we compute the
      p-value $\psel$ \eqref{eq:pselchisq}. When we reject $\hov$ with $\Fovtrain,$ we compute
      $\ptest$.
       \label{fig:triple_plot}}
\end{figure}

\subsection{Power}\label{subsec:power}

We compare the power of our proposed approach for F-screening to an approach that
relies on sample splitting \citep{cox1975NoteDataSplittingEvaluation}.

First, we formalize the sample splitting procedure. We randomly partition $n$
independent realizations of \eqref{eq:model} into two equally-sized training and test sets. We use the training set to test $\hov$ using an F-statistic (which we refer to as $\Fovtrain$), and, if and only if we reject \(\hov\), we use the test set to test
$\hoM$ using an F-statistic, which yields the p-value $\ptest$. Since the training and test sets are independent, $\ptest$ is
uniformly distributed under $\hoM$, regardless of the value of $\Fovtrain$.

Now we return to the simulation set-up. We generate data according to \eqref{eq:model} with $n=100$, $p=10$, $\sigma^2
=1$, $X_{ij} \iidsim \mathcal{N}(0,1)$, $\beta_2=\ldots=\beta_{10}=0$, and
with $\beta_1$ ranging from $0$ to $0.8$.

First, we test $\hov$ \eqref{eq:hov} at \(\alphaov \in \left\{ 0.05, 0.1, 0.5
\right\}\) using either (i) all the data, or (ii) just the training data. The
middle panel of \cref{fig:triple_plot} displays the proportion of simulated
datasets for which $\hov$ is rejected. Unsurprisingly, we have
substantially more power to reject $\hov$ using all the data versus using only
the training data.

Next, for the subset of simulated datasets for which we rejected $\hov$ in the
middle panel of \cref{fig:triple_plot}, we let $M=\{1\}$ and test the hypothesis $\hoone:
\beta_1=0$. Simulated datasets for which $\hov$ was rejected using all the data
are tested using $\psel$ \eqref{eq:pselchisq}, and those for which $\hov$ was rejected using the
training data are tested using $\ptest$. The right-hand panel of
\cref{fig:triple_plot} displays the power to reject $\hoone: \beta_1=0$,
conditional on having rejected $\hov$. When $\alphaov =0.05$ or $\alphaov = 0.1$, the power curves are similar for $\psel$
\eqref{eq:pselchisq} and $\ptest$, while the power curve is substantially higher for $\psel$ when $\alphaov =0.5 $. The power curves for $\ptest$ for the three
values of $\alphaov$ are nearly identical, since the training data used to test
$\hov$ is independent of the test data used to test $\hoone$, and $\alphaov$
appears only in the test of $\hov$. By contrast, the selective p-value $\psel$ shifts substantially according to the value of $\alphaov$: when
$\alphaov$ is small, the conditional probability of rejecting $H_0^1:\beta_1=0$
is smaller, since relatively little signal for $\beta_1$ is available for inference after
conditioning.

It is worth noting that the benefits of our selective p-value over the
sample splitting p-value extend beyond the empirical results seen in
\cref{fig:triple_plot}. Depending on how sample splitting is performed, the
testable hypotheses may differ between train and test sets (e.g., if rank
deficiency emerges from random partitioning). Also, it may be hard in an
applied setting to convince a collaborator to use only part of their data to
test $\hov$ and to reserve the remainder for testing $\hoM$. By contrast, our selective p-values enable the collaborator to
use \emph{all} of the data to test $\hov $, thereby yielding a result for the
test of $\hov$ that coincides with the standard (uncorrected) F-screening
procedure. Furthermore, the sample splitting p-values cannot be computed
retrospectively using summary statistics, as in Section~\ref{sec:retrospective}.

\subsection{Confidence intervals}\label{subsec:CI}

We now investigate coverage and width of the confidence intervals derived in Appendix~\ref{app:ci_point_estimates}.

To investigate confidence interval coverage, we generate data according to
\eqref{eq:model} with $\beta_1=\ldots=\beta_p=0$, $n=100$,
$p=5$, and $\sigma^2=1$ until we obtain \num{1000} datasets $(X,Y)$ for which we
reject $\hov$ \eqref{eq:hov} at level $\alphaov = 0.05$.
Then, for a fixed
confidence level (defined as \((1 - \alpha)\) for $\alpha\in (0,1)$), we
compute, for each of the \num{1000} datasets: (i) the standard confidence
interval for $\beta_1$ that does not account for selection, and (ii) the selective
confidence interval for $\beta_1$ \eqref{eq:CI}.

As shown in the left panel of \cref{fig:CI_globalnull}, the standard confidence
intervals for $\beta_1$ that do not account for rejection of $\hov$
\eqref{eq:hov} exhibit severe under-coverage for all confidence levels. In
contrast, the selective confidence interval
achieves nominal coverage across all confidence levels.

\begin{figure}[h]
    \centering
    \includegraphics[width = \textwidth]{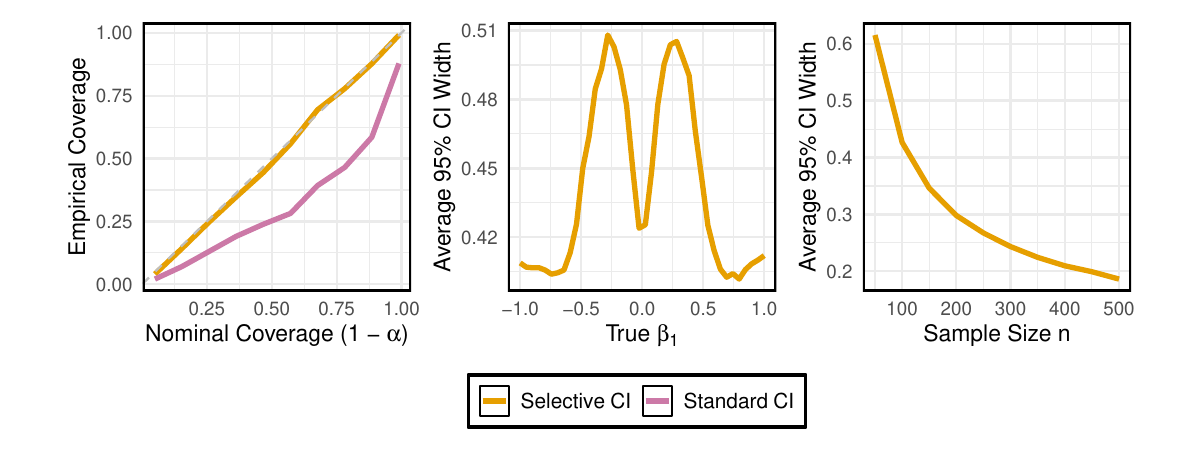}
    \caption{For the setting described in \cref{subsec:CI}, we construct
      confidence intervals for $\beta_1$ in \eqref{eq:model} according to
      Appendix~\ref{app:ci_point_estimates}, conditional on rejection of $\hov$ \eqref{eq:hov} with $\Fov$ \eqref{eq:Foverall}. \emph{Left:} We plot the coverage of the selective and standard confidence intervals for the datasets for which $\hov$ \eqref{eq:hov} is rejected, where the significance level $\alpha$ varies between 0 and 1, and $\beta_1 = \ldots = \beta_p=0$. Since the standard intervals do not achieve the nominal coverage, we do not display them in the middle and right panels.
    \emph{Middle:} We plot the mean width of the selective confidence intervals for the datasets for which $\hov$ \eqref{eq:hov} is rejected, where $\beta_1$ varies and $n=100$. \emph{Right:} We plot the mean width of the selective confidence intervals for which $\hov$ \eqref{eq:hov} is rejected for $\beta_1=0$ as $n$ varies.} 
    \label{fig:CI_globalnull}
\end{figure}

To examine confidence interval width, we first vary $\beta_1 \in \left[ -1, 1 \right]$ while holding $n = 100$ constant. We set $p = 5$, $\sigma^2 = 1$, and $\beta_2 = \dots = \beta_5 = 0$. For each fixed value of $\beta_1 \in \left[ -1, 1 \right]$, we generate data according to \eqref{eq:model} until we obtain 1000 datasets where $\hov$ is rejected. For each dataset, we compute the selective confidence interval. (We do not depict the standard confidence interval, as we showed in the left-hand panel of \cref{fig:CI_globalnull} that it is invalid.) Finally, we compute the average confidence interval width across the 1000 datasets and plot these values as a function of $\beta_1$ in the middle panel of \cref{fig:CI_globalnull}. In the right-hand panel, we repeat the analysis in the middle panel, but with $\beta_1 = 0$ and $n\in \{50,\ldots, 500\}.$

The curves associated with selective confidence intervals in the middle panel of
\cref{fig:CI_globalnull} exhibit a striking
shape: they are narrow both when \(\beta_1 \approx 0\) and when \(|\beta_1|\)
is large. This pattern can be understood by separately considering the cases
where (i) \(\left\lvert \beta_{1} \right\rvert\) is large, (ii) \(\left\lvert
  \beta_{1} \right\rvert\) is moderate, and (iii) \(\left\lvert \beta_{1}
\right\rvert\) is small. When \(\left\lvert \beta_{1} \right\rvert\) is large,
the event $E_1$ \eqref{eq:rejectFov} is highly likely, in which case conditional
inference roughly coincides with standard inference, and so the intervals are not appreciably wider than the standard (non-selective) intervals.
When \(\left\lvert \beta_{1} \right\rvert\) is moderate, the event $E_1$
\eqref{eq:rejectFov} is only moderately likely, and when it occurs, it often happens
that the data falls near the boundary of the selection region, and
little signal regarding \(\beta_{1}\) remains for inference, leading to wide
confidence intervals. Finally, when \(\left\lvert \beta_{1} \right\rvert\) is
small, the event $E_1$ \eqref{eq:rejectFov} is unlikely to occur; however, when
it does occur, selection is often driven by (chance) association between $X_2,\ldots,X_p$ and $Y$, in
which case sufficient signal about \(\beta_{1}\) remains for inference, leading
to moderately narrow confidence intervals.

\section{Data analysis} \label{sec:data}

In this section, we demonstrate a valid data analysis involving F-screening, both with and without access to the full data. First we re-analyze the data of \citet{xu2018DecreasingTrendBone}, and report selective p-values, confidence intervals, and point estimates for each coefficient. Then we conduct a retrospective analysis of \citet{keil2023LongitudinalSleepPatterns} and compute the selective p-values using only the standard \gls{anova} outputs because we do not have access to their data. In both analyses, we compare the selective quantities to their standard counterparts. 

\subsection{Prospective inference} 
\citet{xu2018DecreasingTrendBone} used the publicly-available \gls{nhanes} dataset from \citet{cdc2024about} to investigate possible changes in femoral neck \gls{bmd}\footnote{\gls{bmd} is coded as ``DXXNKBMD'' in the \gls{nhanes} dataset.} across the years 2005 to 2014 in US adults over the age of 30. In the \gls{nhanes} dataset, years are binned into 2005--2006, 2007--2008, 2009--2010, 2013--2014; no data from 2011--2012 is available for \gls{bmd}.  We restrict our analysis to the $n=7135$ males in the dataset, and consider the model 
\begin{equation}\label{eq:BMD}
    \text{BMD} =    \beta_{\text{AGE}}\cdot \text{AGE}  + \beta_{\text{5--6}} \cdot \bm{I}_{\text{5--6}} + \beta_{\text{7--8}} \cdot \bm{I}_{\text{7--8}}+\beta_{\text{9--10}} \cdot \bm{I}_{\text{9--10}}+\beta_{\text{13--14}} \cdot \bm{I}_{\text{13--14}} + \epsilon,
\end{equation} 
where $\bm{I}_{\text{5--6}}$, $\bm{I}_{\text{7--8}}$, $\bm{I}_{\text{9--10}}$, and $\bm{I}_{\text{13--14}}$ are indicator variables for the time periods 2005--2006, 2007--2008, 2009--2010, and 2013--2014, respectively. %
Note that \eqref{eq:BMD} contains age, as reported in the \gls{nhanes} dataset, as a covariate. For convenience, we regress out age: let $U \in \mathbb{R}^{7135 \times 7134}$ such that $UU^\top := I_{7135} - P_{\text{AGE}}$ and $U^\top U =I_{7135}$, and we consider the model 
\begin{equation}\label{eq:BMD-2}
  U^\top \text{BMD} = \beta_{\text{5--6}} \cdot U^\top  \bm{I}_{\text{5--6}} + \beta_{\text{7--8}} \cdot  U^\top  \bm{I}_{\text{7--8}}+\beta_{\text{9--10}} \cdot  U^\top  \bm{I}_{\text{9--10}}+\beta_{\text{13--14}} \cdot U^\top  \bm{I}_{\text{13--14}} + U^\top \epsilon.
\end{equation} 
In this setting, the overall null hypothesis is 
\begin{equation}\label{eq:BMD_omnibus}
\hov: \beta_{\text{5--6}} = \beta_{\text{7--8}} = \beta_{\text{9--10}} = \beta_{\text{13--14}} = 0.
\end{equation}

In line with \citet{xu2018DecreasingTrendBone}, we test whether the mean age-adjusted \gls{bmd} differs between time bins (05--06 versus 07--08, 05--06 versus 09--10, 05--06 versus 13--14, 07--08 versus 09--10, 07--08 versus 13--14, 09--10 versus 13--14). 
To compare mean age-adjusted \gls{bmd} in 2005--2006 to mean age-adjusted \gls{bmd} in 2007--2008, we reparametrize \eqref{eq:BMD-2} as 
\begin{align}\label{eq:BMD-3}
  U^\top \text{BMD}& =  \beta_{\text{5--6}} \cdot \left(  U^\top  \bm{I}_{\text{5--6}} + U^\top  \bm{I}_{\text{7--8}} \right) \nonumber\\
  & \hspace{5mm}+  \beta_{\text{7--8 vs. 5--6}} \cdot  U^\top  \bm{I}_{\text{7--8}}+\beta_{\text{9--10}}  \cdot U^\top  \bm{I}_{\text{9--10}}+\beta_{\text{13--14}} \cdot U^\top  \bm{I}_{\text{13--14}} + U^\top \epsilon,
\end{align} 
and test $H_0^{\text{ 7--8 vs. 5--6}}:\beta_{\text{7--8 vs. 5--6}}=0.$
The other five pairwise tests are conducted similarly. Results are shown in \cref{table:propsective}. In this case, accounting for the rejection of $\hov$ \eqref{eq:BMD_omnibus} has virtually no impact---most of the selective and standard quantities are identical up three decimals.

\begin{table}[ht]
\centering
\setlength{\tabcolsep}{4pt}
\begin{tabular}{|l|ccc|ccc|}
\hline
 & \multicolumn{3}{c|}{Standard} & \multicolumn{3}{c|}{Selective} \\
\hline
Years & Estimate & 95\% CI & p-value & Estimate & 95\% CI & p-value \\ 
\hline
05--06 vs. 07--08 & -0.004 & (-0.013,  0.005) & 0.423 & -0.004  & (-0.013, 0.005)  & 0.426 \\
\hline
05--06 vs. 09--10 & -0.003 & (-0.012,  0.006) & 0.473 & -0.003  & (-0.012, 0.006)  & 0.468 \\
\hline
05--06 vs. 13--14 & -0.020 & (-0.029, -0.010) & 0.000 & -0.020  & (-0.029, -0.010)  & 0.000 \\
\hline
07--08 vs. 09--10 &  0.000 & (-0.008, 0.009) & 0.923 & 0.000 & (-0.008, 0.009) & 0.926 \\
\hline
07--08 vs. 13--14 & -0.016 & (-0.025, -0.007) & 0.000 & -0.016  & (-0.025, -0.007) & 0.001 \\
\hline
09--10 vs. 13--14 & -0.017 & (-0.025, -0.008) & 0.000 & -0.017  & (-0.025, -0.008) & 0.000 \\
\hline
\end{tabular}
\caption{For \gls{nhanes} \gls{bmd} data, we perform F-screening for the model in \eqref{eq:BMD}, and compare selective point estimates, confidence intervals, and p-values with standard counterparts following the rejection of the overall $F$-test of \eqref{eq:BMD_omnibus}. The statistics in the ``Standard'' column do \emph{not} account for rejection of \eqref{eq:BMD_omnibus} (see Box~\ref{box:box1}), whereas statistics in the ``Selective'' column do account for it. \label{table:propsective}}
\end{table}

\subsection{Retrospective inference}\label{subsec:keil_retro}

\citet{keil2023LongitudinalSleepPatterns} explores the relationship between sleep and cognitive impairment in older adults. The authors considered four response variables: (i) \gls{cesd}, (ii) sleep duration, (iii) \gls{mmse}, and (iv) \gls{mdrs} (see references in \citet{keil2023LongitudinalSleepPatterns} for details). For each response variable corresponding to the rows in \cref{fig:table1}, \citet{keil2023LongitudinalSleepPatterns} fit a model of the form 
$Y = \beta_{1}\cdot  \bm{I}_{<65}  + \beta_{2} \cdot \bm{I}_{\text{65--85}}  + \beta_{3}\cdot  \bm{I}_{>85} + \epsilon, $
where $\bm{I}_{<65}$, $\bm{I}_{\text{65--85}}$, and $\bm{I}_{>85}$ denote indicator variables for the age group (in years) of $n=826$ study participants.
The second-to-last column of \cref{fig:table1} displays the p-values for a test of $\hov: \beta_1=\beta_2=\beta_3=0$, and the last column reports p-values for standard tests of $\hoj: \beta_j=0$ for $j=1,2,3$.

\citet{keil2023LongitudinalSleepPatterns} are not explicit about whether they would have tested $\hoj$, $j=1,2,3$ had they not rejected $\hov$, and it is not possible to ascertain their intentions from their results, since in their Table 1 (a subset of which is displayed in our \cref{fig:table1}), all hypotheses of the form $\hov$ tested with ANOVA are rejected. Notably, they refer to  tests of $\hoj$ as ``post hoc tests''; this phrasing is typically reserved in the applied literature to refer to tests that are conducted \emph{after} rejecting  $\hov$ \citep{everitt2010CambridgeDictionaryStatistics}. Thus, for the sake of demonstration we shall assume that they  would not have tested $\hoj$ for $j=1,2,3$ in their Table 1 if $\hov$ had not been rejected.

Accordingly, we apply the methods in Section~\ref{sec:retrospective} and \ref{sec:anova} to conduct a retrospective analysis, using only the summary statistics reported in \cref{fig:table1}. For each of the four outcomes in \cref{fig:table1}, we compute $\psel$ in \eqref{eq:pselchisq} for $\hoj: \beta_j=0$, for $j=1,2,3$, alongside the standard p-value in \eqref{eq:pnaive} which does not account for the fact that we would not have tested $\hoj$ if $\hov$ had not be rejected. The standard p-values that we compute match those in \cref{fig:table1} if the {\v S}id{\'a}k correction is applied \citep{sidak1967RectangularConfidenceRegions}.

The results are displayed in Table~\ref{table:retro}. Generally, the values of $\psel$ \eqref{eq:pselchisq} are equal to or larger than the corresponding values of $\phoM$ \eqref{eq:pnaive}. For example, the p-value for the mean difference in \gls{cesd} between the under-65 age group and the 65-85 age group is 0.003 using $\phoM$ versus 0.118 using $\psel$, and the p-value for the mean difference in \gls{cesd} between the 65-85 age group and over-85 age group is 0.016 versus 0.100. At the significance level of 0.05 used by \citet{keil2023LongitudinalSleepPatterns}, the selective p-value $\psel$ does not lead to rejection for those two post hoc tests, whereas the standard p-value $\phoM$ does.

\begin{table}
\centering 
\begin{tabular}{!{\vrule width 1.5pt}c!{\vrule width 1.5pt}c|c!{\vrule width 1.5pt}c|c!{\vrule width 1.5pt}c|c!{\vrule width 1.5pt}c|c!{\vrule width 1.5pt}}
\hline
 & \multicolumn{2}{c!{\vrule width 1.5pt}}{\gls{cesd}} & \multicolumn{2}{c!{\vrule width 1.5pt}}{Sleep} & \multicolumn{2}{c!{\vrule width 1.5pt}}{\gls{mmse}} & \multicolumn{2}{c!{\vrule width 1.5pt}}{\gls{mdrs}} \\ 
\hline
  & $\phoM$ & $\psel$ & $\phoM$ & $\psel$ & $\phoM$ & $\psel$ & $\phoM$ & $\psel$ \\ 
\hline
$<65$ vs. 65-85 & 0.003 & 0.118 & 0.001 & 0.001 & 0.000	 & 0.000 & 0.003	 & 0.003 \\
\hline
$<65$ vs. $\geq 85$ & 0.397	 & 0.397 & 0.000	 & 0.003 & 0.000	 & 0.000 & 0.000	 & 0.000 \\
\hline
65-85 vs. $\geq85$ & 0.016 & 0.100 & 0.060	 & 0.061 & 0.000	 & 0.000 & 0.000		& 0.000 \\
\hline
\end{tabular}
\caption{For the retrospective analysis conducted on the linear regression outputs from \citet{keil2023LongitudinalSleepPatterns} (\cref{fig:table1}), we display the selective ($\psel$ in \eqref{eq:pselchisq}) and standard ($\phoM$ in \eqref{eq:pnaive}) p-values from four analyses, each involving a separate outcome variable (\gls{cesd}, Sleep, \gls{mmse}, \gls{mdrs}). For each of these outcome variables, we replicate the one-way \gls{anova} and post hoc tests conducted in \citet{keil2023LongitudinalSleepPatterns}, which are reported in the last two columns of the table in \cref{fig:table1}. For the post hoc tests, we report $\phoM$ \eqref{eq:pnaive} and $\psel$ \eqref{eq:pselchisq}. If we were to apply the {\v S}id{\'a}k correction \citep{sidak1967RectangularConfidenceRegions} to the values of $\phoM$ \eqref{eq:pnaive} in this table, they would match the p-values in the last column of \cref{fig:table1}. \label{table:retro}}
\end{table}

\section{Discussion}
\label{sec:discussion}
We have shown that a common practice in applied statistics---reporting the outputs of a linear regression model if and only if an overall $F$-test is rejected---leads to invalid inference. We have proposed an easily-implemented approach, based on conditional selective inference, that remedies the problem. We present a toolbox for valid inference in this setting that contains the most critical linear model outputs: p-values, confidence intervals, and point estimates. Furthermore, our selective p-values can be computed even without access to the raw data, using only the standard outputs of a multiple linear regression.

Some natural questions that we leave for future work are as follows:
\begin{itemize}
\item Our results relied on an assumption that the errors are Gaussian. Can this be relaxed using ideas from e.g.~\citet{tian2017AsymptoticsSelectiveInference,panigrahi2023CarvingModelfreeInference}? (We note, of course, that standard inference for a linear model also assumes Gaussianity in finite samples---thus, reliance on Gaussianity is only a very minor limitation of our method.)
\item Can these results be extended beyond least squares linear regression, e.g.~to generalized linear models, using ideas from \citet{tian2018SelectiveInferenceRandomized}?
\item How should our selective p-values be adjusted to account for multiple comparisons? For instance, how can $\psel$ for $M  \in \{M_1, \ldots, M_L\}$ be corrected in order to control family-wise selective Type 1 error rate? An extension of Scheff\'{e}'s correction \citep{scheffe1953MethodJudgingAll}  to the selective setting may be possible. 
\end{itemize}

An \verb=R= package called \verb=lmFScreen=, implementing the methods proposed in this paper, is available at %
\url{https://github.com/mcgougho/lmFScreen}.
\verb=R= scripts to reproduce all numerical 
results can be found at %
\url{https://github.com/mcgougho/lmFScreen-paper}. 

\section*{Acknowledgements}
The authors gratefully acknowledge funding from NSF DMS 2322920, NSF DMS 2514344,  NIH 5P30DA048736, and  ONR N00014-23-1-2589 to DW; NSF DMS-2303371, the Pacific Institute for the Mathematical Sciences, the Simons Foundation, and the eScience Institute at the University of Washington to DK; and NSF Graduate Research Fellowship Grant No. DGE-2140004 to OM.

\clearpage
\appendix

\makeatletter
\def\theHfigure{\thefigure}
\def\theHequation{\theequation}

\section{Connections to multiple testing}
\label{app:multipletesting}
\renewcommand{\thefigure}{A.\arabic{figure}}
\renewcommand{\theequation}{A.\arabic{equation}}
\setcounter{figure}{0}
\setcounter{equation}{0}

\subsection{Family-wise error rate control does not imply selective Type 1 error rate control}

One might wonder whether the problem addressed in this paper could be
more simply tackled by controlling the family-wise error rate (FWER).  FWER control has been studied for almost a century  \citep{tukey1949ComparingIndividualMeans, scheffe1953MethodJudgingAll, dunn1961MultipleComparisonsMeans, holm1979Simple,marcus1976ClosedTestingProcedures,hochberg1987MultipleComparisonProcedures}, and continues to be an area of active research: for instance, the  ``PoSI" proposal of \citet{berk2013ValidPostselectionInference}---foundational to the field of selective inference---targets the FWER.

However, we see in the right-hand panel of \cref{fig:qqplots-appendix} that, in the context of F-screening in a setting with $p=3$, control of the FWER does not imply control of the selective Type 1 error rate. In particular, though Bonferroni, Scheff\'{e}, and closed testing corrections are guaranteed to control FWER, they do not control the selective Type 1 error rate \citep{fithian2017OptimalInferenceModel}, i.e.~the probability of a Type 1 error, conditional on rejection of $H_0^{1:3}$.

\begin{figure}[h!]
    \begin{tabular}{cc}
    \hspace{8.5em}\textbf{All realizations}& \hspace{4.5em} \textbf{\makecell{Only when  $\hov$ is rejected} } 
    \end{tabular} \\
  \includegraphics[width = \textwidth]{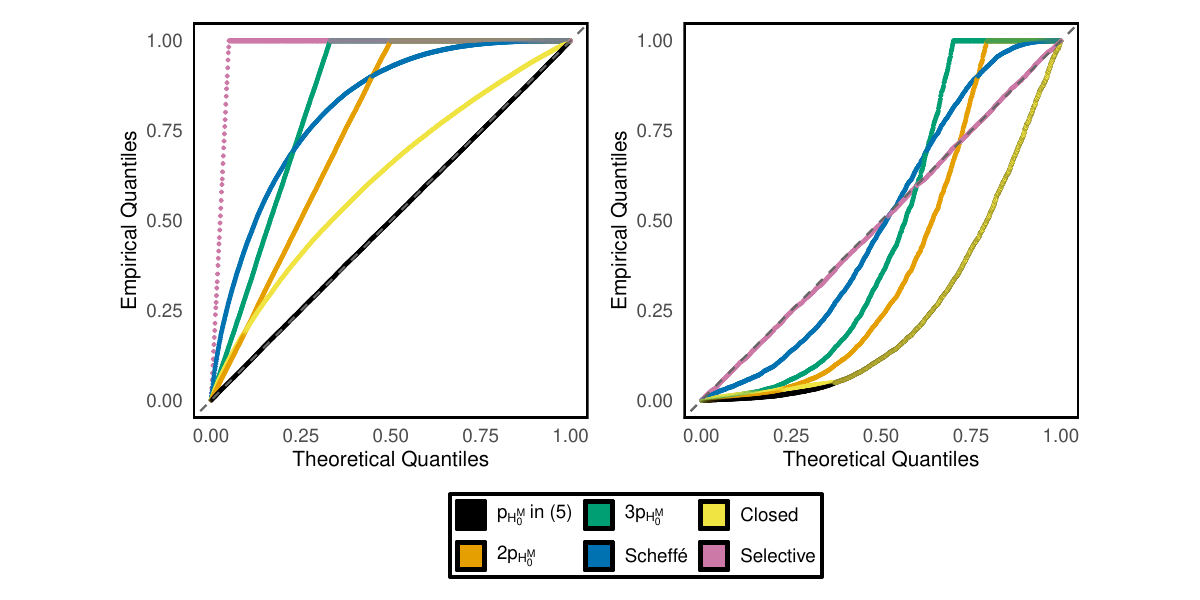}
  \caption{We simulate data according to \eqref{eq:model} with $n=100$, $p=3$, $\sigma^2=1$, and $\beta_1=\beta_2=\beta_3=0$, so that $\hov$ \eqref{eq:hov} holds. For the test of $\hoone: \beta_1=0$, we report (i) the uncorrected p-value that does not account for F-screening, i.e. $\phoM$ in \eqref{eq:pnaive}; (ii) $2\phoM$, resulting from a Bonferroni correction to $\phoM$ that accounts for the fact that two tests ($\hov$ and $\hoone$) were conducted; (iii) $3\phoM$, a Bonferroni correction that accounts for the fact that three post hoc tests ($\hoj: \beta_j=0$ for $j \in \left\{ 1, 2, 3 \right\}$) could have been conducted; (iv) the p-value that results from using Scheff\'{e}'s method; (v) the p-value associated with the closed testing procedure as described in \cref{subsec:closedtest}; and (vi) 
  the selective p-value \eqref{eq:pselchisq} that accounts for F-screening. 
  \emph{Left:} For each of 100,000 simulated datasets, the QQ-plot displays the p-values for $\hoone$. The uncorrected p-value $\phoM$ follows a $\operatorname{Unif}(0,1)$ distribution, while the selective p-value is extremely conservative because results for all simulated datasets are displayed, regardless of rejection of $\hov$. Since we are only considering a single test, the Bonferroni-corrected, Scheff\'{e}-corrected, and closed-testing p-values are also conservative.  \emph{Right:} For the 5\% of the 100,000 simulated datasets for which $\hov$ is rejected at level $\alphaov=0.05$, we display the p-values for $\hoone$. The uncorrected p-value $\phoM$ and the closed testing p-value are anti-conservative, as are the Bonferroni-corrected and Scheff\'{e}-corrected p-values when small quantiles are considered. However, the selective p-value follows a $\operatorname{Unif}(0,1)$  distribution. }
  \label{fig:qqplots-appendix}
\end{figure}

Why is this the case? 
Consider the setting of \cref{fig:qqplots-appendix}. There are three possibilities associated with the  null hypotheses $H_0^{1:3}: \beta_{1:3}=0$ and $\hoone: \beta_1=0$: (i) both are true; (ii) $H_0^{1:3}$ is false, but
\(\hoone\) is true; or (iii)
both are false. In (iii), a Type 1 error is
impossible, so we consider only (i) and (ii). Let \(\delta_{1:3}\) be the
decision function associated with the test of $H_0^{1:3}$: it is a function of the
(random) data and takes value \(1\) when we reject $H_0^{1:3}$ and value \(0\)
otherwise; let \(\delta_{1}\) similarly denote the decision function associated
with testing \(\hoone\). In scenario (i), the FWER is \(\E_{H_0^{1:3}} \left[ \max\{\delta_{1:3}, \delta_{1} \}\right]\).
In scenario (ii), the FWER is \(\E_{\hoone} \left[ \delta_{1} \right]\).
By contrast, in  scenario (i)  the selective Type 1 error---which is controlled by our procedure---is \(\E_{H_0^{1:3}} \left[ \delta_{1} \mid \delta_{1:3}
= 1\right]\), while in scenario (ii) it is \(\E_{\hoone} \left[ \delta_{1} \mid \delta_{1:3}
= 1\right]\). Clearly, the former targets are not the same as the latter targets, and control of one does not imply control of the other.

\subsection{Connection to closed testing} \label{subsec:closedtest}

At first glance, there appears to be a connection between the F-screening procedure  described in Box~\ref{box:box1}, and so-called  ``closed testing procedures" \citep{marcus1976ClosedTestingProcedures}. %

We begin by describing the
closed testing procedure of \cite{marcus1976ClosedTestingProcedures}, which controls the FWER.  Let \(H_{1}, H_{2}, \dots, H_{q}\) denote the set of
\emph{elementary} hypotheses to be tested. In addition to the elementary
hypotheses, we also consider all possible intersections \(H_{I} = \cap_{i \in I}
H_{i}\) for all nonempty \(I \subseteq \left\{ 1, 2, \dots, q \right\}\). Next,
we record the results of a (preliminary) test at level \(\alpha\) for each of the hypotheses
(including the elementary \emph{and} intersection hypotheses). We then consider an elementary hypothesis $H_i$ to be rejected if and only if
the preliminary test $H_I$ is rejected for \emph{all} \(H_{I}\) where \(i \in I\).

How can we apply closed testing in our setting? First, consider elementary
hypotheses \(\hoM : \beta_{M} = 0\) and \(\honM : \beta_{-M} = 0\). In order to conduct a closed testing procedure, the only
additional hypothesis we need to consider is \(\hov = \hoM \cap \honM\). In
order to reject \(\hoM\), closed testing dictates that our preliminary tests must have rejected
\(\hov\) \emph{and} \(\hoM\); of course, we can conduct the preliminary test of the former before we conduct the preliminary test of the latter. 
Thus, in this setting, closed testing bears a resemblance to the procedure in Box~\ref{box:box1}. The p-value associated with the test of $\hoM$ in this particular instance of closed-testing is the maximum of the two p-values associated with the preliminary tests of $\hov$ and $\hoM$. In \cref{fig:qqplots-appendix}, we plot this p-value in the case that $p=3$ and the test of interest in Step 2 of Box~\ref{box:box1} is $\hoone:\beta_1=0$.  We see that the closed testing p-value does not control the selective Type 1 error.

How can we reconcile the FWER guarantees of closed testing with the
empirical results depicted in the right-hand panel of Figure~\ref{fig:qqplots-appendix}? %
The issue hinges on how we ``count'' the
decision regarding 
\(\hoM\) in scenarios wherein we fail to reject \(\hov\). We illustrate this point with a simple example, in which $\hov$ holds. Suppose that one hundred researchers
independently collect data and then  perform closed testing at level $\alpha$.  This entails conducting preliminary tests of  $\hov$, $\hoM$, and $\honM$. Suppose that 
five of the researchers' preliminary tests reject \(\hov\); of these five, four find that their preliminary tests also
reject \(\hoM\). Then, the closed testing procedure dictates that these four researchers should reject $\hoM$. In this setting, the family-wise error proportion is $5/100 = 5\%$, because five of the 100 researchers falsely rejected at least one null hypothesis: this is consistent with the fact that closed testing guarantees FWER control. In contrast, the resulting selective Type 1 error proportion---defined as the proportion of researchers who rejected $\hoM$ among those who rejected $\hov$---is $4/5 = 80\%$; this is because closed testing does not control the selective Type 1 error rate. %

While other work on closed testing has built upon the proposal of \cite{marcus1976ClosedTestingProcedures} and has considered other targets besides FWER \citep{goeman2011Multiple}, this literature does not target the selective Type 1 error rate that is the focus of our paper.

\section{Comparison to \citet{heller2019PostSelectionEstimationTesting}} \label{app:heller}
\renewcommand{\thefigure}{B.\arabic{figure}}
\renewcommand{\theequation}{B.\arabic{equation}}
\setcounter{figure}{0}
\setcounter{equation}{0}

In this appendix we compare our approach to the method proposed in  \cite{heller2019PostSelectionEstimationTesting}. We begin with a summary of their method.

For a parameter vector $\beta$ and estimator $\hat{\beta}\sim \mathcal{N}_m(\beta,\Sigma) $, their proposal is as follows:
\begin{enumerate} 
\item Conduct an aggregate test of the form $S = \hat{\beta}^\top K \hat{\beta}> S_{1-t_1}$, where $K$ is a positive semi-definite matrix, and $S_{1-t_1}$ is the $(1-t_1)$ quantile of the distribution of $S.$ 
\item Only if the test in Step 1 is rejected, perform inference on $\beta$ conditional on this rejection. 
\end{enumerate}
Their Steps 1 and 2  assume that the covariance matrix $\Sigma$ is known. Their Theorem 1 establishes that, for any linear contrast vector $\eta$, the statistic $\eta\hat{\beta}$ follows a  truncated normal distribution conditional on $\{S>S_{1-t_1}, \vec{W}\}$, where $\vec{W} = (I_m - (\eta^\top \Sigma \eta)^{-1} \Sigma \eta\eta^\top )\hat\beta$. This result is then used to construct a selective p-value, which controls the selective Type 1 error (i.e., the Type 1 error conditional on rejection in Step 1).

If we attempt to apply \cite{heller2019PostSelectionEstimationTesting}'s proposal to the  F-screening setting in our paper (Box~\ref{box:box1}), two challenges arise. 
\begin{list}{}{}
\item{Challenge 1.} Step 1 of their procedure involves a $\chi^2$-test, not an $F$-test. Does the distinction between F-screening and $\chi^2$-screening in Step 1 matter?
There is asymptotic equivalence between an $F$-test for $\beta=0$ and a $\chi^2$-test for $\beta=0$ as $n\rightarrow \infty$, if a consistent estimator for $\sigma$ is used in the $\chi^2$-test. However, in finite samples the two tests will disagree. 

\item{Challenge 2.} Their main theoretical result, Theorem 1, requires that the  covariance be known. However, in the linear model setting of our paper, $\sigma$ is unknown. Their paper provides no guarantees about the validity of their procedure when the covariance is estimated.

\end{list}

Our proposal avoids Challenge \#1 because it considers the \emph{exact} screening step used by data analysts in practice: i.e.~Step 1 of Box~\ref{box:box1} really involves an $F$-test, rather than a $\chi^2$ approximation to an $F$-test. 

Our method also avoids Challenge \#2 because we have no need to estimate $\sigma^2$. We achieve this by following the arguments in Section 4.1 of  \citet{fithian2017OptimalInferenceModel},  and conditioning on $\|Y\|^2$, which is a sufficient statistic for $\sigma^2$ under the model  $Y\sim \mathcal{N}(X\beta,\sigma^2 I_{n-1})$. 

To show that Challenges \#1 and \#2 faced by \citet{heller2019PostSelectionEstimationTesting} are real problems in practice, we make the following points in \cref{fig:heller}:
\begin{enumerate}
\item  \emph{The datasets that pass Step 1 of F-screening differ from those that pass the $\chi^2$-screening rule proposed in \citet{heller2019PostSelectionEstimationTesting}} (Challenge \#1). See the left-hand panel of \cref{fig:heller}. Whether we use the conservative estimate of $\sigma^2$ 
\begin{equation}\label{eq:sigma_tilde}
    \tilde\sigma^2 = \frac{(Y-\frac{1}{n}\sum_i Y_i)^\top(Y-\frac{1}{n}\sum_i Y_i)}{n-1},
\end{equation}
denoted \(\hat{\sigma}^2_{\text{null}}\) in \citet{heller2019PostSelectionEstimationTesting}, or use the true value of $\sigma$, the F-screening and \(\chi^2\)-screening events differ drastically. On the other hand, when plugging-in the standard estimate of $\sigma^2$,
\begin{equation}\label{eq:sigma_hat}
    \hat\sigma^2 = \frac{(Y-X\hat\beta)^\top(Y-X\hat\beta)  }{n-p-1},
\end{equation}
the $\chi^2$-screening event more closely aligns with our screening event, though there are still discrepancies: $5.2\%$ of the datasets were rejected by both the $F$-test and $\chi^2$-test with the standard plug-in, while $7.2\%$ of the datasets were rejected by the $\chi^2$-test with the standard plug-in and \emph{not} rejected by the $F$-test.

\item  \emph{When $n$ is small, the proposal of \citet{heller2019PostSelectionEstimationTesting} that uses the standard plug-in estimator of $\sigma$, $\hat\sigma^2,$ does not control the  selective Type 1 error in Step 2 of F-screening} (Challenge \#2). By contrast, our proposed selective p-value does. This is shown in the right-hand panel of \cref{fig:heller}. If instead the oracle $\sigma^2$ or the conservative estimate $\tilde\sigma^2$ \eqref{eq:sigma_tilde} is used to compute the selective p-value from \citet{heller2019PostSelectionEstimationTesting}, then those p-values become very conservative. 

We note that \citet[Section 4.3.1]{heller2019PostSelectionEstimationTesting} introduce a regime-switching variance estimator, which selects between \eqref{eq:sigma_hat} and  \eqref{eq:sigma_tilde} based on a more stringent aggregate test threshold. In the setting of \cref{fig:heller}, this threshold is rarely exceeded, so the resulting p-values very closely match with those obtained using the conservative estimator. For this reason, we omit these from \cref{fig:heller}.
\end{enumerate}

\begin{figure}[]
\begin{tabular}{ccc}
\hspace{5em}p-values for $\hov$
& \hspace{5em}
p-values for $\hoone: \beta_1=0$
\end{tabular}
  \includegraphics[width = \textwidth, trim=0 0 0 0, clip]{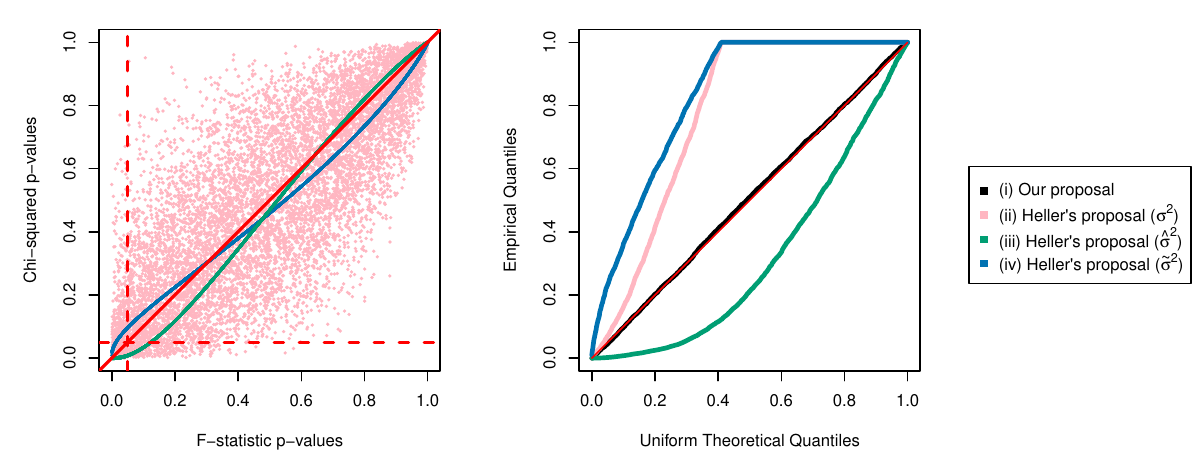}
  \caption{\textit{Left}: We generate \num{1e4} datasets $(X,Y)$ with $n=30$, $p = 10$, $\beta_1=\ldots = \beta_p = 0$, and $\sigma^2=1$. For each dataset, we test the  overall null hypothesis $H_0^{1:p}:\beta_1 = \ldots =\beta_p = 0$ at level 0.05 with (i) an $F$-test; (ii) a $\chi^2$-test that uses the true value of $\sigma^2$; (iii) a $\chi^2$-test that uses the standard plug-in estimate $\hat\sigma^2$ in \eqref{eq:sigma_hat}; and (iv) a $\chi^2$-test that uses the conservative plug-in estimate $\tilde\sigma^2$ in \eqref{eq:sigma_tilde}. The $x$-axis corresponds to the $F$-test p-value in (i), whereas the $y$-axis corresponds to the $\chi^2$-test p-values in (ii)-(iv). The red dashed horizontal and vertical lines indicate the $0.05$ cut-off for the F-statistic and $\chi^2$-statistic p-values, and the red solid line indicates the 45 degree line. To understand why the $\chi^2$ p-values in (iii) and (iv) are a monotone function of the F-statistic p-values in (i), note that the test statistics for (iii) after scaling by the estimated variance, (iv) after scaling by the estimated variance, and (i) are proportional to $\frac{Y^\top P_X Y}{Y^\top (I_{n-1}-P_X) Y} =: F$,  $\frac{Y^\top P_X Y}{Y^\top  Y}=\frac{F}{1+F}$, and $F$, respectively. By contrast, the $\chi^2$ p-values in (ii) are not a monotone function of the F-statistic p-values in (i) because they are scaled by the true variance, not an estimate of the variance. 
  \emph{Right:} We generate datasets $(X,Y)$ with $n=30$, $p = 10$, $\beta_1=\ldots = \beta_p = 0$, and $\sigma^2=1$ until we have \num{5e3} for which $\hov:\beta_1=\ldots=\beta_p=0$  is rejected with an $F$-test at $\alpha=0.05$. For each such dataset, for a test of $\hoone: \beta_1=0$, we compute (i) the selective p-value in \eqref{eq:pselchisq}, as well as the selective p-value based off of the truncated normal distribution in \citet{heller2019PostSelectionEstimationTesting} that uses (ii) the true value of $\sigma$, (iii) the standard plug-in estimate $\hat\sigma^2$ in \eqref{eq:sigma_hat}, and (iv) the conservative plug-in estimate $\tilde\sigma^2$ in \eqref{eq:sigma_tilde}. A uniform QQ-plot of these p-values is displayed. Our proposal's p-values are exactly uniform under the null hypothesis, whereas those of 
\citet{heller2019PostSelectionEstimationTesting} are anti-conservative when the standard variance estimate $\hat\sigma^2$ is used, and conservative when the true value of $\sigma^2$ or the conservative estimate $\tilde\sigma^2$ is used. The $\chi^2$ p-values obtained using $\hat\sigma^2$ are anti-conservative because conditional on rejection of $\hov$, the standard estimator $\hat\sigma^2$ is biased downwards; furthermore, the truncated normal approximation used to compute the $\chi^2$ p-values does not hold exactly. By contrast, the $\chi^2$ p-values obtained using the true value of $\sigma^2$ and the conservative estimate $\tilde\sigma^2$ are conservative because of instances where Step 1 in Box~\ref{box:box1} is rejected but the $\chi^2$-test would not have been (this corresponds to instances where the blue and pink points have an empirical quantile value of 1 in the right-hand panel). }
  \label{fig:heller}
\end{figure}

In conclusion, our contributions relative to \citet{heller2019PostSelectionEstimationTesting} are as follows:
\begin{itemize}
\item whereas \citet{heller2019PostSelectionEstimationTesting} would have analysts perform Step 1 with a $\chi^2$-test, we conduct Step 1 with an $F$-test, aligning with standard practice \citep{fisher1935DesignExperiments,lindquist1940StatisticalAnalysisEducational};
\item whereas \citet{heller2019PostSelectionEstimationTesting} provide no theoretical guarantees in the setting of unknown variance (i.e., the setting of a real-world data analysis), our method provides selective Type 1 error control and confidence interval coverage in finite samples;
\item whereas computation of the p-value in \citet{heller2019PostSelectionEstimationTesting} requires access to the complete data $(X,y)$, our p-value can be computed retrospectively per \cref{sec:retrospective}.
\end{itemize}

\section{Selective confidence intervals and point estimates} \label{app:ci_point_estimates}
\renewcommand{\thefigure}{C.\arabic{figure}}
\renewcommand{\theequation}{C.\arabic{equation}}
\renewcommand{\theremark}{C.\arabic{remark}}
\setcounter{figure}{0}
\setcounter{equation}{0}
\setcounter{remark}{0}

In this appendix, we derive confidence intervals and point estimates for the regression coefficients that are valid conditional on rejecting $\hov$ \eqref{eq:hov}.

\subsection{Point estimates} 
\label{sec:point-estimates}
In this section we will develop an approach to find the joint \emph{conditional}
maximum likelihood estimate (MLE) (conditional on rejecting \(\hov\)) of
\(\beta_{j}\) and $\sigma^2$ under \eqref{eq:model}. First, we will arrive at a convenient
expression for the conditional density. Once that is obtained, we will maximize the
associated conditional likelihood with respect to \(\beta_{j}\) and $\sigma^2$.

Recall the definition of $E_1$ in \eqref{eq:rejectFov}. Unfortunately, we encounter an immediate challenge in finding the conditional
density of \(Y\) with respect to \(\beta_{j}\) and $\sigma^2$ given \(Y\in E_{1}\): the
probability of the event $E_1$ depends on nuisance parameters $\beta_{-j}$
through the projection $P_{X_{-j}}Y$. To eliminate this dependence, we
additionally condition on the event $\tilde{E}_2(y) =\{Y : P_{X_{-j}}Y =
P_{X_{-j}}y\}$ (where \({y}\) is a realization of \(Y\)); this
fixes the component of $Y$ that lies in the column space of $X_{-j}$. 

\begin{remark}
In \cref{subsec:method}, we conditioned not only on $E_1$ in \eqref{eq:rejectFov} and $E_2(y)$ in \eqref{eq:E2E3}, but also on $E_3(y) $ in \eqref{eq:E2E3}. This is because (as discussed in \cref{rmk:conditioning}) the natural parameters are $\tfrac{\beta_1}{\sigma^2}, \ldots, \tfrac{\beta_p}{\sigma^2}, \tfrac{1}{2\sigma^2}$ with sufficient statistics $X_1^\top Y, \ldots, X_p^\top Y, \|Y\|$. Thus, conditioning on $E_1$, $X_{-j}^\top Y,$ and $\|Y\|$ leaves only $\beta_j/\sigma^2$ in the conditional distribution. Since a test of $H_0:\beta_j/\sigma^2 = 0$ is equivalent to a test of $H_0:\beta_j = 0$ for any $\sigma>0$, we were able to construct the test of interest.

In the context of point estimation, the story is a bit different. Conditioning on $E_1$, $X_{-j}^\top Y,$ and $\|Y\|$ again yields a conditional distribution that involves only $\beta_j/\sigma^2$. However, this enables estimation only of $\beta_j/\sigma^2$, not of $\beta_j$. 

Therefore, in this section, we condition on only $E_1$ and $X_{-j}^\top Y,$ not on $\|Y\|$. This yields a conditional distribution involving both $\beta_j$ and $\sigma^2$. We jointly maximize it with respect to both parameters.
\end{remark}

\begin{proposition}
  \label{prop:cond-dens}
  Let \(Y \sim \mathcal{N}_{n - 1} (X \beta, \sigma^2 I_{n - 1}) \), and define
  \begin{equation*}
  D =
  \begin{bmatrix}
    I_{n-1}-P_X\\
    P_X-P_{X_{-j}}\\
    P_{X_{-j}}
  \end{bmatrix} =
  \begin{bmatrix}
    D_1 \\
    D_2 \\
    D_3
  \end{bmatrix}.
\end{equation*}
Define new random variables \(\tZ = D_1 Y\), \(\tW = D_2 Y\), and
\(\tV = D_{3} Y\) (note that $\tZ^\top \tZ /\sigma^2
\overset{d}{=} Z$ and $\tW^\top \tW / \sigma^2 \overset{d}{=} W$,
for $Z$ and $W$ that appear in \cref{prop:chisq}). Let \(c\) be as defined in
\eqref{eq:acdr} and \(E_{1}\) as defined in \eqref{eq:rejectFov}, let
\({y}\) denote a realization of \(Y\), and define \(\tilde{d} = P_{X_{-j}}
y\). Define $\tilde{E}_2(y) =\{Y : P_{X_{-j}}Y =
P_{X_{-j}}y\}$. (Note that $Y\in \tilde E_2(y)$ implies $Y\in E_2(y)$ in \eqref{eq:E2E3}.)

Then, the conditional density of \(Y\) given \(Y \in E_{1}\) and \(Y \in \tilde{E}_2(y)\) is
\begin{equation}
  \label{eq:cond-dens}
  f_{Y \mid Y \in E_1 \cap \tilde{E}_2(y)} (\tilde y)
  = \frac
  {
    f_{\tZ} \left( D_1 \tilde y \right) f_{\tW} \left( D_2 \tilde y \right)
    \bm{I}_{ \left\{ \tilde y^{\top} D_2 \tilde y - \tilde y^{\top} D_1 \tilde y \geq - \tilde{d}^{\top} \tilde{d} \right\}}(\tilde y)
    \bm{I}_{ \left\{ D_3 \tilde y = \tilde{d} \right\}}(\tilde y)
  }
  {
    \Pr \left( \tW^{\top} \tW - c \tZ^{\top} \tZ \geq \tilde{d}^{\top} \tilde{d} \right)
  },
\end{equation}
where the marginal densities of \(\tZ\) and \(\tW\) are given in \eqref{eq:z-w-distn} and $\bm{I}_{\{\cdot\}}(\tilde y)$ denotes the indicator that $\tilde y$ satifies the condition inside the brackets.

\end{proposition}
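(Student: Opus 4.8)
\emph{Proof proposal.} The plan is to pass to an orthonormal coordinate system adapted to the block structure of $D$, in which the Gaussian density factors across the three blocks and the two conditioning events become (i) a fixed value of one block and (ii) a quadratic-form inequality in the other two; the claimed formula then drops out by ordinary conditioning. First I would record the projection algebra: since $\mathrm{col}(X_{-j}) \subseteq \mathrm{col}(X) \subseteq \R^{n-1}$ is a nested flag, the matrices $D_1 = I_{n-1}-P_X$, $D_2 = P_X - P_{X_{-j}}$, $D_3 = P_{X_{-j}}$ are symmetric and idempotent, project onto mutually orthogonal subspaces (so $D_iD_k = 0$ for $i\ne k$), and satisfy $D_1+D_2+D_3 = I_{n-1}$. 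Let $U_i$ have columns forming an orthonormal basis of $\mathrm{col}(D_i)$, so $D_i = U_iU_i^\top$ and $U := [\,U_1\ U_2\ U_3\,]$ is orthogonal. Put $\xi := U^\top Y = (\xi_1,\xi_2,\xi_3)$ with $\xi_i := U_i^\top Y$. Then $\xi \sim \mathcal{N}(U^\top X\beta,\,\sigma^2 I_{n-1})$, so $\xi_1,\xi_2,\xi_3$ are independent; $D_iY = U_i\xi_i$, i.e.\ $\tZ = U_1\xi_1$, $\tW = U_2\xi_2$, $\tV = U_3\xi_3$; and $U_1^\top X = 0$ so $\xi_1$ (hence $\tZ$) is centered. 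Since $\xi_i \mapsto U_i\xi_i$ is an isometry onto $\mathrm{col}(D_i)$, the density of the degenerate Gaussian $\tZ$ at $D_1y$ (w.r.t.\ Lebesgue measure on $\mathrm{col}(D_1)$) equals $f_{\xi_1}(U_1^\top y)$, and likewise $f_{\tW}(D_2y)=f_{\xi_2}(U_2^\top y)$.

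Next I would translate the conditioning events into constraints on $\xi$. Idempotency gives $y^\top D_1 y = \xi_1^\top\xi_1$, $y^\top D_2 y = \xi_2^\top\xi_2$, $y^\top P_{X_{-j}}y = \xi_3^\top\xi_3$, together with $Y^\top P_X Y = \xi_2^\top\xi_2 + \xi_3^\top\xi_3$ and $Y^\top(I_{n-1}-P_X)Y = \xi_1^\top\xi_1$. Hence, with $c$ as in \eqref{eq:cdr} (exactly as in the passage from \eqref{eq:pselective} to \eqref{eq:pselchisq}), the selection event $E_1$ \eqref{eq:rejectFov} becomes $\{\xi_2^\top\xi_2 + \xi_3^\top\xi_3 \ge c\,\xi_1^\top\xi_1\}$, while $\tilde E(\tilde y) = \{P_{X_{-j}}Y = P_{X_{-j}}\tilde y\}$ becomes $\{U_3\xi_3 = P_{X_{-j}}\tilde y\} = \{\xi_3 = \tilde d_3\}$ with $\tilde d_3 := U_3^\top\tilde y$, noting $\tilde d_3^\top\tilde d_3 = \tilde y^\top P_{X_{-j}}\tilde y = \tilde{d}^\top\tilde{d}$ for $\tilde{d} = P_{X_{-j}}\tilde y$. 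On the intersection the first constraint reads $\xi_2^\top\xi_2 - c\,\xi_1^\top\xi_1 \ge -\tilde{d}^\top\tilde{d}$, which in terms of $y$ is precisely the quadratic-form indicator $\bm{I}\{y^\top D_2 y - c\,y^\top D_1 y \ge -\tilde{d}^\top\tilde{d}\}$ in \eqref{eq:cond-dens}, and $\{\xi_3 = \tilde d_3\}$ is $\{D_3y = \tilde{d}\}$.

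Finally I would assemble the conditional density. Because $\xi_3$ has a nondegenerate Gaussian density on $\R^{p-1}$, conditioning on $\{\xi_3 = \tilde d_3\}$ is legitimate, and by independence the conditional law of $(\xi_1,\xi_2)$ is their product law; further conditioning on $E_1$ multiplies by $\bm{I}\{\xi_2^\top\xi_2 - c\,\xi_1^\top\xi_1 \ge -\tilde{d}^\top\tilde{d}\}$ and divides by its probability $\Pr(\tW^\top\tW - c\,\tZ^\top\tZ \ge -\tilde{d}^\top\tilde{d})$, while the factor $f_{\xi_3}(\tilde d_3)$ cancels between numerator and normalizer. Transporting back along the isometry $\xi \mapsto U\xi$, which has unit Jacobian and carries Lebesgue measure on $\{\xi_3 = \tilde d_3\}$ to Lebesgue measure on the affine subspace $\tilde E(\tilde y)$, and substituting $f_{\xi_1}(U_1^\top y) = f_{\tZ}(D_1y)$, $f_{\xi_2}(U_2^\top y) = f_{\tW}(D_2y)$, yields \eqref{eq:cond-dens}. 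The main obstacle is the measure-theoretic bookkeeping: $Y$ has no density on the null set $\tilde E(\tilde y)$ in the ambient sense, and $\tZ,\tW$ are supported on proper subspaces, so one must pin down the reference measures (Lebesgue measure on each $\mathrm{col}(U_i)$ and on $\tilde E(\tilde y)$) and check that every orthogonal change of variables is Jacobian-free, so that no spurious constant survives the cancellation of $f_{\xi_3}(\tilde d_3)$. Granting that, the remaining steps --- the flag algebra and the factorization of the multivariate normal --- are routine.
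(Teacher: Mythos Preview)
Your argument is correct and essentially identical to the paper's: orthogonal decomposition $I_{n-1}=D_1+D_2+D_3$, independence of the three Gaussian blocks, rewriting $E_1$ and $\tilde E(\tilde y)$ as constraints on the blocks, then sequential conditioning. The only difference is cosmetic---you work in the reduced coordinates $\xi_i = U_i^\top Y$ via orthonormal bases (which is in fact what the paper itself does in its proof of Proposition~\ref{prop:fisher-info}) rather than directly with the degenerate vectors $\tZ,\tW,\tV$---and along the way you correctly supply the factor $c$ in the numerator indicator and the minus sign in the denominator of \eqref{eq:cond-dens}, both of which are typos in the displayed statement but appear correctly in the paper's own derivation.
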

\begin{proof}
  First, we can characterize \(E_1\) in terms of \(\left( \tZ, \tW, \tV
  \right)\) by observing that (almost surely)
  \begin{equation}
    \label{eq:E1-ZWV}
    \begin{aligned}
      Y \in E_1
      &\iff \frac{Y^{\top} P_{X} Y}{Y^{\top} \left( I_{n - 1} - P_{X} \right) Y} \geq c \\
      &\iff \frac{Y^\top (P_X-P_{X_{-j}}) Y + Y^\top P_{X_{-j}}Y}{Y^\top (I_{n-1}-P_X)Y}\geq c \\
      &\iff \frac{\tW^{\top} \tW + \tV^{\top} \tV}{\tZ^{\top} \tZ} \geq c \\
      &\iff \tW^{\top} \tW - c \tZ^{\top} \tZ \geq - \tV^{\top} \tV,
    \end{aligned}
  \end{equation}
  where the third equivalence exploits the idempotence and symmetry of projection matrices.
  We can similarly  characterize \(\tilde{E}_2(y)\) in terms of \(\tV\) by noting that
  \begin{equation}
    \label{eq:Etilde-ZWV}
    \begin{aligned}
      Y \in \tilde{E}_2({y})
      &\iff P_{X_{-j}} Y = P_{X_{-j}} {y} \\
      &\iff P_{X_{-j}} Y = \tilde{d} \\
      &\iff \tV = \tilde{d}.
    \end{aligned}
  \end{equation}
  Taken together, \eqref{eq:E1-ZWV} and \eqref{eq:Etilde-ZWV} enable us to
  characterize \(E_{1} \cap \tilde{E}_2({y})\) in terms of \(\left( \tZ,
    \tW, \tV \right)\) by observing that (almost surely)
  \begin{equation}
    \label{eq:E1Etilde-ZWV}
    Y \in E_1 \cap \tilde{E}_2({y})
    \iff
    \tW^{\top} \tW - c \tZ^{\top} \tZ \geq - \tilde{d}^{\top} \tilde{d}
    \text{ and } \tV = \tilde{d}.
  \end{equation}
  Next, note that because the transformation from \(Y\) to \(\left( \tZ, \tW, \tV \right)\) is given by \(D\), and \(D^{\top} D = I_{n - 1}\), a change of variables is straightforward:
  \begin{equation}
    \label{eq:cov}
    f_{Y} (\tilde y) = f_{\tZ, \tW, \tV} \left( D \tilde y \right).
  \end{equation}
  Then, we obtain the conditional density of \(Y\) given \(Y \in
  \tilde{E}_2({y})\) and find
  \begin{equation}
    \begin{aligned}
      \label{eq:cond-step-1}
      f_{Y \mid Y \in \tilde{E}_2({y})} (\tilde y)
      &= f_{Y \mid \tV = \tilde{d}} (\tilde y) \\
      &= f_{\tZ, \tW, \tV \mid \tV = \tilde{d}} (D \tilde y) \\
      &= f_{\tZ \mid \tV = \tilde{d}} \left( D_1 \tilde y \right) f_{\tW \mid \tV = \tilde{d}} \left( D_2 \tilde y \right) f_{\tV \mid \tV = \tilde{d}} \left( D_3 \tilde y \right) \\
      &= f_{\tZ} \left( D_1 \tilde y \right) f_{\tW} \left( D_2 \tilde y \right) f_{\tV \mid \tV = \tilde{d}} \left( D_3 \tilde y \right) \\
      &= f_{\tZ} \left( D_1 \tilde y \right) f_{\tW} \left( D_2 \tilde y \right) \bm{I}_{ \left\{ D_{3} \tilde y = \tilde{d} \right\}}(
      \tilde y),
    \end{aligned}
  \end{equation}
  where the first equality follows from \eqref{eq:Etilde-ZWV}, the second
  equality follows from \eqref{eq:cov}, the third from mutual independence of $\tZ,$ $\tW$, and $\tV$ per \cref{lemma:helper}, the fourth from the
  independence of \(\tZ\) from \(\tV\) and of \(\tW\) from
  \(\tV\), and the last equality follows because the conditional
  distribution of \(\tV \mid \tV = \tilde{d}\) is simply an atom at
  \(\tV = \tilde{d}\).

  Next, we (additionally) condition on \(Y \in E_1\) (since we had already
  conditioned on \(Y \in \tilde{E}_2({y})\) in \eqref{eq:cond-step-1}, we
  are now effectively conditioning on \(Y \in E_1 \cap \tilde{E}_2({y})\)).
  Standard application of the definition of conditional density yields
  \begin{equation}
    \label{eq:cond-step-2}
    \begin{aligned}
      f_{Y \mid Y \in E_1 \cap \tilde{E}_2({y})} (\tilde y)
      &= f_{Y \mid \tW^{\top} \tW - c \tZ^{\top} \tZ \geq - \tilde{d}^{\top} \tilde{d}, \tV = \tilde{d}} (\tilde y) \\
      &= f_{\left\{ Y \mid \tV = \tilde{d}  \right\} \mid \tW^\top \tW - c \tZ^\top \tZ \geq - \tilde{d}^\top \tilde{d}} (\tilde y) \\
      &= \frac
        {
        f_{ Y \mid \tV = \tilde{d}} (\tilde y)
        \bm{I}_{\left\{ \tilde y\; : \;\tilde y^{\top} D_2 \tilde y- c \tilde y^{\top} D_1 \tilde y  \geq -\tilde{d}^{\top} \tilde{d}\right\}}(\tilde y)
        }
        {
        \Pr \left( \tW^\top \tW- c\tZ^\top \tZ\geq -\tilde{d}^{\top} \tilde{d}  \right)
        } \\
      &= \frac
        {
        f_{\tZ} \left( D_1 \tilde y \right) f_{\tW} \left( D_2 \tilde y \right)
        \bm{I}_{\left\{ D_3 \tilde y = \tilde{d} \right\}}(\tilde y)
        \bm{I}_{\left\{ \tilde y\; : \;\tilde y^{\top} D_2 \tilde y- c \tilde y^{\top} D_1 \tilde y  \geq -\tilde{d}^{\top} \tilde{d}\right\}}(\tilde y)
        }
        {
        \Pr \left( \tW^\top \tW- c\tZ^\top \tZ\geq -\tilde{d}^{\top} \tilde{d}  \right)
        },
    \end{aligned}
  \end{equation}
  where the first equality follows from \eqref{eq:E1Etilde-ZWV}, the second
  equality follows because sequential conditioning is equivalent to simultaneous
  conditioning, the third equality from the standard definition of conditional
  density and the fact that $\tW = D_2 Y$ and $\tZ = D_1 Y$ from the statement of the  proposition, and the fourth by substituting the result of \eqref{eq:cond-step-1}.
\end{proof}

Now, we wish to view \eqref{eq:cond-step-2} as a likelihood, i.e., a function of
\(\beta_{j}\) and $\sigma^2$ with the realization of the data \(y\) fixed. First, note that
because \(Y \sim \mathcal{N}_{n - 1} \left( X \beta, \sigma^2 I_{n - 1}
\right)\),
\begin{equation}
  \label{eq:z-w-distn}
  \begin{aligned}
    \tZ &= \left(I_{n - 1} - P_{X}\right) Y \sim \mathcal{N}_{n - 1} \left( 0_{n - 1}, \sigma^2 \left( I_{n - 1} - P_{X} \right) \right), \\
    \tW &= \left( P_{X} - P_{X_{-j}} \right) Y \sim \mathcal{N}_{n - 1} \left( \beta_{j} \left( P_{X} - P_{X_{-j}}  \right) X_{j}, \sigma^2 \left( P_{X} - P_{X_{-j}} \right)\right).
  \end{aligned}
\end{equation}
By inspection, we can observe that the distribution of \(\tZ\) is invariant to
\(\beta_{j}\), while \(\tW\) involves \(\beta_{j}\) but none of \(\beta_{-j}\).
When \eqref{eq:cond-step-2} is viewed as a conditional likelihood under the data
realization $y$, the indicators must be satisfied (since we observed the data),
and maximizing it with respect to \(\beta_{j}\) and $\sigma^2$ is equivalent to maximizing
\begin{equation}
\label{eq:lik2}
L(\beta_{j}, \sigma^2)
:= \frac{f_{\tZ}(\sigma^2;\tilde z(y)) \cdot f_{\tW} (\beta_{j},\sigma^2; \tilde w(y)) }{\Pr_{\beta_{j},\sigma^2} \left( \tW^\top \tW- c\tZ^\top \tZ\geq -y^\top P_{X_{-j}}y  \right)},
\end{equation}
where $\tilde w(y) = (P_X-P_{X_{-j}})y$, $\tilde z(y) =(I_{n-1}-P_X)y$, and we write \(\Pr_{\beta_{j},\sigma^2}\) to emphasize that the probability depends
upon \(\beta_{j}\) and $\sigma^2$.

Recall that we observe $y$ and hence can compute $\tilde w(y)$, $\tilde z(y),$ and $y^\top P_{X_{-j}}y.$
For any candidate value $(\beta_j,\sigma^2)$, the
numerator can be computed exactly, and the denominator can be approximated via Monte Carlo. 
Therefore, to
approximate the conditional MLE, we numerically jointly maximize the logarithm of the
ratio in \eqref{eq:lik2} with respect to $\beta_j$ and $\sigma^2$.

We now comment on \cref{fig:point_estimates}. In the left-hand panel, where the overall null $\hov$
\eqref{eq:hov} holds and we condition on the rejection of $\hov$, the standard MLE has a much larger spread than the
conditional MLE derived in this section. This is because the overall null is rejected when
the F-statistic is sufficiently large, which tends to occur when $\hat\beta_1$ and/or $\hat\beta_2$ is large. Thus, the
standard MLE for $\beta_1$ will tend to be far from zero \emph{conditional on
  the rejection of the overall null}. The selective estimate, on the other hand,
adjusts for the selection event, and is therefore more tightly concentrated around it. In the right-hand panel, where $\beta_1=0.5$ and $\beta_2=0$, the unconditional MLE distribution is fairly tightly concentrated, but centered around 0.8, indicating upward bias after conditioning on rejection of $\hov.$ The conditonal MLE has a wider spread but is centered much closer to the true value.

\begin{figure}[h]
  \includegraphics[width = \textwidth, trim=0 0 0 0, clip]{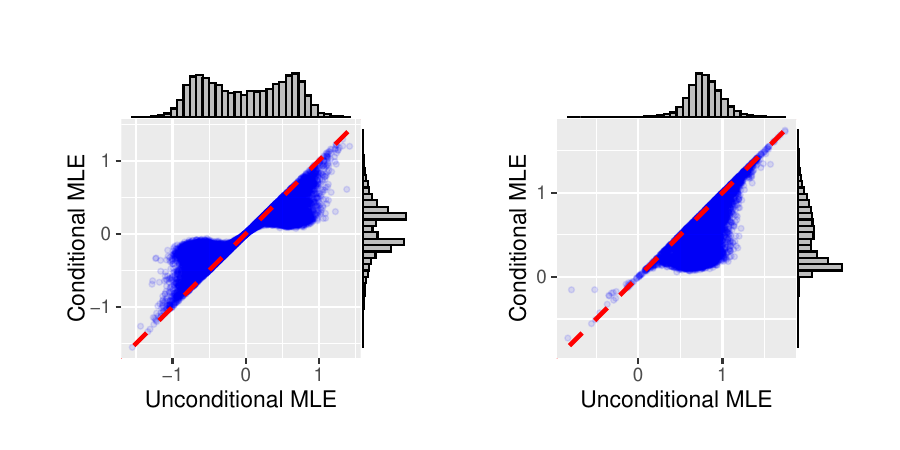}
  \caption{We generate datasets $(X,Y)$ with $n=50$, $p = 2$, and $\sigma^2=4$ until we have \num{1e4} that satisfy
    \eqref{eq:rejectFov}. In the left-hand panel, we set $\beta_1 = \beta_2=0$, and in the right-hand panel we set $\beta_1 =0.5$ and  $\beta_2=0$. For each dataset, we compute the standard,
    unconditional MLE of $\beta$, $\hat\beta_{\text{MLE}} = (X^\top X)^{-1} X^\top Y$.
    For the same datasets, we compute the conditional MLE of $\beta_1$ that
    arises from numerically maximizing the conditional likelihood in
    \eqref{eq:lik2}. Each dataset corresponds to a point on the above
    scatterplots, with the standard point estimate on the horizontal axis and
    the conditional point estimate on the vertical axis.}
  \label{fig:point_estimates}
\end{figure}

\subsection{Confidence intervals}\label{app:CI}

We now consider constructing selective confidence intervals for $\beta_j$, i.e.,
confidence intervals that attain the nominal coverage, conditional on having
rejected $\hov$. First, we extend our selective testing framework to
accommodate more general hypotheses of the form $\hob:\beta_j=b$. With this in
hand, confidence intervals can be constructed by means of test inversion, i.e.,
a level \(\left( 1 - \alpha \right)\) confidence interval will contain all
values of \(b\) for which we would \emph{not} reject \(\hob\) at level
\(\alpha\).

In order to construct a selective p-value for $\hob$, we consider the transformed data vector $Y-X_jb$, as suggested by \citet[Section 4.1]{fithian2017OptimalInferenceModel}. That is, we evaluate $\FhoM$ in \eqref{eq:fstat} at $y-X_jb$ instead of $y$ in our selective p-value. Further, instead of conditioning on $Y\in E_2(y) \cap E_3(y)$ in \eqref{eq:E2E3}, we define 
\begin{equation}\label{eq:tildeE2E3}
    \tilde E_2(y) := \{Y:P_{X_{-j}}Y = P_{X_{-j}}y\}, \quad \tilde E_3^b(y) :=\{Y: \|Y-X_jb\|^2 = \|y-X_jb\|^2\},
\end{equation}
and condition on 
\begin{equation}\label{eq:Etildeb}
    Y\in \tilde E^b(y) := E_1\cap \tilde E_2(y) \cap \tilde E_3^b(y) .
\end{equation}
Thus we consider the selective p-value
\begin{equation}
\pselb (y)  := \Pr_{\beta_j=b}  \left(  \frac{(Y-X_jb)^\top(P_X-P_{X_{-j}})(Y-X_jb)}{( Y-X_jb)^\top(I_{n-1}-P_X)(Y-X_jb)}  \geq
r (y-X_jb) \;\middle|\;
Y\in \tilde E^b(y)\right),
\label{eq:pselb1}
\end{equation}
where $r(\cdot)$ is defined in \eqref{eq:acdr}.

Observe that 
$$Y\in \tilde E_2(y) \iff P_{X_{-j}}Y = P_{X_{-j}}y \iff P_{X_{-j}}(Y-X_jb)  = P_{X_{-j}}(y-X_jb).$$ Thus \eqref{eq:pselb1} is equivalent to 
\begin{align}
    \pselb (y)  &:= \Pr_{\beta_j=b}  \Bigg(  \frac{(Y-X_jb)^\top(P_X-P_{X_{-j}})(Y-X_jb)}{( Y-X_jb)^\top(I_{n-1}-P_X)(Y-X_jb)}  \geq
r (y-X_jb)\nonumber\\
 & \hspace{1.5cm} \;\Bigg|\;
Y\in  E_1, \; P_{X_{-j}}( Y-X_jb) = P_{X_{-j}}(y-X_jb), \; \| Y -X_jb\|^2 =\| y-X_jb\|^2\Bigg).\label{eq:pselb2}
\end{align}
In order to represent \eqref{eq:pselb2} in a way that we can approximate via Monte Carlo, we first write the relevant test  statistics, 
$$\frac{(Y-X_jb)^\top(P_X-P_{X_{-j}})(Y-X_jb)}{(Y-X_jb)^\top(I_{n-1}-P_X)(Y-X_jb)} \quad \text{and} \quad \frac{Y^\top P_X Y}{Y^\top (I_{n-1}-P_X) Y},$$ 
in terms of the following three functions of our transformed data $Y - X_jb$:
\begin{equation}\label{eq:t1t2t3}T_1 := \|Y-X_jb\|^2,\quad T_2:=P_{X_{-j}}(Y-X_jb),\:\:\text{and} \:\:\:T_3:=X_j^\top(I_{n-1}-P_{X_{-j}})(Y-X_jb).\end{equation} 
Note that
\[
\operatorname{Span}(X)
=
\operatorname{Span}(X_{-j})
\oplus
\operatorname{Span}\big((I_{n-1}- P_{X_{-j}})X_j\big).
\]
Therefore, $ P_X=P_{X_{-j}}+P_{(I_{n-1}- P_{X_{-j}})X_j},$
and hence $P_{(I_{n-1}- P_{X_{-j}})X_j}=P_X - P_{X_{-j}}.$
It follows that 
\begin{align*}
&(Y-X_jb)^\top P_X (Y-X_jb)
\\ &=
(Y-X_jb)^\top P_{X_{-j}} (Y-X_jb)+ (Y-X_jb)^\top (P_X - P_{X_{-j}}) (Y-X_jb)\\
&=(Y-X_jb)^\top P_{X_{-j}} (Y-X_jb)+
(Y-X_jb)^\top P_{(I_{n-1}- P_{X_{-j}})X_j} (Y-X_jb)\\
& = T_2^\top T_2 +(Y-X_jb)^\top (I_{n-1}- P_{X_{-j}})X_j(X_j^\top (I_{n-1}- P_{X_{-j}})X_j)^{-1}X_j^\top (I_{n-1}- P_{X_{-j}})(Y-X_jb)\\
& =T_2^\top T_2+T_3^\top \lambda^{-1}T_3,
\end{align*}
where $\lambda = X_j^\top (I_{n-1}-P_{X_{-j}})X_j.$
Thus the overall $F$-statistic can be written as 
\begin{equation}\label{eq:Fhov_sample}
\begin{aligned}
\Fov &=
\frac{Y^\top P_X Y}{Y^\top (I_{n-1}- P_X)Y}\\
& = \frac{\left((Y-X_jb) + X_jb\right)^\top P_X ((Y-X_jb) + X_jb)}{((Y-X_jb) + X_jb)^\top (I_{n-1}-P_X)((Y-X_jb) + X_jb)}\\
& =\frac{(Y-X_jb)^\top P_X (Y-X_jb) + 2(Y-X_jb)^\top X_jb + b^2 X_j^\top X_j}{(Y-X_jb)^\top (I_{n-1}-P_X)(Y-X_jb)}\\
& =\frac{(Y-X_jb)^\top P_X (Y-X_jb) + 2(Y-X_jb)^\top (I_{n-1}-P_{X_{-j}})X_jb + 2(Y-X_jb)^\top P_{X_{-j}}X_jb+ b^2 X_j^\top X_j}{(Y-X_jb)^\top (Y-X_jb) -(Y-X_jb)^\top P_X (Y-X_jb)  }\\
& =\frac{T_2^\top T_2+ T_3^\top \lambda^{-1}T_3 + 2T_3^\top b + 2T_2^\top X_jb+ b^2 X_j^\top X_j}{T_1
-T_2^\top T_2- T_3^\top \lambda^{-1}T_3  }.
\end{aligned}
\end{equation}
Note that $Y \in \tilde E_3^b(y)$ \eqref{eq:tildeE2E3} fixes the value of $T_1 $ \eqref{eq:t1t2t3}, and $Y\in \tilde E_2(y)$ \eqref{eq:tildeE2E3} fixes the value of $T_2 $ \eqref{eq:t1t2t3}. Thus---once we condition on $Y\in \tilde E^b(y)$ \eqref{eq:Etildeb}---the only randomness in the above representation of $\Fov $ is in the random variable $T_3$.

Next we write the $F$–statistic $F_{\hob}$ for $\hob:\beta_j=b$ in terms of $T_1$, $T_2$ and $T_3$:
\begin{equation}\label{eq:FhoM_sample}
\begin{aligned}
F_{\hob}(Y)  &= 
 \frac{ (Y-X_jb)^\top (P_X - P_{X_{-j}}) (Y-X_jb) } {(Y-X_jb)^\top (I_{n-1}- P_X) (Y-X_jb)} \\
 &= \frac{ (Y-X_jb)^\top P_{(I_{n-1}- P_{X_{-j}})X_j} (Y-X_jb) } {(Y-X_jb)^\top (I_{n-1}- P_X) (Y-X_jb)}\\
 & =\frac{(Y-X_jb)^\top (I_{n-1}- P_{X_{-j}})X_j(X_j^\top (I_{n-1}- P_{X_{-j}})X_j)^{-1}X_j^\top (I_{n-1}- P_{X_{-j}})(Y-X_jb)}{ \|Y-X_jb\|^2
-(Y-X_jb)^\top P_X (Y-X_jb)}\\
& =\frac{T_3^\top \lambda^{-1}T_3}{ T_1
-T_2^\top T_2
-T_3^\top \lambda^{-1}T_3
}.
\end{aligned}
\end{equation}
With fixed $T_1$ and $T_2$, the only randomness in the above expression for $\FhoM$ is in $T_3$.

Finally, we characterize the conditional distribution of $T_3 \mid \{T_1, \;T_2 \}$, i.e.~the conditional distribution of 
$$X_j^\top (I_{n-1}- P_{X_{-j}})(Y-X_jb)
\ \Big|\ 
\left\{\|Y-X_jb\|^2,\ P_{X_{-j}}(Y-X_jb) \right\}.$$
Under $H_0:\beta_j = b,$ 
\begin{align*}
    (I_{n-1}- P_{X_{-j}})(Y-X_jb) &\sim \mathcal{N}\left((I_{n-1}- P_{X_{-j}})(X\beta - X_jb), \sigma^2 (I_{n-1}- P_{X_{-j}})\right) \\
    &\equiv \mathcal{N}(0, \sigma^2 (I_{n-1}- P_{X_{-j}})).
\end{align*}
Let $Z \sim \mathcal N(0,\sigma^2 I_{n-p})$ and $Q\in \R^{(n-1)\times (n-p)}$ such that $QQ^\top = I_{n-1} - P_{X_{-j}}$ and $Q^\top Q = I_{n-p}$. It follows that, under $H_0:\beta_j = b$,
\[
(I_{n-1}- P_{X_{-j}})(Y-X_jb)\overset{d}{=}QZ.
\]
Further, letting 
\begin{equation}\label{eq:adtilde}\tilde a= \|y - X_jb\|^2 \quad \text{and} \quad \tilde d=P_{X_{-j}}(y-X_jb),\end{equation} 
note that 
\begin{align*}
&X_j^\top (I_{n-1}- P_{X_{-j}})(Y-X_jb)
\ \Big|\ 
\left\{\|(Y-X_jb)\|^2 = \tilde a,\ P_{X_{-j}}(Y-X_jb) = \tilde d\right\}\\
\overset{d}{=} \;&X_j^\top (I_{n-1}- P_{X_{-j}})(Y-X_jb)
\ \Big|\ 
\Big\{(Y-X_jb)^\top (I_{n-1}- P_{X_{-j}})(Y-X_jb) = \tilde a - \tilde d^\top \tilde d, \\
& \hspace{11cm} \;P_{X_{-j}}(Y-X_jb) = \tilde d\Big\}\\
\overset{d}{=} \;&X_j^\top (I_{n-1}- P_{X_{-j}})(Y-X_jb)
\ \Big|\ 
\left\{(Y-X_jb)^\top (I_{n-1}- P_{X_{-j}})(Y-X_jb) = \tilde a - \tilde d^\top \tilde d\right\},
\end{align*}
where the second equivalence follows from independence of $(I_{n-1}-P_{X_{-j}})Y$ and $P_{X_{-j}}Y$ (or, more rigorously, from the fact that if $A$, $B$, and $C$ are random events where $(A, B)$ is independent of $C$ and $B$ is independent of $C$, then $\Pr(A|B, C) = \Pr(A|B)$). Hence, noting that $\|QZ\| = \|Z\|$,
\begin{align*}
&X_j^\top (I_{n-1}- P_{X_{-j}})(Y-X_jb)
\ \Big|\ 
\left\{\|Y-X_jb\|^2 = \tilde a,\ P_{X_{-j}}(Y-X_jb) = \tilde d \right\}\: \\ 
& \overset{d}{=}\;X_j^\top QZ \;\Big| \left\{\|Z\| = \sqrt{\tilde a - \tilde d^\top \tilde d}\right\}.
\end{align*} 
Finally, for $\tilde Z \sim \mathcal{N}(0,I_{n-p}),$
\begin{align*}
    Z\; \Big| \left\{\|Z\| =\sqrt{\tilde a - \tilde d^\top \tilde d}\right\}&\overset{d}{=} \;\|Z\|\cdot \frac{Z}{\|Z\|}\;\Big|\; \left\{\|Z\| =\sqrt{\tilde a - \tilde d^\top \tilde d}\right\}\\
    &\overset{d}{=} \;\sqrt{\tilde a - \tilde d^\top \tilde d}\cdot \frac{Z}{\|Z\|}\;\Big|\; \left\{\|Z\| =\sqrt{\tilde a - \tilde d^\top \tilde d}\right\}\\
    &\overset{d}{=} \;\sqrt{\tilde a - \tilde d^\top \tilde d}\cdot \frac{Z}{\|Z\|}\\
    &\overset{d}{=} \;\sqrt{\tilde a - \tilde d^\top \tilde d}\cdot \frac{\tilde Z}{\|\tilde Z\|},
\end{align*}
where the third line follows from the fact that the direction $\frac{Z}{\|Z\|}$ and magnitude $\|Z\|$ of an isotropic normal random variable are independent, and the fourth line follows from the fact that $Z \overset{d}{=} \sigma \widetilde Z.$ Hence, we can draw $\tilde Z \sim \mathcal{N}(0,I_{n-p})$ and obtain 
\begin{equation}\label{eq:G}
\begin{aligned}
\tfrac{\sqrt{\tilde a -\tilde d^\top \tilde d }}{\|\tilde Z\|}\cdot X_j^\top Q\tilde Z\: \overset{d}{=}\:\left[X_j^\top (I_{n-1}- P_{X_{-j}})(Y-X_jb)\ \Big|\ \|Y-X_jb\|^2 = \tilde a,\ P_{X_{-j}}(Y-X_jb) = \tilde d \right],
\end{aligned}
    \end{equation}
    under $\hob:\beta_j = b.$
    
For $\tilde Z \sim \mathcal{N}(0,I_{n-p})$, define $G := \tfrac{\sqrt{\tilde a -\tilde d^\top \tilde d }}{\|\tilde Z\|}\cdot X_j^\top Q\tilde Z$. With $\tilde a$ and $\tilde d$ defined in \eqref{eq:adtilde}, the p-value in \eqref{eq:pselb1} for the test of $\hob:\beta_j=b$ can be written as 
\begin{equation}\label{eq:pselb_chisq}
\begin{aligned}
        \pselb (y)  & = \Pr_{\beta_j=b}  \left(  \tfrac{(Y-X_jb)^\top(P_X-P_{X_{-j}})(Y-X_jb)}{( Y-X_jb)^\top(I_{n-1}-P_X)(Y-X_jb)}  \geq
r (y-X_jb) \;\middle|\;
Y\in E_1\cap \tilde E_2(y) \cap \tilde E_3^b(y)\right)\\
& = \frac{\Pr_{\beta_j=b}  \left(  \tfrac{(Y-X_jb)^\top(P_X-P_{X_{-j}})(Y-X_jb)}{( Y-X_jb)^\top(I_{n-1}-P_X)(Y-X_jb)}  \geq
r (y-X_jb) , \; Y\in E_1\;\middle|\;
Y\in  \tilde E_2(y) \cap \tilde E_3^b(y)\right)}{\Pr_{\beta_j=b}  \left(   Y\in E_1\;\middle|\;
 \tilde E_2(y) \cap \tilde E_3^b(y)\right)}\\
&=  \frac{\Pr \left ( 
        \tfrac{\tilde d^\top \tilde d + G^\top \lambda^{-1} G }{\tilde a- \tilde d^\top \tilde d - G^\top \lambda^{-1} G}\geq r(y-X_jb)\,,\: \frac{\tilde d^\top \tilde d + G^\top \lambda^{-1} G + 2 G^\top b + 2 \tilde d^\top X_j b + b^2X_j^\top X_j}{\tilde a-\tilde d^\top \tilde d - G^\top \lambda^{-1} G} \geq c\right)}
        {\Pr\left(
        \frac{\tilde d^\top \tilde d + G^\top \lambda^{-1} G + 2 G^\top b + 2 \tilde d^\top X_j b + b^2X_j^\top X_j}{\tilde a-\tilde d^\top \tilde d - G^\top \lambda^{-1} G} \geq c\right)}\\
        & = \Pr \left( 
        \tfrac{\tilde d^\top \tilde d + G^\top \lambda^{-1} G }{\tilde a- \tilde d^\top \tilde d - G^\top \lambda^{-1} G}\geq r(y-X_jb)\;\middle| \; \tfrac{\tilde d^\top \tilde d + G^\top \lambda^{-1} G + 2 G^\top b + 2 \tilde d^\top X_j b + b^2X_j^\top X_j}{\tilde a-\tilde d^\top \tilde d - G^\top \lambda^{-1} G} \geq c\right),
    \end{aligned}
    \end{equation}
    where the third equality follows from the derivations of \eqref{eq:Fhov_sample} and \eqref{eq:FhoM_sample}.
    Further, \eqref{eq:pselb_chisq} can be readily approximated via Monte Carlo. 
    Using this p-value, we define
selective confidence intervals:
\begin{equation}\label{eq:CI}
    \CI:= \left\{b\in \R: \pselb(y) \geq \alpha\right\}.
\end{equation}

The following proposition establishes that $\CI$ attains the nominal coverage, conditional on rejection of $\hov$. We see this empirically in the left-hand panels of \cref{fig:CI_globalnull} and \cref{fig:CI_localnull}.
\begin{proposition}
  \label{prop:pselb-coverage}
For any $\alpha \in (0,1)$, $\CI$ in \eqref{eq:CI} provides nominal coverage for the parameter $\beta_j$ conditional on having rejected $\hov$ in \eqref{eq:hov}. That is, 
 \begin{equation}
\label{eq:selective_coverage}
\Pr\left( \beta_j \in \CI \mid Y\in E_1 \right) = 1-\alpha
\end{equation} 
for $E_1$ defined in \eqref{eq:rejectFov}.
\end{proposition}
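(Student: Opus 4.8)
The plan is to establish nominal selective coverage by test inversion, so the real content is a conditional Type~1 error statement for the family of hypotheses $\hob:\beta_j=b$. Specifically, the first step is to show that for \emph{every} fixed $b\in\R$, all nuisance values $\beta_{-j}$, and all $\sigma^2$,
\[
\Pr_{\beta_j=b}\bigl(\pselb(Y)\le\alpha \,\big|\, Y\in E_1\bigr)=\alpha,\qquad \alpha\in(0,1),
\]
which is the direct analogue of Theorem~\ref{thm:psel-type1} with the null value shifted from $0$ to $b$. Granting this, the proposition is immediate: by the definition of $\CI$ in \eqref{eq:CI}, the event $\{\beta_j\in\CI\}$ is exactly $\{\pselb(Y)\ge\alpha\}$ evaluated at the \emph{true} value $b=\beta_j$, so $\Pr(\beta_j\in\CI\mid Y\in E_1)=\Pr_{\beta_j}(\pselb(Y)\ge\alpha\mid Y\in E_1)=1-\alpha$ upon applying the first step with $b=\beta_j$ and the true $\beta_{-j}$, $\sigma^2$.

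To prove the first step, I would condition additionally on the statistic $S(Y):=Y^\top P_{X_{-j}}Y$, which is precisely the quantity fixed by the event $E(y)$ (it enters $\pselb$ only through $d=-y^\top P_{X_{-j}}y$ in \eqref{eq:cdr}). Writing $U:=(P_X-P_{X_{-j}})Y$ and $V:=(I_{n-1}-P_X)Y$, idempotence of the projections gives $Y^\top P_X Y=U^\top U+S(Y)$ and $Y^\top(I_{n-1}-P_X)Y=V^\top V$, so on the slice $\{S(Y)=s\}$ the selection event $E_1$ in \eqref{eq:rejectFov} becomes $\{U^\top U-c\,V^\top V\ge -s\}$, a function of $(U,V)$ alone; likewise the statistic appearing in \eqref{eq:pselb_chisq} equals $\lVert U-b(P_X-P_{X_{-j}})X_j\rVert^2/V^\top V$, again a function of $(U,V)$ and $b$ only. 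By Lemma~\ref{lemma:helper}, the vectors $U$, $V$, and $P_{X_{-j}}Y$ are mutually independent, and $S(Y)$ is a function of $P_{X_{-j}}Y$; hence conditioning on $S(Y)=s$ does not alter the joint law of $(U,V)$, and under $\hob$ that law---namely $U\sim\mathcal N(b(P_X-P_{X_{-j}})X_j,\sigma^2(P_X-P_{X_{-j}}))$ independent of $V\sim\mathcal N(0,\sigma^2(I_{n-1}-P_X))$---depends only on $b$ and $\sigma^2$, \emph{not} on $\beta_{-j}$. This is exactly the argument underlying Theorem~\ref{thm:psel-type1}, specialized to $M=\{j\}$ and extended to nonzero $b$ via the shift $Y\mapsto Y-bX_j$.

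It then remains to observe that $\pselb(y)$ is precisely the conditional survival function of the real-valued statistic $T(U,V):=\lVert U-b(P_X-P_{X_{-j}})X_j\rVert^2/V^\top V$ given $\{U^\top U-c\,V^\top V\ge -s\}$, evaluated at its realized value. Since $(U,V)$ is absolutely continuous, $T(U,V)$ has no atoms on that event, so the probability integral transform gives that $\pselb(Y)\mid\{S(Y)=s\}\cap E_1$ is exactly $\mathrm{Unif}(0,1)$ under $\hob$, for every admissible $s$. Integrating this out against the conditional law of $S(Y)$ given $Y\in E_1$ yields $\Pr_{\beta_j=b}(\pselb(Y)\le\alpha\mid Y\in E_1)=\alpha$, completing the first step and hence the proposition.

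The main obstacle, and the crux of the argument, is the data dependence of the conditioning event $E(y)$ in the definition of $\pselb$: the resolution is that $E(y)$ depends on the data only through $S(Y)=Y^\top P_{X_{-j}}Y$, so conditioning further on this statistic renders $E(Y)$ a fixed event, after which Lemma~\ref{lemma:helper} eliminates all dependence on the nuisance parameters $\beta_{-j}$. The only remaining technicality---that $T(U,V)$ is atomless on the slice-level selection event, so the integral transform holds with equality rather than merely as an inequality---follows from absolute continuity of $(U,V)$, and a measure-zero set of degenerate slices can be discarded harmlessly.
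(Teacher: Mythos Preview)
Your proposal is correct and follows essentially the same approach as the paper: both establish that $\pselb(Y)$ is exactly $\mathrm{Unif}(0,1)$ under $\hob$ conditional on $Y\in E_1$ (by the argument of Theorem~\ref{thm:psel-type1}, conditioning further on $S(Y)=Y^\top P_{X_{-j}}Y$ to eliminate the nuisance parameters via Lemma~\ref{lemma:helper} and then invoking the probability integral transform), and then conclude via test inversion. You have written out in full the ``very similar argument'' that the paper only gestures at, including the care about atomlessness; there is no gap.
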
  

\begin{proof}[Proof of Proposition~\ref{prop:pselb-coverage}] First we note that
  under the null $\hob:\beta_j = b$, the p-value $\pselb$ in
  \eqref{eq:pselb_chisq} follows a $\operatorname{Unif}(0,1)$ distribution,
  conditional on $Y\in E_1$ \eqref{eq:rejectFov}. This follows from a very
  similar argument to that in the proof of \cref{thm:psel-type1}. Hence,
\begin{align*}
\Pr\left(\beta_j \in \CI \mid Y\in E_1\right) &= \Pr\left(\pselb>\alpha  \mid Y\in E_1\right)\\
&=1- \Pr\left(\pselb\leq\alpha  \mid Y\in E_1\right)\\
& = 1-\alpha.
\end{align*}
\end{proof}

We observe that the p-value in \eqref{eq:pselb_chisq} depends on $X_j$ in a way
that cannot be deduced from statistics such as $\RR^2$, $\RSE$
\eqref{eq:RSE_R2}, and $\FhoM$ \eqref{eq:fstat}. Thus confidence intervals
cannot be computed retrospectively, in the sense of
Section \ref{sec:retrospective}.

\section{Extended simulation results}\label{app:extended_sim}
\renewcommand{\thefigure}{D.\arabic{figure}}
\renewcommand{\theequation}{D.\arabic{equation}}
\setcounter{figure}{0}
\setcounter{equation}{0}

\subsection{Alternative signal regime}

In the setting of \cref{sec:simulation}, we modify the simulation so that $\beta_j = 0.1$ for $j>1$. The results are shown in \cref{fig:triple_plot_appendix} and \cref{fig:CI_localnull}.

\begin{figure}[h]
    \centering
    \includegraphics[width=\linewidth]{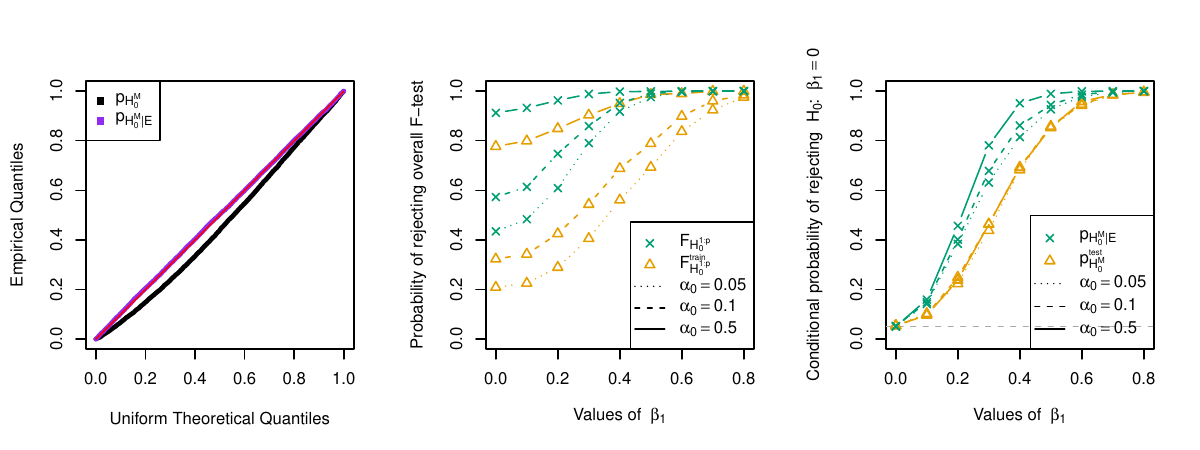}
    \caption{Details are as in \cref{fig:triple_plot}, but with $\beta_j = 0.1$ for $j>1$.}\label{fig:triple_plot_appendix}
\end{figure}

Compared to the setting in \cref{sec:simulation} and \cref{fig:triple_plot}, in the left panel of \cref{fig:triple_plot_appendix} the standard p-value $\phoM$ in \eqref{eq:pnaive} is less anti-conservative, and in the left panel of \cref{fig:CI_localnull} the standard confidence intervals achieve coverage closer to the nominal level, conditional on rejection of $\hov$ \eqref{eq:hov}. These improvements can be attributed to the fact that rejection of $\hov$ is highly likely when $\beta_2 = \ldots = \beta_p = 0.1$, so conditioning on $Y \in E$ \eqref{eq:E} has relatively little impact on the test of $\hoone$.  Thus, the performance of standard inference methods approaches the performance of selective inference methods.

The middle panel of \cref{fig:triple_plot_appendix} shows that when $\beta_2 = \cdots = \beta_p = 0.1$, the overall test of $\hov$ based on the full dataset has higher power than a test based only on the training data; both tests have higher power to reject $\hov$ than in the setting of \cref{fig:triple_plot}, in which $\beta_2 = \ldots =\beta_p=0$. Comparing the right panel of \cref{fig:triple_plot_appendix} to \cref{fig:triple_plot} reveals that the selective test of $\hoone$, which accounts for the rejection of $\hov$ and uses the full dataset, has greater power when $\beta_2 = \cdots = \beta_p = 0.1$ than when $\beta_2 = \ldots = \beta_p=0$. This is because in the former case, the rejection of $\hov$ is largely driven by the signal in the coefficients $\beta_j$ for $j > 1$, meaning that substantial signal about $\beta_1$ remains available for inference.

The middle and right-hand plot of \cref{fig:CI_localnull} exhibit similar patterns to those in \cref{fig:CI_globalnull}, so we do not provide further commentary.
\begin{figure}[h]
    \centering
    \includegraphics[width=\linewidth]{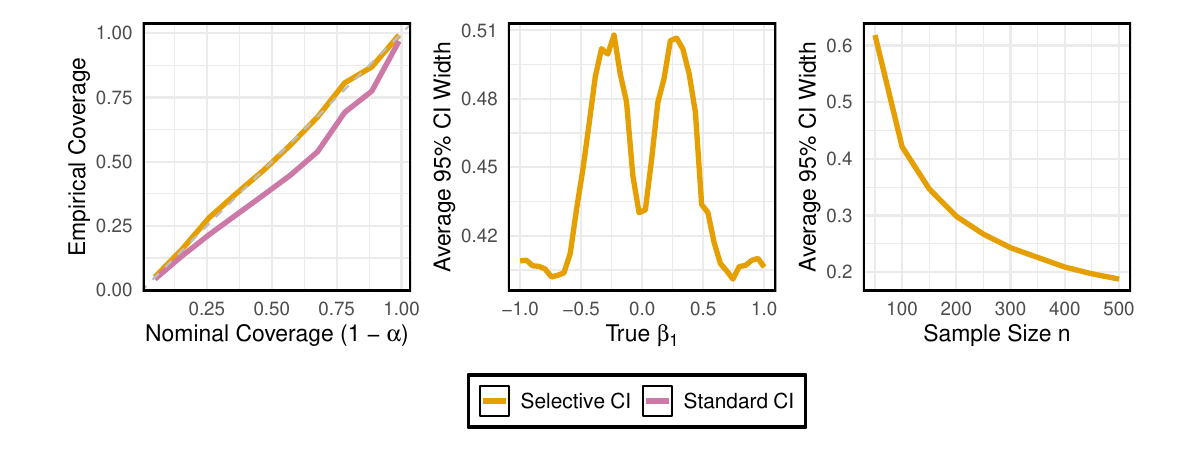}
    \caption{Details are as in \cref{fig:CI_globalnull}, but with $\beta_j = 0.1$ for $j>1$.}\label{fig:CI_localnull}
\end{figure}

\subsection{Sensitivity to departures from assumptions}

We study the behavior of our selective p-values under heteroskedasticty (\cref{fig:heterosked}), and non-normal errors (\cref{fig:non_normal} and \cref{fig:non_normal_anova}) via simulation.

In \cref{fig:heterosked}, we consider a heteroskedastic ANOVA setting. We generate a design matrix $X$ and data according to $Y=\epsilon$ (i.e., $\beta_1 = \ldots = \beta_p =0$ and so $\hov$ holds) as follows: 
\begin{equation}\label{eq:anova}
    X =  \begin{bmatrix}
    0 & 0 \\
    I_{20} & 0 \\
    0 & I_{20} 
\end{bmatrix}, \quad \epsilon\sim \mathcal{N}_{60}\left(0, \begin{bmatrix}
    1\cdot  I_{20} & 0 & 0 \\
    0 & 4\cdot I_{20} & 0\\
    0 & 0 & 2.25 \cdot I_{20}
\end{bmatrix}\right).
\end{equation}

The right-hand panel of \cref{fig:heterosked} compares the distribution of our selective p-values with the standard (unconditional) p-values in this heteroskedastic setting. While our p-value exhibits anti-conservative behavior, the deviation from the uniform is much less severe than that of the standard p-value. 

\begin{figure}[t]
\centering
\includegraphics[width = \textwidth]{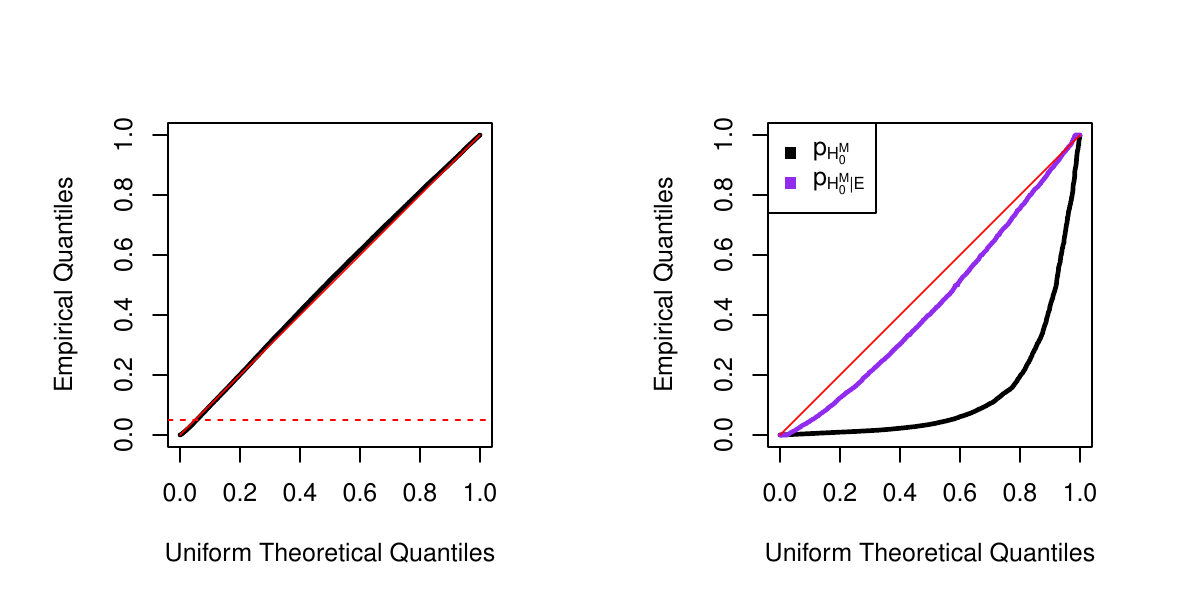}
\caption{For $n=60$, we generate the $n\times 2$ design matrix $X$ and data $Y$ according to \eqref{eq:anova} until 5000 datasets are obtained that reject $\hov$ with an $F$-test at level $\alpha = 0.05$. \emph{Left:} For all generated datasets (not just those that reject $\hov$), we display the p-value corresponding to the overall $F$-test on a uniform QQ-plot. We indicate the 45 degree line with a solid red line, and the horizontal dashed line indicates the 0.05 cutoff. Despite heteroskedasticity, these p-values appear uniform. \emph{Right:} For each of the 5000 datasets that reject $\hov$, we display both the proposed selective p-value \eqref{eq:pselchisq} and the standard (unconditional) p-value \eqref{eq:pnaive} for $\hoj$ on a uniform QQ-plot. The standard p-values are extremely anti-conservative, while the selective p-values for $\hoj$ are only mildly anti-conservative. }
\label{fig:heterosked}
\end{figure}

In \cref{fig:non_normal} and \cref{fig:non_normal_anova}, we examine the behavior of our p-values under non-Gaussian errors. Specifically, we consider the model $Y= \epsilon$ (i.e., $\beta_1 = \ldots = \beta_p =0$ and so $\hov$ holds) with error term $\epsilon \sim \mathcal{H}(0, 1)$, 
where $\mathcal{H}(0,1)$ denotes some parametric distribution with mean 0 and variance 1. In particular, we consider four choices for $\mathcal{H}$: (i) a Laplace distribution with location 0 and scale $1/\sqrt{2}$; (ii) an exponential distribution with rate 1, centered to have mean 0; (iii) a beta distribution with shape parameters 2 and 1, centered and scaled to have mean 0 and variance 1; and (iv) a Rademacher distribution that takes value $1$ with probability $0.5$ and $-1$ with probability $0.5$. These four distributions correspond to the four panels of \cref{fig:non_normal} and \cref{fig:non_normal_anova} with design matrix $X_{ij}
\overset{iid}{\sim}\mathcal{N}(0,1)$ and design matrix corresponding to the ANOVA model \eqref{eq:anova}, respectively. In \cref{fig:non_normal}, our selective p-values appear uniform, while the standard p-values are anti-conservative. In \cref{fig:non_normal_anova}, the standard p-values are extremely anti-conservative, while our selective p-values only appear anti-conservative in the case of Rademacher-distributed errors. The latter setting is somewhat pathological, as $Y = \epsilon$ and $\epsilon$ Rademacher implies that $Y\in \{-1,1\}$; in this setting it is not surprising that our proposal (with its finite-sample Gaussian guarantees) performs poorly.

\begin{figure}[]
\centering
\begin{tabular}{cc}

\textbf{Laplace distribution} & \textbf{Exponential distribution} \\[3pt]

\includegraphics[width=0.38\textwidth, trim={2cm 0cm 2cm 1.75cm},clip]{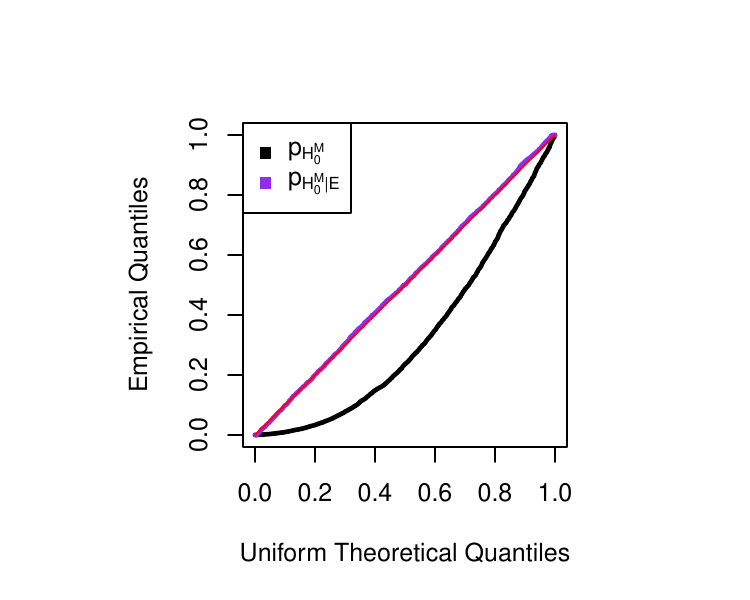} &
\includegraphics[width=0.38\textwidth, trim={2cm 0cm 2cm 1.75cm},clip]{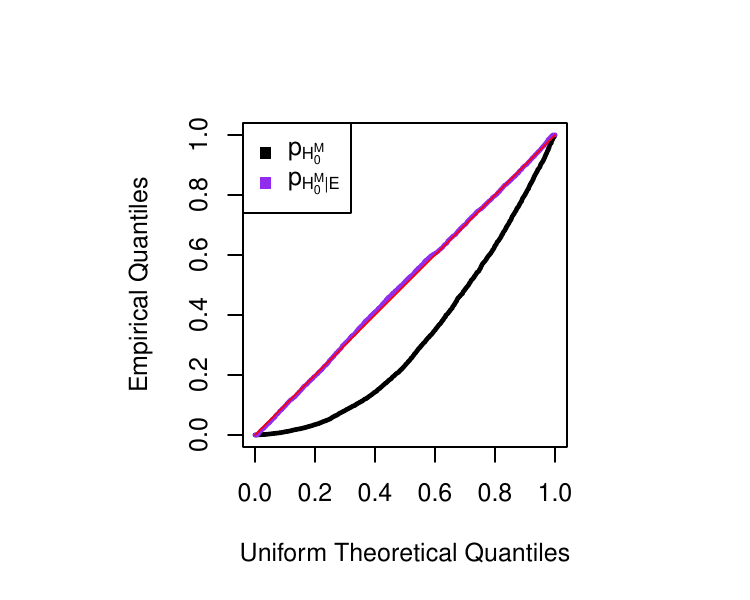} \\[6pt]

\textbf{Beta distribution} & \textbf{Rademacher distribution} \\[3pt]

\includegraphics[width=0.38\textwidth, trim={2cm 0cm 2cm 1.75cm},clip]{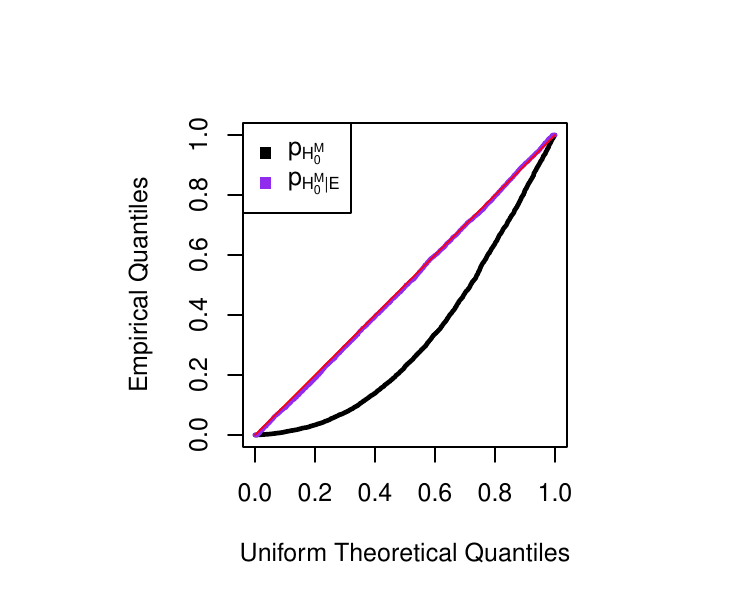} &
\includegraphics[width=0.38\textwidth, trim={2cm 0cm 2cm 1.75cm},clip]{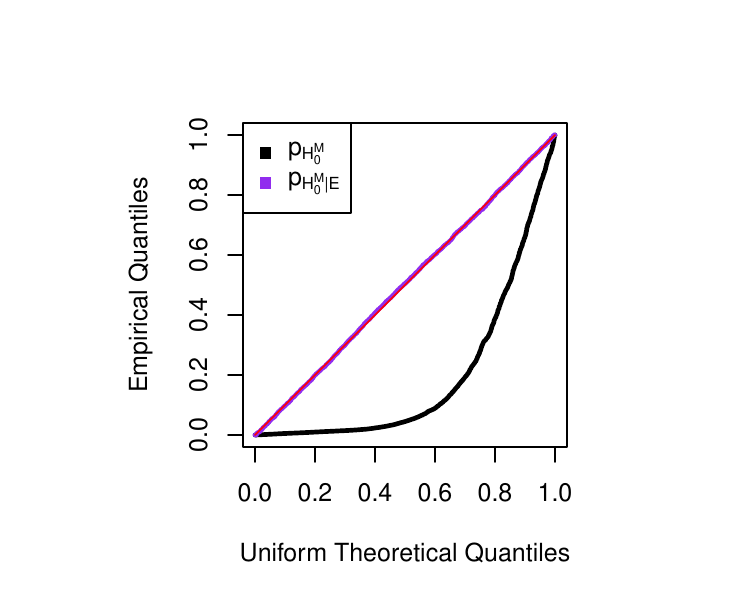}

\end{tabular}

\caption{For $n= 30, $ we simulate an $n\times 2$ design matrix $X$ with $X_{ij}
\overset{iid}{\sim}\mathcal{N}(0,1)$, and $n$ realizations of $Y\overset{iid}{\sim}\mathcal{H}(0,1)$, until we obtain 5000 datasets that reject $\hov$. Here, $\mathcal{H}$ takes four forms: (i) a Laplace distribution with location 0 and scale $1/\sqrt{2}$; (ii) an exponential distribution with rate 1, centered to have mean 0; (iii) a beta distribution with shape parameters 2 and 1, centered and scaled to have mean 0 and variance 1; and (iv) a Rademacher distribution that takes value $1$ with probability $0.5$ and $-1$ with probability $0.5$. For each dataset, we display the selective p-value \eqref{eq:pselchisq} and the standard p-value \eqref{eq:pnaive} for $\hoj$ on a uniform QQ-plot. Across all four non-Gaussian settings considered, our selective p-value for $\hoj$ appears uniform, while the standard p-values for $\hoj$ are anti-conservative because they do not account for rejection of $\hov.$
}
\label{fig:non_normal}
\end{figure}

\begin{figure}[]
\centering
\begin{tabular}{cc}

\textbf{Laplace distribution} & \textbf{Exponential distribution} \\[3pt]

\includegraphics[width=0.38\textwidth, trim={2cm 0cm 2cm 1.75cm},clip]{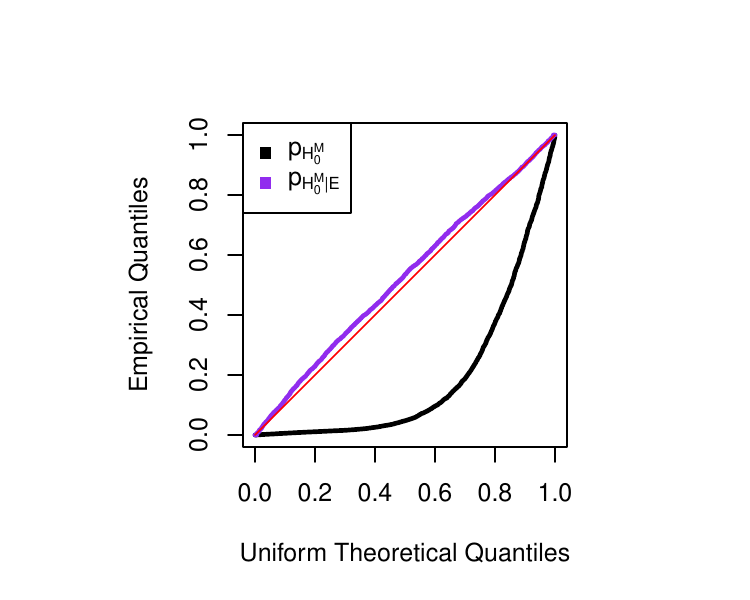} &
\includegraphics[width=0.38\textwidth, trim={2cm 0cm 2cm 1.75cm},clip]{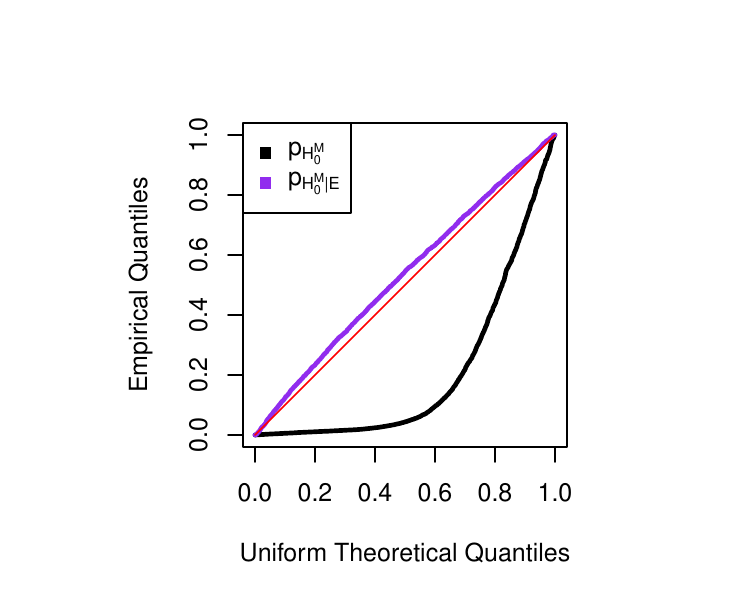} \\[6pt]

\textbf{Beta distribution} & \textbf{Rademacher distribution} \\[3pt]

\includegraphics[width=0.38\textwidth, trim={2cm 0cm 2cm 1.75cm},clip]{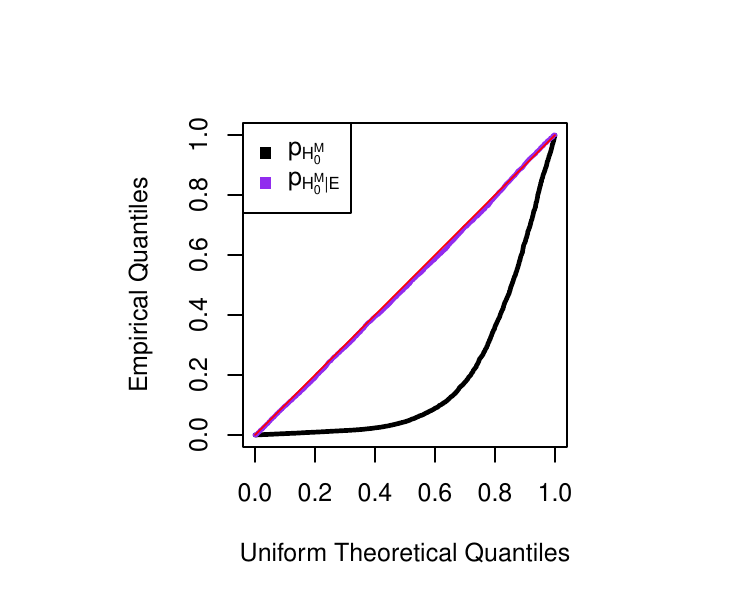} &
\includegraphics[width=0.38\textwidth, trim={2cm 0cm 2cm 1.75cm},clip]{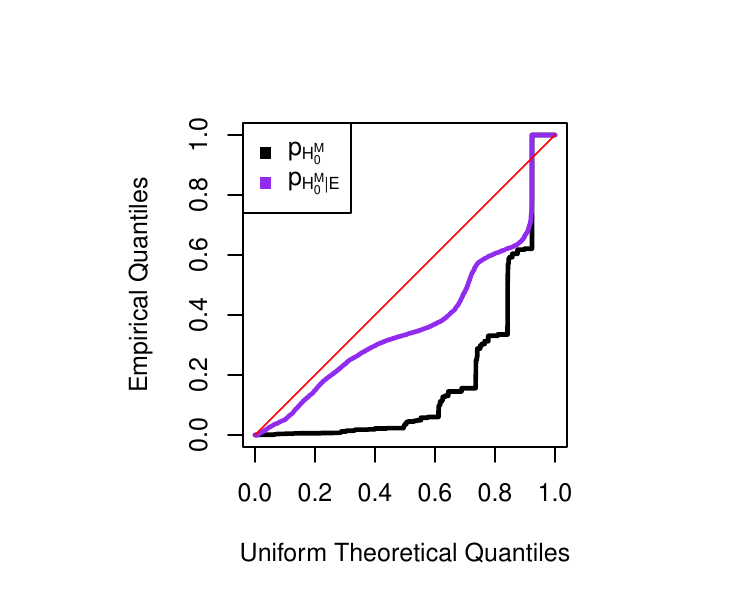}

\end{tabular}

\caption{For $n= 30, $ we simulate an $n\times 2$ design matrix with 
$X =  \begin{bmatrix}
    0 & 0 \\
    I_{10} & 0 \\
    0 & I_{10} 
\end{bmatrix}$, and $n$ realizations of $Y_i\overset{iid}{\sim}\mathcal{H}(0,1)$, until we obtain 5000 datasets that reject $\hov$. Here, $\mathcal{H}$ takes four forms: (i) a Laplace distribution with location 0 and scale $1/\sqrt{2}$; (ii) an exponential distribution with rate 1, centered to have mean 0; (iii) a beta distribution with shape parameters 2 and 1, centered and scaled to have mean 0 and variance 1; and (iv) a Rademacher distribution that takes value $1$ with probability $0.5$ and $-1$ with probability $0.5$. For each dataset, we display the selective p-value \eqref{eq:pselchisq} and the standard p-value \eqref{eq:pnaive} for $\hoj$ on a uniform QQ-plot. Across all four non-Gaussian settings considered, the standard p-values for $\hoj$ are highly anti-conservative, in part because they do not account for rejection of $\hov.$ For Laplace and exponential errors, our selective p-values for $\hoj$ are slightly conservative, whereas for Beta-distributed errors they appear uniform. In contrast, under Rademacher-distributed errors --- an extreme violation of our modeling assumptions since $Y=\epsilon$ with Rademacher errors implies that $Y$ is binary --- our selective p-values for $\hoj$ behave erratically and are anti-conservative. 
}
\label{fig:non_normal_anova}
\end{figure}

\section{Geometric intuition}
\label{app:geometry}
\renewcommand{\thefigure}{D.\arabic{figure}}
\renewcommand{\theequation}{D.\arabic{equation}}
\setcounter{figure}{0}
\setcounter{equation}{0}

In this section we explain the geometric view of the F-screening procedure
displayed in \cref{fig:3D_geometry} and \cref{fig:2D_geometry}. For simplicity, we assume that \(\beta_{0} =
0\) in the context of \eqref{eq:model}. We let $\alphaov = \alpha = 0.05.$

\subsection{ Figure~\ref{fig:2D_geometry}}
Let $ Y = \begin{bmatrix} Y_1 \\ Y_2 \end{bmatrix}$, $X = \begin{bmatrix} 1 & 0
  \\ 0 & 1 \end{bmatrix}$, and $\sigma^2 = 1.$ In this
setting, we assume that the error variance is known, and so we conduct $\chi^2$-tests instead of $F$-tests. 
\begin{itemize}
    \item Step 1 of Box~\ref{box:box1}: Under $ H_0^{1:p}: \beta_1 = \beta_2 = 0 $, we have that 
    $$
    \frac{Y^T P_X Y}{\sigma^2} = Y^T Y = Y_1^2 + Y_2^2 \sim \chi^2_2 .
    $$
    Thus, if the observed data $(y_1,y_2)$ lies outside the circle defined by $Y_1^2 + Y_2^2 = F_{\chi^2_2}^{-1}(1-\alphaov)$, then we reject this overall test, where $F_{\chi^2_p}$ is the CDF of a $\chi^2_p$ random variable. This corresponds to the circle in both plots of \cref{fig:2D_geometry}.
    \item Step 2 of Box~\ref{box:box1} (standard inference, left plot of Figure~\ref{fig:2D_geometry}): Under $\hoone: \beta_1 = 0$, we have that 
    \begin{align*}
    \frac{Y^T (P_X - P_{X_{-1}}) Y}{\sigma^2} &= Y^T \left(\begin{bmatrix} 1 & 0 \\ 0 & 1 \end{bmatrix} - \begin{bmatrix}0\\ 1\end{bmatrix}\left( \begin{bmatrix}0& 1\end{bmatrix} \begin{bmatrix}0\\ 1\end{bmatrix} \right)^{-1}\begin{bmatrix}0& 1\end{bmatrix} \right) Y \\
    &= Y_1^2\sim\chi_1^2,
    \end{align*}
    unconditional on the rejection of $\hov.$ Thus, if the observed data $(y_1,y_2)$ satisfies $|y_1|\geq \sqrt{F_{\chi^2_1}^{-1}(1-\alpha)}$, then we reject this standard test of $\hoone$. This corresponds to the vertical lines in the left plot of \cref{fig:2D_geometry}. 
    \item Step 2 of Box~\ref{box:box1} (selective inference, right plot in Figure~\ref{fig:2D_geometry}): We construct a selective p-value for $\hoone$ based on $\chi^2$-tests in a manner similar to \cref{prop:chisq}. We arrive at the p-value 
    \begin{align*}
    &p_{\chi}(y):= \Pr_{\beta_1 = 0}\Big( Y^\top(P_X-P_{X_{-1}})Y\geq \sigma^2y^\top(P_X-P_{X_{-1}})y\\
    & \quad \quad \quad \quad \quad \quad \mid  Y^\top(P_X-P_{X_{-1}})Y\geq \sigma^2 F_{\chi^2_2}^{-1}(1-\alphaov) -  y^\top P_{X_{-1}}y\Big)\\
    & = \Pr\left( W\geq \sigma^2y^\top(P_X-P_{X_{-1}})y\mid  W\geq \sigma^2 F_{\chi^2_2}^{-1}(1-\alphaov) -  y^\top P_{X_{-1}}y\right)\\
    & = \min \left \{1, \frac{\Pr\left( W\geq \sigma^2y^\top(P_X-P_{X_{-1}})y\right)}{\Pr\left(W\geq \sigma^2F_{\chi^2_2}^{-1}(1-\alphaov) -  y^\top P_{X_{-1}}y\right)}\right\} \\
    & = \min \left \{1, \frac{1-F_{\chi_1^2}\left( \sigma^2y^\top(P_X-P_{X_{-1}})y\right)}{1-F_{\chi_1^2}\left( \sigma^2F_{\chi^2_2}^{-1}(1-\alphaov) -  y^\top P_{X_{-1}}y\right)}\right\},
\end{align*}
where $W\sim\chi^2_1$. In our setting, $\sigma^2y^\top(P_X-P_{X_{-1}})y = y_1^2$ and $y^\top P_{X_{-1}}y =  y_2^2$. Thus, if the observed data $(y_1, y_2)$ lead to 
$$p_\chi(y)= \min \left \{1, \frac{1-F_{\chi_1^2}\left( y_1^2\right)}{1-F_{\chi_1^2}\left(F_{\chi^2_2}^{-1}(1-\alphaov) -  y_1^2\right)}\right\}\leq 
\alpha,$$ 
then we reject using the selective test of $\hoone.$ This corresponds to the purple region of the right plot in \cref{fig:2D_geometry}.
\end{itemize}

\subsection{Figure~\ref{fig:3D_geometry}}

Now, we assume that $\sigma^2$ is unknown, 
$$ Y = \begin{bmatrix} Y_1 \\ Y_2
  \\ Y_3 \end{bmatrix}, \quad \text{and} \quad X = \begin{bmatrix} 1 & 0 \\ 0 & 1
  \\ 0 & 0 \end{bmatrix} .$$ 

Since $\sigma^2$ is unknown, we use $F$-statistics to conduct the hypothesis tests. 
\begin{itemize}
    \item Step 1 of Box~\ref{box:box1}: Under $H_0^{1:p}: \beta_1 = \beta_2 = 0$, 
    $$
    \frac{Y^T P_X Y}{Y^T (I_{n-1}- P_X) Y}  = \frac{Y_1^2+Y_2^2}{Y_3^2}\sim F_{2,1}.
    $$
    Thus we reject $\hov$ if
    $$\frac{y_1^2+y_2^2}{y_3^2} \geq  F^{-1}_{2,1} (1-\alphaov).$$ The rejection boundary corresponds to the double cone in both panels of \cref{fig:3D_geometry}.
    \item Step 2 of Box~\ref{box:box1}: Under $\hoone: \beta_1 = 0$,
    \begin{align*}
     \frac{Y^T (P_X - P_{X_{-1}}) Y}{Y^T (I_{n-1}- P_X) Y}
    &= \frac{Y_1^2}{Y_3^2}\sim F_{1,1},
    \end{align*}
    unconditional on rejection of $\hov.$
    Therefore, we reject $\hoone$ if
    $$\frac{y_1^2}{y_3^2} \geq F^{-1}_{1,1} (1-\alpha) =: b.
    $$
    Thus the rejection boundary corresponds to the planes defined by $y_1/y_3\geq \sqrt{b}$ and $y_1/y_3\geq -\sqrt{b}$ in both panels of \cref{fig:3D_geometry}.
    
\end{itemize}

\section{Proofs of theoretical results in Sections~\ref{sec:method} and \ref{sec:retrospective}}
\renewcommand{\thefigure}{F.\arabic{figure}}
\renewcommand{\theequation}{F.\arabic{equation}}
\renewcommand{\theremark}{F.\arabic{remark}}
\setcounter{figure}{0}
\setcounter{equation}{0}
\setcounter{remark}{0}

\label{app:proofs}

In this appendix, we provide proofs for all theorems and
propositions from the main text. We begin by stating and proving a lemma
that will be useful.

\subsection{A useful lemma}
\begin{lemma} \label{lemma:helper}
The following statements hold under the model $Y= X\beta+ \epsilon$ where $X\in\R^{(n-1)\times p }$ and \(\epsilon \sim \mathcal{N}_{n - 1} \left( 0, \sigma^2 I_{n - 1}
\right)\):
\begin{enumerate}
\item Under $\hoM: \beta_M=0$, we have that $\tfrac{1}{\sigma^2}Y^\top (P_X - P_{{X}_{-M}}) Y  \sim \chi_m^2$. 
\item $\tfrac{1}{\sigma^2}Y^\top (I_{n-1} - P_X) Y \sim \chi^2_{n-p-1}$.
\item In general, the distribution of $Y^\top  P_{{X}_{-M}}  Y$ involves $\beta_j$ for $j\notin M$. 
\item $Y^\top (P_X - P_{{X}_{-M}}) Y $, $Y^\top (I_{n-1} - P_X) Y $, and $Y^\top P_{{X}_{-M}}  Y $ are mutually independent. 
   \end{enumerate}
\end{lemma}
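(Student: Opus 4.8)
The plan is to reduce all four claims to a few elementary properties of the projection matrices $P_X - P_{X_{-M}}$, $I_{n-1}-P_X$, and $P_{X_{-M}}$. The starting point is that, since $\mathrm{col}(X_{-M}) \subseteq \mathrm{col}(X)$, we have $P_X P_{X_{-M}} = P_{X_{-M}} P_X = P_{X_{-M}}$. From this I would deduce: (a) each of the three matrices is symmetric and idempotent, hence an orthogonal projection; (b) the product of any two distinct ones among them vanishes, e.g. $(P_X - P_{X_{-M}})(I_{n-1}-P_X) = P_X - P_X - P_{X_{-M}} + P_{X_{-M}} = 0$, and similarly $(P_X - P_{X_{-M}})P_{X_{-M}} = 0$ and $(I_{n-1}-P_X)P_{X_{-M}} = 0$; and (c) their ranks are $m$, $n-p-1$, and $p-m$, summing to $n-1$.

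For parts 1 and 2 I would kill the mean. Because $(I_{n-1}-P_X)X = 0$, we get $(I_{n-1}-P_X)Y = (I_{n-1}-P_X)\epsilon$, so $\tfrac{1}{\sigma^2}Y^\top(I_{n-1}-P_X)Y = \tfrac{1}{\sigma^2}\epsilon^\top(I_{n-1}-P_X)\epsilon$; since $\epsilon \sim \mathcal{N}_{n-1}(0,\sigma^2 I_{n-1})$ and $I_{n-1}-P_X$ is an orthogonal projection of rank $n-p-1$, this quadratic form is $\chi^2_{n-p-1}$ for every $\beta$, giving part 2. For part 1, under $\hoM$ the components of $\beta$ indexed by $M$ vanish, so $X\beta = X_{-M}\beta_{-M} \in \mathrm{col}(X_{-M})$, whence $(P_X - P_{X_{-M}})X\beta = X_{-M}\beta_{-M} - X_{-M}\beta_{-M} = 0$; thus $(P_X-P_{X_{-M}})Y = (P_X-P_{X_{-M}})\epsilon$ and the same reasoning yields $\tfrac{1}{\sigma^2}Y^\top(P_X-P_{X_{-M}})Y \sim \chi^2_m$. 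For part 3 I would simply exhibit dependence through the first moment: $\E\!\left[Y^\top P_{X_{-M}}Y\right] = \sigma^2(p-m) + \|P_{X_{-M}}X\beta\|_2^2$, where $P_{X_{-M}}X\beta = P_{X_{-M}}X_M\beta_M + X_{-M}\beta_{-M}$ manifestly varies with $\beta_{-M}$ (already when $\beta_M = 0$ it equals $\sigma^2(p-m) + \|X_{-M}\beta_{-M}\|_2^2$), so the distribution of $Y^\top P_{X_{-M}}Y$ depends on $\beta_j$ for $j \notin M$.

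For part 4, set $V_1 := (P_X-P_{X_{-M}})Y$, $V_2 := (I_{n-1}-P_X)Y$, $V_3 := P_{X_{-M}}Y$, and stack them into a single vector, which is jointly Gaussian since it is a linear image of $Y$. Its cross-covariance blocks are $\sigma^2$ times the pairwise products of the three projection matrices, all of which vanish by step (b); hence the stacked vector has block-diagonal covariance, and for a jointly Gaussian vector this is equivalent to mutual (not merely pairwise) independence of $V_1, V_2, V_3$. Finally, because each of the three matrices is symmetric and idempotent, $Y^\top(P_X-P_{X_{-M}})Y = \|V_1\|_2^2$, $Y^\top(I_{n-1}-P_X)Y = \|V_2\|_2^2$, and $Y^\top P_{X_{-M}}Y = \|V_3\|_2^2$, so each quadratic form is a Borel function of a single one of the mutually independent blocks, and therefore the three quadratic forms are mutually independent.

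I do not anticipate a genuine obstacle here: this is all standard linear-model algebra resting on the identity $P_X P_{X_{-M}} = P_{X_{-M}}$. The only step that warrants a moment of care is in part 4, where one must upgrade pairwise zero covariance to mutual independence using joint Gaussianity, and then transfer independence of the vectors to independence of the quadratic forms via the measurable-function argument.
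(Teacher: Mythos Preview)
Your proposal is correct and follows essentially the same route as the paper: the key identity $P_X P_{X_{-M}} = P_{X_{-M}}$ drives the vanishing of the cross-products, and part 4 is proved by stacking the three projected vectors, using joint Gaussianity with block-diagonal covariance to get mutual independence, and then passing to the squared norms. The only cosmetic differences are that for parts 1--2 you annihilate the mean directly whereas the paper invokes the noncentral $\chi^2$ result and checks the noncentrality parameter is zero, and for part 3 you exhibit dependence through the first moment while the paper writes down the full noncentral $\chi^2$ distribution; neither difference is substantive.
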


\begin{proof}[Proof of \cref{lemma:helper}]

We begin by proving the fourth statement of \cref{lemma:helper}. First, note that
\(P_{X} - P_{X_{-M}}\), \(I_{n-1}- P_{X}\) and $P_{X_{-M}}$ are projection
matrices, and thus they are symmetric and idempotent. As a result, it follows that
$$Y^\top(P_X-P_{X_{-M}})Y = Y^\top(P_X-P_{X_{-M}})^\top (P_X-P_{X_{-M}}) Y = \left \lVert (P_X-P_{X_{-M}})Y \right \rVert_2^2.$$
Similarly, 
$$Y^\top (I_{n-1}-P_X)Y = \left \lVert (I_{n-1}-P_X)Y \right \rVert_2^2 \hspace{3mm} \text{and} \hspace{3mm} Y^\top P_{X_{-M}}Y =\left \lVert P_{X_{-M}}Y \right \rVert_2^2.$$
Thus it suffices to show that $(P_X-P_{X_{-M}})Y , $ $(I_{n-1}-P_X)Y , $ and
$P_{X_{-M}}Y$ are mutually independent.
Letting
\begin{equation*}
A =
\begin{bmatrix}
    P_X - P_{X_{-M}}\\
    I_{n-1} - P_X\\
    P_{X_{-M}}
\end{bmatrix},
\end{equation*}
observe that \(A Y \sim \mathcal{N}_{3(n-1)} \left(A X\beta, \sigma^2 A A^\top
\right)\).
Since this is a (degenerate) multivariate normal distribution, proving that $(P_X-P_{X_{-M}})Y , $ $(I_{n-1}-P_X)Y , $ and $P_{X_{-M}}Y$ are mutually independent is equivalent to proving that the off-diagonal blocks of the covariance matrix are all zero. The off-diagonal blocks of the covariance matrix, omitting the constant $\sigma^2$, are as follows: 
\begin{align*}
    &(P_X-P_{X_{-M}})(I_{n-1}-P_X)= (P_X-P_{X_{-M}}-P_X+P_{X_{-M}}P_X)= P_{X_{-M}}P_X-P_{X_{-M}}\\
    &(I_{n-1}-P_X)P_{X_{-M}} = P_{X_{-M}}- P_XP_{X_{-M}}\\
    &(P_X-P_{X_{-M}})P_{X_{-M}} = P_XP_{X_{-M}}-P_{X_{-M}}.
\end{align*}
Furthermore, noting that $R(X_{-M})\subseteq R(X)$ where $R(\cdot)$ denotes the column space, it follows from properties of projection matrices that $P_XP_{X_{-M}}=P_{X_{-M}}$, and hence also that $P_{X_{-M}}P_X=P_{X_{-M}}$.

The fourth statement of \cref{lemma:helper} follows.

To prove claims 1--3, we recall that if $Y \sim \mathcal{N}_{n-1}(\mu,\sigma^2 I_{n-1})$ and $P$ is a projection matrix of rank $r$, then $Y^\top PY/\sigma^2 \sim \chi^2_r (\lambda)$ where $\lambda = \mu^\top P\mu/(2\sigma^2)$ is the noncentrality parameter. Using this result, to prove part 1 of \cref{lemma:helper}, observe that $\frac{1}{\sigma} Y \sim \mathcal{N}_{n-1}\left(\frac{1}{\sigma}X\beta,I_{n-1}\right)$ and $I_{n-1}-P_X$ is a projection matrix of rank $n-p-1$. 
It then follows that 
$$\frac{1}{\sigma^2}Y^\top (I_{n-1}-P_X)Y \sim \chi^2_{n-p-1}(0 ) \equiv \chi^2_{n-p-1}.$$

To prove part 2, note that \begin{align*}
    (P_X-P_{X_{-M}})X\beta &= (P_X-P_{X_{-M}})X_M\beta_M + (P_X-P_{X_{-M}})X_{-M}\beta_{-M}\\
    & = (P_X-P_{X_{-M}})X_M\beta_M\\
    & = P_X X_M\beta_M - P_{X_{-M}}X_M\beta_M\\
    & = X_M\beta_M - P_{X_{-M}}X_M\beta_M\\
    & = (I_{n-1}- P_{X_{-M}})X_M\beta_M.\\
\end{align*}
Since $(P_X - P_{X_{-M}})$ is a projection matrix of rank $\left\lvert M \right\rvert =m$, it follows that 
\begin{equation}\label{eq:PXPXM}
\frac{1}{\sigma^2}Y^\top (P_{X} - P_{X_{-M}}) Y\sim \chi^2_m\left(\frac{(X_M\beta_M)^\top (I_{n-1}-P_{X_{-M}})X_M\beta_M}{2\sigma^2}\right).
\end{equation}
Therefore, under the null $\hoM :\beta_M=0$, we have $\frac{1}{\sigma^2}Y^\top (P_{X} - P_{X_{-M}}) Y \sim \chi^2_{m}.$ 

Finally, to prove the third statement, note that $P_{X_{-M}}$ is a projection matrix of rank $p-m$. Hence,
\begin{equation}\label{eq:PXM}
    \frac{1}{\sigma^2}Y^\top P_{X_{-M}} Y \sim \chi^2_{p-m}  \left(\frac{1}{2\sigma^2}(X\beta)^\top P_{X_{-M}} X\beta\right).
\end{equation}
Unless $X_M^\top X_M = 0$, the non-centrality parameter involves both $\beta_M$ and $\beta_{-M}.$

\end{proof}

\subsection{Proof of Theorem~\ref{thm:psel-type1}}
We begin with the following remark, which can be viewed as an extension of the probability integral transform to accommodate conditioning.
\begin{remark}\label{fact:pit}
Let $T$ be a random variable and $C$ an event. Let
$$F_{T \mid C}(t) = \Pr(T\leq t \mid C),$$
i.e., the conditional CDF of $T$ given $C$. Define $D = F_{T \mid C}(T)$. Then 
\begin{align*}
    \Pr(D\leq \alpha \mid C) & = \Pr(F_{T \mid C}(T)\leq \alpha \mid C) \\
    & = \Pr(T\leq F_{T \mid C}^{-1}(\alpha) \mid C)\\
    & = F_{T \mid C}(F_{T \mid C}^{-1}(\alpha))\\
    & = \alpha.
\end{align*}
Thus $D \sim \operatorname{Unif} \left( 0, 1 \right)$.
\end{remark}

\begin{proof}[Proof of \cref{thm:psel-type1}]
\cref{fact:pit} implies that, for $\alpha'\in (0,1)$ and $E$ defined in \eqref{eq:E},
$$Pr_{\beta_M=0}(\psel \leq \alpha'\mid Y\in E ) = \alpha'.$$ 
Equivalently, 
$$\mathbb{E}\left[\mathbbm{1}\{\psel\leq\alpha'\}\mid Y\in E \right] = \alpha'.$$
Because $Y\in E\implies Y\in E_1$ for $E_1$ defined in \eqref{eq:rejectFov} and $E(y)$ defined in \eqref{eq:E}, when we apply the law of total expectation, we obtain
\begin{align*}
    Pr_{\beta_M = 0}&(\psel\leq \alpha' \mid Y\in E_1) \\
    &= \mathbb{E}\left[\bm{I}\{\psel\leq\alpha'\}\mid Y\in E_1 \right]\\
    &= \mathbb{E}\Big[ \mathbb{E}\left[\bm{I}\{\psel\leq\alpha'\}\mid Y\in E_1\cap  E_2(y) \cap E_3(y)  \right]\mid Y\in E_1\Big]\\
    &= \mathbb{E}\Big[ \mathbb{E}\left[\bm{I}\{\psel\leq\alpha'\}\mid Y\in E \right]\mid Y\in E_1\Big]\\
    & = \mathbb{E} [\alpha'\mid Y\in E_1]\\
    & = \alpha'.
\end{align*}
\end{proof}

\subsection{Proof of Proposition \ref{prop:chisq}}
\begin{proof}[Proof of Proposition \ref{prop:chisq}]

First, note that both $\FhoM$ and $\Fov$ can be written in terms of the three quadratic forms $ Y^\top P_{X_{-M}}Y$, $\|Y\|^2$, and $ Y^\top (P_X-P_{X_{-M}})Y$. 
Indeed,
\begin{align*}
    \FhoM &= \tfrac{n-p-1}{m}\cdot \frac{Y^\top (P_X-P_{X_{-M}})Y}{Y^\top(I_{n-1}-P_X)Y} 
    = \tfrac{n-p-1}{m}\cdot\frac{Y^\top (P_X-P_{X_{-M}})Y}{Y^\top Y- Y^\top P_{X_{-M}}Y - Y^\top (P_X-P_{X_{-M}})Y} \\
    \Fov &= \tfrac{n-p-1}{p} \cdot \frac{Y^\top P_X Y}{Y^\top (I_{n-1}-P_X)Y } = \tfrac{n-p-1}{p} \cdot \frac{Y^\top P_{X_{-M}}Y +  Y^\top (P_X-P_{X_{-M}})Y }{Y^\top Y- Y^\top P_{X_{-M}}Y -  Y^\top (P_X-P_{X_{-M}})Y.}
\end{align*}

Next, we characterize the distribution
of 
 $Y^\top (P_X-P_{X_{-M}})Y \mid \{\|Y\|^2= a ,\; \|P_{X_{-M}}Y\|^2=d\},$
where $a \equiv a(y) = \|y\|^2$ and $d \equiv d(y) = \|P_{X_{-M}}y\|^2$ per \eqref{eq:acdr}. By independence of $ (P_X-P_{X_{-M}})Y $ and $P_{X_{-M}}Y$ per \cref{lemma:helper}, observe that 
\begin{align*}
    &Y^\top (P_X-P_{X_{-M}})Y \;\Big|\;\left \{\|Y\|^2= a ,\; \|P_{X_{-M}}Y\|^2=d\right\}\\
    &\overset{d} = Y^\top (P_X-P_{X_{-M}})Y \;\Big|\; \left\{\|(I_{n-1}-P_{X_{-M}})Y\|^2= a-d ,\; \|P_{X_{-M}}Y\|^2=d\right\}\\
    &\overset{d} = Y^\top (P_X-P_{X_{-M}})Y \;\Big|\; \left\{\|(I_{n-1}-P_{X_{-M}})Y\|^2= a-d \right\}\\
    &\overset{d} = \| (P_X-P_{X_{-M}})Y\|^2 \;\Big|\; \left\{\|(I_{n-1}-P_{X_{-M}})Y\|= \sqrt{a-d }\right\}\\
    &\overset{d} = \| (P_X-P_{X_{-M}})(P_{X_{-M}}Y + (I_{n-1}-P_{X_{-M}})Y)\|^2 \;\Big|\;\left\{\|(I_{n-1}-P_{X_{-M}})Y\|= \sqrt{a-d }\right\}\\
    &\overset{d} = \| (P_X-P_{X_{-M}})(I_{n-1}-P_{X_{-M}})Y\|^2 \;\Big|\; \left\{\|(I_{n-1}-P_{X_{-M}})Y\|= \sqrt{a-d }\right\}.
\end{align*}
Let $R\sim \mathcal{N}\left(0,\sigma^2 I_{(n-1)-(p-m)}\right)$ and $Q\in \R^{(n-1)\times ((n-1)-(p-m))}$ such that $Q Q^\top = I_{n-1}-P_{X_{-M}}$ and $Q^\top Q = I_{(n-1)-(p-m)}.$ Then, under $\hoM:\beta_M = 0$, $(I_{n-1}-P_{X_{-M}})Y \overset{d}{=} QR$. It follows that 
\begin{align*}
    &Y^\top (P_X-P_{X_{-M}})Y \;\Big|\;\left\{\|Y\|^2= a ,\; \|P_{X_{-M}}Y\|^2=d\right\}\\
     &\overset{d} =\| (P_X-P_{X_{-M}})QR\|^2 \;\Big|\; \left\{\|QR\|= \sqrt{a-d }\right\}\\
    &=\| (P_X-P_{X_{-M}})QR\|^2 \;\Big|\;\left\{ \|R\|= \sqrt{a-d }\right\}.
\end{align*}
Further, letting $\tilde R  \sim \mathcal{N}\left(0, I_{(n-1)-(p-m)}\right)$,
\begin{align*}
    &R^\top Q^\top  (P_X-P_{X_{-M}})QR \;\Big|\; \left\{\|R\|= \sqrt{a-d }\right\}\\
    &\overset{d}{=} \|R\|^2 \cdot \frac{R^\top }{\|R\|} Q^\top  (P_X-P_{X_{-M}})Q\frac{R}{\|R\|} \;\Big|\;\left\{\|R\|= \sqrt{a-d }\right\}\\
    &\overset{d}{=} (a-d) \cdot \frac{R^\top}{\|R\|} Q^\top  (P_X-P_{X_{-M}})Q\frac{R}{\|R\|} \;\Big| \;\left\{\|R\|= \sqrt{a-d }\right\}\\
    &\overset{d}{=} (a-d) \cdot \frac{R^\top}{\|R\|} Q^\top  (P_X-P_{X_{-M}})Q\frac{R}{\|R\|}\\
    &\overset{d}{=} (a-d) \cdot \frac{\tilde R^\top}{\|\tilde R\|} Q^\top  (P_X-P_{X_{-M}})Q\frac{ \tilde R}{\|\tilde R\|}
\end{align*}
where the fourth line follows from the fact that the direction $\frac{ R}{\|R\|}$ and magnitude $\|R\|$ of an isotropic normal random variable are independent, and the fifth line follows from the fact that $R \overset{d}{=} \sigma \tilde R.$ Observing that $\text{rank}(Q^\top (P_X-P_{X_{-M}})Q) = m$, we can represent the distribution of the above ratio of quadratic forms as
$$\frac{\tilde R^\top (Q^\top (P_X-P_{X_{-M}})Q)\tilde R}{\|\tilde R\|^2} \overset{d}{=} \frac{W}{Z+W}$$
for $W\sim \chi^2_m $ and $Z\sim \chi^2_{n-p-1}.$ We conclude that
\begin{align*}
    Y^\top (P_X-P_{X_{-M}})Y \;\Big|\;\left\{\|Y\|^2= a ,\; \|P_{X_{-M}}Y\|^2=d\right\}\;
    \overset{d}{=}\; (a-d)
    \cdot \frac{W}{Z+W}.
\end{align*}

Finally, we can represent the selective p-value in \eqref{eq:pselective} as follows:
\begin{align*}
    \psel &=   \Pr_{\beta_M=0} \left(  \frac{ Y^\top (P_X - P_{{X}_{-M}}) Y }{Y^\top (I_{n-1}- P_X) Y}  \geq  r(y)  \;\middle|\right.\\
     & \hspace{10mm}\left. \|Y\|^2 = a, \; \|P_{X_{-M}}Y\|^2 = d, \:\; \frac{Y^\top P_X Y}{Y^\top (I-P_X)Y} \geq c \right) \\
    & = \Pr_{\beta_M=0} \left(  \frac{Y^\top (P_X-P_{X_{-M}})Y}{Y^\top Y- Y^\top P_{X_{-M}}Y - Y^\top (P_X-P_{X_{-M}})Y} \geq  r(y)\;\middle|\right. \\
    &  \hspace{10mm}  \left.\|Y\|^2 = a, \; \|P_{X_{-M}}Y\|^2 = d, \: \frac{Y^\top P_{X_{-M}}Y +  Y^\top (P_X-P_{X_{-M}})Y }{Y^\top Y- Y^\top P_{X_{-M}}Y -  Y^\top (P_X-P_{X_{-M}})Y}\geq c \right)\\
    & = \Pr_{\beta_M=0} \left(  \frac{Y^\top (P_X-P_{X_{-M}})Y}{a- d - Y^\top (P_X-P_{X_{-M}})Y} \geq  r(y)\;\middle|\right. \\
    &  \hspace{10mm}  \left.\|Y\|^2 = a, \; \|P_{X_{-M}}Y\|^2 = d, \: \frac{d +  Y^\top (P_X-P_{X_{-M}})Y }{a-d -  Y^\top (P_X-P_{X_{-M}})Y}\geq c \right)\\
    & = \frac{\Pr_{\beta_M=0} \left(  \frac{Y^\top (P_X-P_{X_{-M}})Y}{a- d - Y^\top (P_X-P_{X_{-M}})Y} \geq  r(y), \; \frac{d +  Y^\top (P_X-P_{X_{-M}})Y }{a-d -  Y^\top (P_X-P_{X_{-M}})Y}\geq c\;\middle| \|Y\|^2 = a, \; \|P_{X_{-M}}Y\|^2 = d \right)}{\Pr_{\beta_M=0} \left( \frac{d +  Y^\top (P_X-P_{X_{-M}})Y }{a-d -  Y^\top (P_X-P_{X_{-M}})Y}\geq c\;\middle| \|Y\|^2 = a, \; \|P_{X_{-M}}Y\|^2 = d \right)}\\
    & = \frac{\Pr \left(  \frac{(a-d)\frac{W}{Z+W}}{a- d - (a-d)\frac{W}{Z+W}} \geq  r(y), \; \frac{d +  (a-d)\frac{W}{Z+W} }{a-d -  (a-d)\frac{W}{Z+W}}\geq c \right)}{\Pr \left( \frac{d +  (a-d)\frac{W}{Z+W} }{a-d -  (a-d)\frac{W}{Z+W}}\geq c \right)}\\
     & =  \Pr \left(  \frac{W}{Z} \geq  r(y), \;\middle|\; \frac{aW+dZ}{(a-d)Z}\geq c \right),\\
\end{align*}
where the last line follows from the definition of conditional probability and the fact that 
$$\frac{(a-d)\frac{W}{Z+W}}{a- d - (a-d)\frac{W}{Z+W}} = \frac{(a-d)W}{(a-d)(Z+W) - (a-d)W} = \frac{W}{Z},$$
$$\frac{d +  (a-d)\frac{W}{Z+W} }{a-d -  (a-d)\frac{W}{Z+W}} = \frac{d(W+Z) + (a-d)W}{(a-d)(W+Z)-(a-d)W} = \frac{aW+dZ}{(a-d)Z}.$$

\end{proof}

\subsection{Proof of Proposition~\ref{prop:test_consistency}}

We begin with two lemmas.

\begin{lemma}\label{lemma:delta}
    Consider a fixed sequence of design matrices \(X_{n} \in \mathbb{R}^{n \times p}\) and corresponding realizations of the linear model \(Y_{n} = X_{n} \beta + \epsilon_{n}\) from \eqref{eq:model} where $\beta \neq 0$. Suppose that \(\frac{1}{n} X_{n}^{\top} X_{n} \to Q\) for some positive definite $Q$. Then $\exists \delta >0$ and $N>0$ such that 
    $$\Pr\left( \frac{a(Y_n)W+d(Y_n)Z_n}{(a(Y_n)-d(Y_n))Z_n}\geq c_n \right)\geq \delta$$
    for all $n>N$, where \(W \sim \chi^2_{m}\), \(Z_{n} \sim \chi^2_{n - p - 1}\), \(c_{n} = \frac{p}{n - p - 1} F^{-1}_{p, n - p - 1} \left( 1 - \alpha_{0} \right)\), $a(Y_n) = Y_n^\top Y_n$ and \(d(Y_n) =  Y_{n}^{\top} P_{\left( X_{n} \right)_{-M}} Y_{n}\).
\end{lemma}

\begin{proof}[Proof of \cref{lemma:delta}]
In what follows, we suppress subscripts on $X$ involving \(n\) to enhance readability.

To show that $\Pr\left( \frac{a(Y_n)W+d(Y_n)Z_n}{(a(Y_n)-d(Y_n))Z_n}\geq c_n \right)\geq 0$ as $n\to \infty$, first observe that
\begin{itemize}
    \item $c_n\cdot Z_n$ converges in probability to $q:=(\chi^2_p)^{-1}(1-\alphaov)$ as $n\to \infty$,
    \item $d(Y_n)$ is a scaled $\chi^2$ random variable with $p-m$ degrees of freedom and non-centrality parameter $\tfrac{1}{2\sigma^2}(X\beta)^\top P_{X_{-M}}(X\beta)$, 
    \item $a(Y_n)$ is a scaled $\chi^2$ random variable with $(n-1)$ degrees of freedom and non-centrality parameter $\tfrac{1}{2\sigma^2}(X\beta)^\top (X\beta)$.
\end{itemize}
Since $\beta \neq 0,$ the non-centrality parameters of $d(Y_n)$ and $a(Y_n)$ are possibly non-zero. By our assumption that $\frac{1}{n}X^\top X \to Q$ deterministically for some positive definite matrix $Q\in \R^{p\times p}$, it follows that $\frac{1}{n}X^\top X_{-M}\to Q_{\cdot, -M}$ and $\frac{1}{n}X_{-M}^\top X_{-M}\to Q_{-M, -M}$ for $Q_{\cdot , -M}\in \R^{p\times (p-m)}$ and positive definite $Q_{-M,-M}\in \R^{(p-m)\times(p-m)}$. Hence,
\begin{align*}
    \tfrac{1}{n}\beta^\top X^\top P_{X_{-M}}X\beta & = \beta^\top\left(\tfrac{1}{n}X^\top X_{-M}\right)\left(\tfrac{1}{n}X_{-M}^\top X_{-M}\right)^{-1}\left(\tfrac{1}{n}X_{-M}^\top X \right)\beta\\
    &\to \beta^\top Q_{\cdot , -M} (Q_{-M,-M})^{-1}Q_{\cdot , -M}^\top \beta.
\end{align*}
and 
\begin{align*}
    \tfrac{1}{n}\beta^\top X^\top X\beta
    &\to \beta^\top Q \beta.
\end{align*}
Thus, when $\beta \neq 0$, $d(Y_n)$ and $a(Y_n)$ are each distributed as a scaled $\chi^2$ random variable with $O(n)$ non-centrality parameters. The sequence of expectations of $d(Y_n)/n$ and $a(Y_n)/n$ converges and each has variance $O(1/n).$ Thus by Chebyshev's inequality, it follows that $\phi_n := d(Y_n)/n\overset{P}{\to}\phi^*$ and $\varphi_n := a(Y_n)/n\overset{P}{\to}\varphi^*$ for some fixed $\phi^*$ and $\varphi^*$, where ``$\overset{P}{\to}$'' denotes convergence in probability. 

Therefore, for fixed $\varepsilon>0$ and $\eta>0$, we have that $\Pr(|\phi_n-\phi^*|\geq \varepsilon)<\eta$, $\Pr(|\varphi_n-\varphi^*|\geq \varepsilon)<\eta$ and $\Pr(|c_nZ_n-q|\geq \varepsilon)<\eta$ for sufficiently large $n$. For such an $n$, define 
$$G_n(\varepsilon)
=\{(\phi_n,Z_n):|\phi_n-\phi^*|< \varepsilon,\;|\varphi_n-\varphi^*|< \varepsilon,\;|c_nZ_n - q|<\varepsilon\},$$ so that $\Pr(G_n^c(\varepsilon)) \leq 3\eta$.
Note that 
\begin{align*}
|\varphi_n - \varphi^*|<\varepsilon &\iff \varphi_n <  \varphi^* +\varepsilon\quad \text{and} \quad \varphi_n > \varphi^*-\varepsilon,\\
|\phi_n - \phi^*|<\varepsilon & \iff \phi_n < \phi^* +\varepsilon \quad \text{and} \quad \phi_n > \varphi^*-\varepsilon,\\
|c_nZ_n - q|<\varepsilon &\iff c_n Z_n<q+\varepsilon  \quad \text{and} \quad c_nZ_n > q-\varepsilon.
\end{align*}

Choose $\varepsilon>0$ small enough that $\phi^* >\varepsilon$. Then
\begin{align*}
\Pr&\left( \frac{a(Y_n)W+d(Y_n)Z_n}{(a(Y_n)-d(Y_n))Z_n}\geq c_n \right)\\
&=\Pr\left( \frac{n\varphi_nW+n\phi_nZ_n}{(n\varphi_n-n\phi_n)Z_n}\geq c_n \right)\\
&=\Pr\left( \frac{\varphi_nW+\phi_nZ_n}{\varphi_n-\phi_n}\geq c_nZ_n\right)\\
&\geq\Pr\left( \frac{\varphi_nW+\phi_nZ_n}{\varphi_n-\phi_n}\geq c_nZ_n, \;G_n(\varepsilon)\right)\\
&\geq\Pr\left( \frac{(\varphi^*-\varepsilon)W+(\phi^*-\varepsilon)Z_n}{(\varphi^*+\varepsilon)-(\phi^*-\varepsilon)}\geq q+\varepsilon , \;G_n(\varepsilon)\right)\\
&\geq\Pr\left( \frac{(\varphi^*-\varepsilon)W+(\phi^*-\varepsilon)Z_n}{(\varphi^*+\varepsilon)-(\phi^*-\varepsilon)}\geq q+\varepsilon \right)-\Pr(G_n^c(\varepsilon))\\
&\geq\Pr\left( Z_n\geq \frac{(\varphi^*+\varepsilon)-(\phi^*-\varepsilon)}{\phi^*-\varepsilon}\left(q+\varepsilon
\right)-\frac{\varphi^*-\varepsilon}{\phi^*-\varepsilon}W \right)-3\eta\\
&\geq\Pr\left( Z_n\geq \frac{\varphi^*-\phi^* + 2\varepsilon}{\phi^*-\varepsilon}\left(q+\varepsilon \right)\right)-3\eta,
\end{align*}

where the last line follows from the fact that $(\varphi^*-\varepsilon)W/(\phi^*-\varepsilon)$ must be non-negative.
Note that the mean of $Z_n$ grows at rate $O(n)$, and $\tfrac{\varphi^*-\phi^* + 2\varepsilon}{\phi^*-\varepsilon}\left(q+\varepsilon \right)$ is a fixed finite value.
Since $\eta$ can be arbitrarily close to $0$, we conclude that $\Pr\left( Z_n\geq\frac{\varphi^*-\phi^* + 2\varepsilon}{\phi^*-\varepsilon}\left(q+\varepsilon \right)\right)-3\eta\to1$ as $n\to \infty$. Thus, 
$$\Pr\left( \frac{a(Y_n)W+d(Y_n)Z_n}{(a(Y_n)-d(Y_n))Z_n}\geq c_n \right)$$
is bounded away from 0 in the limit when $\beta\neq 0 $.

\end{proof}

\begin{lemma}\label{lemma:E1}
    For the same assumptions on $X_n$ and $Y_n$ as in \cref{lemma:delta}, suppose $f:\R^n\to\R$ is a function that satisfies $\Pr(f(Y_n)>\varepsilon)\to 0 $ for every $\varepsilon >0$. Then $\Pr(f(Y_n)>\varepsilon\mid Y_n \in E_1)\to 0 $ for every $\varepsilon >0$, where $E_1$ is defined in \eqref{eq:rejectFov}.
\end{lemma}

\begin{proof}[Proof of \cref{lemma:E1}]
    In what follows, we use the notation $\Pr_{A}(\cdot)$ to denote probability taken with respect to a random variable $A$ in order to make explicit which quantity is treated as random in each probability calculation. 
    
    Note that $\Pr_{Y_n}(Y_n\in E_1)\geq \alpha_0$ where $\alphaov>0$ is the level of the $F$-test of $\hov$ in Step 1 of Box~\ref{box:box1}. Thus
\begin{equation*}\label{eq:cond_uncond}
\begin{aligned}
    & \Pr_{Y_n}\left(f(Y_n) > \varepsilon\right) \to 0  \\
    & \quad\implies \Pr_{Y_n}\left(f(Y_n) > \varepsilon\mid Y_n\in E_1\right)\Pr_{Y_n}\left(Y_n\in E_1\right) \\
    & \quad\quad \quad \quad \quad + \Pr_{Y_n}\left(f(Y_n) > \varepsilon\mid Y_n\notin E_1\right)\Pr_{Y_n}\left(Y_n\notin E_1\right) \to 0\\
    &\quad\implies \Pr_{Y_n}\left(f(Y_n) > \varepsilon\mid Y_n\in E_1\right)\Pr_{Y_n}\left(Y_n\in E_1\right)\to 0 \\
    &\quad \implies \Pr_{Y_n}\left(f(Y_n) > \varepsilon\mid Y_n\in E_1\right)\to 0.
\end{aligned}
\end{equation*}
\end{proof}

We now continue with the proof of \cref{prop:test_consistency}.

\begin{proof}[Proof of \cref{prop:test_consistency}]
A test is consistent if the probability of rejection under any fixed alternative converges to 1 as $n\to\infty$. Thus the test of $\hoM$ based on the selective p-value in \eqref{eq:pselchisq} is consistent if and only if the selective p-value converges to 0 under any fixed alternative. That is,
we want to show that if $\beta\neq 0$, then $\forall \varepsilon>0$,
\begin{align*}
    \Pr_{Y_n}&\left(\psel(Y_n) > \varepsilon \mid Y_n\in E_1\right) \\
    &= \Pr_{Y_n}\left(\Pr_{W,Z_n} \left(  \frac{W}{Z_n} \geq  r(Y_n), \;\middle|\; \frac{a(Y_n)W+d(Y_n)Z_n}{(a(Y_n)-d(Y_n))Z_n}\geq c_n \right)>\varepsilon \;\middle|\;Y_n\in E_1\right) \to 0.
\end{align*}
By \cref{lemma:E1}, if $\Pr_{Y_n}(\psel(Y_n)>\varepsilon)\to 0  $ then $\Pr_{Y_n}(\psel(Y_n)>\varepsilon\mid Y_n\in E_1)\to 0  $.
Thus it is sufficient to show that whenever $\beta\neq0,$ 
\begin{align*}
\Pr_{Y_n}&\left(\psel(Y_n)>\varepsilon\right) \\
&=\Pr_{Y_n}\left(\Pr_{W,Z_n} \left(  \frac{W}{Z_n} \geq  r(Y_n), \;\middle|\; \frac{a(Y_n)W+d(Y_n)Z_n}{(a(Y_n)-d(Y_n))Z_n}\geq c_n \right)>\varepsilon \;\middle|\;Y_n\in E_1\right)\to 0.
\end{align*}

We begin by showing that when $\beta_M\neq 0$, it holds that
$$\lim_{n\to\infty}\Pr_{Y_n,W,Z_n}\left(\frac{W}{Z_n} > r(Y_n)\right)=\lim_{n\to\infty}\Pr_{Y_n}\left(\Pr_{W,Z_n}\left(\frac{W}{Z_n} > r(Y_n)\right)\right)=0.$$
We have assumed that $\frac{1}{n}X^\top X \to Q$ deterministically for some positive definite matrix $Q\in \R^{p\times p}$, and it therefore follows that $\frac{1}{n}X_M^\top X_{-M}\to Q_{M, -M}$, $\frac{1}{n}X_{-M}^\top X_{-M}\to Q_{-M, -M}$, and $\frac{1}{n}X_{M}^\top X_{M}\to Q_{M, M}$ for $Q_{M, -M}\in \R^{m\times (p-m)}$ and positive definite matrices $Q_{M, M}\in \R^{m\times m}$ and $Q_{-M,-M}\in \R^{(p-m)\times(p-m)}$. Thus
\begin{align*}
    \tfrac{1}{n}X_M^\top (I-P_{X_{-M}})X_M &=\tfrac{1}{n}X_M^\top  X_M - \tfrac{1}{n}X_M^\top P_{X_{-M}} X_M\\
    & = \tfrac{1}{n}X_M^\top X_M - \tfrac{1}{n}X_M^\top X_{-M}(X_{-M}^\top X_{-M})^{-1}X_{-M}^\top X_M\\
    & =\tfrac{1}{n}X_M^\top X_M - \left(\tfrac{1}{n}X_M^\top X_{-M}\right)\left(\tfrac{1}{n}X_{-M}^\top X_{-M}\right)^{-1}\left(\tfrac{1}{n}X_{-M}^\top X_M\right)\\
    & \to Q_{M,M} - Q_{M, -M}(Q_{-M,-M})^{-1}Q_{M,-M}^\top.
\end{align*} 
It follows that $\frac{1}{2\sigma^2} (X_M\beta_M)^\top (I-P_{X_{-M}})X_M\beta_M=O(n)$ when $\beta_M\neq 0$. Hence, by \eqref{eq:PXPXM}, when $\beta\neq 0$, 
$\frac{1}{\sigma^2}Y_n^\top (P_X-P_{X_{-M}})Y_n$ is distributed as a $\chi^2$ random variable with a non-centrality parameter that is $O(n).$ The sequence of expectations of $\frac{1}{n}Y_n^\top (P_X-P_{X_{-M}})Y_n$ converges, and the variance is $O(1/n)$. Hence, by Chebyshev's inequality, it follows that $\frac{1}{n} Y_n^\top (P_X-P_{X_{-M}})Y_n\overset{P}{\to}C$ for some constant $C> 0$. Additionally observe that $\frac{1}{n}Y_n^\top (I-P_X)Y_n\overset{P}{\to} \sigma^2$, again by Chebyshev's inequality. Hence, by the continuous mapping theorem
$$ r(Y_n) =\frac{Y_n^\top (P_X - P_{{X}_{-M}}) Y_n}{Y_n^\top (I_{n-1}- P_X) Y_n}=\frac{\tfrac{1}{n}Y_n^\top (P_X - P_{{X}_{-M}}) Y_n}{\tfrac{1}{n}Y_n^\top (I_{n-1}- P_X) Y_n}\overset{P}{\to }\frac{C}{\sigma^2}>0.$$
Further, 
$W/Z_n = \left(\tfrac{1}{n}W\right)/\left(\tfrac{1}{n}Z_n \right)\overset{P}{\to }0/1 = 0 .$
That is, for any $\varepsilon >0, $ $\Pr_{W,Z_n}(W/Z_n>\varepsilon )\to 0.$
We conclude that 
$\Pr\left(W/Z_n>r(Y_n) \right) \overset{P}{\to} 0$.

Next, note that for any $y\in \R^n$,
\begin{align*}
	\Pr_{W,Z_n}\left(\frac{W}{Z_n} > r(y) \;\middle|\; \frac{a(Y_n)W+d(Y_n)Z_n}{(a(Y_n)-d(Y_n))Z_n}\geq c_n \right) &= \frac{\Pr_{W,Z_n}\left(\frac{W}{Z_n} > r(y) ,\;  \frac{a(Y_n)W+d(Y_n)Z_n}{(a(Y_n)-d(Y_n))Z_n}\geq c_n  \right)}{\Pr_{W,Z_n}\left( \frac{a(Y_n)W+d(Y_n)Z_n}{(a(Y_n)-d(Y_n))Z_n}\geq c_n \right)}\\
	& \leq\frac{\Pr_{W,Z_n}\left(W/Z_n > r(y) \right)}{\Pr_{W,Z_n}\left( \frac{a(Y_n)W+d(Y_n)Z_n}{(a(Y_n)-d(Y_n))Z_n}\geq c_n  \right)}.
\end{align*}
Thus, 
\begin{align*}
    \Pr_{Y_n}&\left(\Pr_{W,Z_n} \left(  \frac{W}{Z_n} \geq  r(Y_n), \;\middle|\; \frac{a(Y_n)W+d(Y_n)Z_n}{(a(Y_n)-d(Y_n))Z_n}\geq c_n \right) >\varepsilon\right) \\
    &\hspace{30mm}\leq\Pr_{Y_n}\left(\frac{\Pr_{W,Z_n}\left(W/Z_n > r(Y_n) \right)}{\Pr_{W,Z_n}\left( \frac{a(Y_n)W+d(Y_n)Z_n}{(a(Y_n)-d(Y_n))Z_n}\geq c_n  \right)}>\varepsilon\right).
\end{align*}
Since the denominator $\Pr_{W,Z_n}\left( \frac{a(Y_n)W+d(Y_n)Z_n}{(a(Y_n)-d(Y_n))Z_n}\geq c_n  \right)$ is bounded away from zero in the limit by \cref{lemma:delta}, we conclude that 
\begin{align*}
\Pr_{W,Z_n}\left(W/Z_n > r(Y_n) \right)&\overset{P}{\to} 0 \implies \frac{\Pr_{W,Z_n}\left(W/Z_n > r(Y_n) \right)}{\Pr_{W,Z_n}\left( \frac{a(Y_n)W+d(Y_n)Z_n}{(a(Y_n)-d(Y_n))Z_n}\geq c_n  \right)} \overset{P}{\to} 0 \\
&\implies \Pr_{Y_n}\left(\frac{\Pr_{W,Z_n}\left(W/Z_n > r(Y_n) \right)}{\Pr_{W,Z_n}\left( \frac{a(Y_n)W+d(Y_n)Z_n}{(a(Y_n)-d(Y_n))Z_n}\geq c_n  \right)}>\varepsilon\right)\to 0\\
& \implies \Pr_{Y_n}\left(\Pr_{W,Z_n}\left(\frac{W}{Z_n} > r(Y_n) \;\middle|\; \frac{a(Y_n)W+d(Y_n)Z_n}{(a(Y_n)-d(Y_n))Z_n}\geq c_n \right) >\varepsilon\right) \to 0.
\end{align*}
Thus we have established that $\Pr_{Y_n}\left(\psel(Y_n) >\varepsilon\right)\to 0.$ By \cref{lemma:E1}, this implies that $$\Pr\left(\psel(Y_n) >\varepsilon\mid Y\in E_1\right)\to 0, $$ and equivalently that $\Pr\left(\psel(Y_n) <\varepsilon\mid Y\in E_1\right)\to 1. $ This completes the proof.
\end{proof}

\subsection{Proof of Proposition~\ref{prop:fisher-info}}

\begin{proof}[Proof of Proposition~\ref{prop:fisher-info}]
    
Recall that for a proportion $\rho\in(0,1)$ and a realization $\tilde y\in \R^n$ we define 

\begin{equation*}
\begin{aligned}
    \Rrho(X)  & := \{Y \in  \R^{(\rho  n) }: Y^\top P_X Y \geq c_\chi\},\\
    R(\tilde y;X) &:= \{Y\in \R^n: Y^\top P_XY \geq c_\chi,\; P_{X_{-1}}Y = P_{X_{-1}}\tilde y\},\\
    c_\chi & := \sigma^2 (\chi_p^2)^{-1}(1-\alphaov),\\
    e_{\tilde y} & := c_\chi - \tilde y^\top P_{X_{-1}} \tilde y.
\end{aligned}
\end{equation*} 
Let $(\Xtrain, \Ytrain)$ be the training dataset with $\rho$ proportion of the observations and $(\Xtest, \Ytest)$ the test dataset with $(1-\rho)$ proportion of the observations. We begin by deriving  the leftover Fisher information in $\Ytest$ about $\beta_1$ conditional on  $\Ytrain \in \Rrho(\Xtrain)$, where we treat $\beta_2,\ldots,\beta_p$ as nuisance parameters. That is, we will find an expression for
\begin{eqnarray}
\left(\left[\left[\mathcal{I}_{\Ytest\mid \Ytrain \in \Rrho(\Xtrain) }(\beta;\Ytrain \in \Rrho(\Xtrain))\right]^{-1}\right]_{1,1}\right)^{-1} ,
\end{eqnarray}
where $\mathcal{I}_{\Ytest\mid \Ytrain \in \Rrho(\Xtrain) }(\beta;\Ytrain \in \Rrho(\Xtrain))$ is the Fisher information matrix for the entire parameter vector $\beta$. We consider the inverse of the $(1,1)$ element of the inverted Fisher information matrix in order to isolate the information in $\Ytest$ about $\beta_1$ while accounting for the fact that $\beta_{-1}$ is also unknown.

Since $\Ytrain$ and $\Ytest$ are independent, the density of $\Ytest$ conditional on $\Ytrain\in R_\rho(\Xtrain) $ is equivalent to the unconditional density of $\Ytest$, i.e. 
$$f_{\Ytest \mid \Ytrain \in R_\rho(\Xtrain)  }(y) = f_{\Ytest}(y)  = (2\pi\sigma^2)^{-n/2}\exp\left\{\frac{1}{2\sigma^2}(y  - \Xtest\beta)^\top (y -\Xtest \beta)\right\}. $$ 
Since we are concerned with the derivative of this expression with respect to $\beta,$ we will now consider this function as a likelihood of $\beta$ with $y\in \R^n$ fixed.  
Then the Hessian of the log likelihood is 
$$\frac{d^2}{d(\beta_1,\; \beta_{-1})d(\beta_1,\; \beta_{-1})^\top}\log f_{\Ytest}(\beta;y)
= -\frac{1}{\sigma^2}\begin{bmatrix}
 (\Xtest)_{1}^\top (\Xtest)_{1} & (\Xtest)_{1}^\top (\Xtest)_{-1}\\
(\Xtest)_{-1}^{\top}(\Xtest)_{1} & (\Xtest)_{-1}^{\top} (\Xtest)_{-1}\\
\end{bmatrix}.$$ Note that this Hessian is non-random because the test design matrix $\Xtest$ is fixed. Treating $\beta_{-1}$ as a nuisance parameter, we take the Schur complement to obtain the $(1,1)$ entry of the inverse Fisher information matrix and conclude that the leftover Fisher information is 
\begin{align*}
&\left(\left[\left[\mathcal{I}_{\Ytest\mid \Ytrain \in \Rrho(\Xtrain) }(\beta;\Ytrain \in \Rrho(\Xtrain))\right]^{-1}\right]_{1,1}\right)^{-1}  \\
& \quad = \frac{1}{\sigma^2}\left((\Xtest)_{1}^\top (\Xtest)_{1}  -[(\Xtest)_{1}^\top (\Xtest)_{-1}][(\Xtest)_{-1}^{\top}(\Xtest)_{-1}]^{-1}[(\Xtest)_{-1}^{\top}(\Xtest)_{1}]\right)\\
& \quad = \frac{1}{\sigma^2}\left((\Xtest)_{1}^\top (I - P_{(\Xtest)_{-1}})(\Xtest)_{1}\right).
\end{align*}

Next, we derive an expression for 
$\mathcal{I}_{Y\mid   Y \in R(\tilde y; X)}(\beta_1;Y \in R(\tilde y; X))$ for some realization $\tilde y\in \R^n$. Let
\begin{enumerate}
    \item $U_1\in \R^{n\times (n-p)}$ such that $U_1U_1^\top = I_{n}-P_X$ and $U_1^\top U_1= I_{n-p}$, 
    \item $U_2\in \R^{n}$ such that $U_2U_2^\top = P_X-P_{X_{-1}}$ and $U_2^\top U_2= 1$, 
    \item $U_3\in \R^{n\times (p-1)}$ such that $U_3U_3^\top =P_{X_{-1}}$ and $U_3^\top U_3= I_{p-1}$. 
\end{enumerate} 
Furthermore, define 
$$\tilde Z = U_1^\top Y\in \R^{n-p}, \quad  \tilde W = U_2^\top Y\in \R, \quad \tilde V = U_{3}^\top Y\in \R^{p-1}, \quad U = \begin{bmatrix}U_1& U_2& U_3\end{bmatrix}.$$ 
Then note that
\begin{align*}
U^\top U &= \begin{bmatrix} U_1^\top \\ U_2^\top \\ U_3^\top \end{bmatrix} \begin{bmatrix}U_1& U_2& U_3\end{bmatrix}  = \begin{bmatrix}U_1^\top U_1& 0 & 0 \\ 0 & U_2^\top U_2& 0 \\ 0 & 0 & U_3^\top U_3 \end{bmatrix}= I_{n},\\
UU^\top &= \begin{bmatrix}U_1& U_2& U_3\end{bmatrix} \begin{bmatrix} U_1^\top \\ U_2^\top \\ U_3^\top \end{bmatrix}= U_1U_1^\top + U_2U_2^\top + U_3U_3^\top =I_n,\\
Y & =  U U^\top Y = U_1U_1^\top Y + U_2U_2^\top Y + U_3U_3^\top Y = U_1 \tilde Z + U_2 \tilde W + U_3 \tilde V,\\
Y&\in R(\tilde y;X)  \iff \tilde W^2\geq e_{\tilde y} \text{ and } \tilde V = U_3^\top \tilde y.
\end{align*}
Similarly to \eqref{eq:cond-step-2}, this allows us to write the conditional density of $Y$ given $Y\in R(\tilde y;X)$ as 
$$ f_{Y \mid Y \in R(\tilde y;X)}(y)  = \frac{f_{\tilde Z}(U_1^\top y) f_{\tilde W} (U_2^\top y) \bm{I}_{\{y\;:\;U_3^\top y = U_3^\top \tilde y\}}(y)\bm{I}_{\{y\;:\;(U_2^\top  y)^2 \geq e_{\tilde y}\}}(y)}{\Pr\left( \tilde W^2\geq e_{\tilde y}\right)}.$$
We now switch to viewing this expression as a likelihood of $\beta$ for a fixed realization $y\in \R^n,$ and thus the indicator $\bm{I}_{\{y\;:\;U_3^\top y = U_3^\top \tilde y\}}(y)$ must be satisfied. Note that the density $f_{\tilde Z}$ does not involve $\beta$. Then, 
$$ f_{Y \mid Y \in R(\tilde y;X)}(
\beta;y)  \propto \frac{ f_{\tilde W} (\beta;U_2^\top y) \bm{I}_{\{y\;:\;(U_2^\top  y )^2\geq e_{\tilde y}\}}(y)}{\Pr\left( \tilde W^2\geq e_{\tilde y}\right)}= f_{\tilde W \mid \tilde W^2 \geq e_{\tilde y}}(\beta;U_2^\top y). $$
Note that $\tilde W = U_2^\top Y\sim N(U_2^\top X\beta , \sigma^2)$, or equivalently $\tilde W\sim N_1(U_2^\top X_1\beta_1 , \sigma^2)$.
Thus, letting $\tilde w = U_2^\top y$,
$$f_{\tilde W|\tilde W^2\geq e_{\tilde y}}(\beta;\tilde w)  =\frac{\frac{1}{\sigma}\phi \left(\frac{\tilde w-U_2^\top X_1\beta_1}{\sigma}\right)\bm{I}_{\{\tilde w^2\geq e_{\tilde y}\}}(\tilde w)}{ 1 +\bm{I}_{\{e_{\tilde y}>0\} }\left(- \Phi\left(\frac{\sqrt{e_{\tilde y}}-U_2^\top X_1\beta_1}{\sigma}\right) + \Phi\left(\frac{-\sqrt{e_{\tilde y}}-U_2^\top X_1\beta_1}{\sigma}\right)\right) },$$
where $\phi$ is the density and $\Phi$ the CDF of a standard normal random variable and $\bm{I}_{\{e_{\tilde y}>0\} }$ is the indicator that $\tilde y$ is such that $e_{\tilde y}>0$. Noting that the above display is free of $\beta_{-1}$, we rewrite this in terms of $Y$ as 
$$f_{Y \mid Y \in R(\tilde y;X)}(\beta_1;y)  \propto \frac{\frac{1}{\sigma}\phi \left(\frac{U_2^\top y-U_2^\top X_1\beta_1}{\sigma}\right)\bm{I}_{\{(U_2^\top y)^2\geq e_{\tilde y}\}}(y)}{ 1 +\bm{I}_{\{e_{\tilde y}>0\} }\left(- \Phi\left(\frac{\sqrt{e_{\tilde y}}-U_2^\top X_1\beta_1}{\sigma}\right) + \Phi\left(\frac{-\sqrt{e_{\tilde y}}-U_2^\top X_1\beta_1}{\sigma}\right)\right)}.$$
Accordingly, we simply consider the second derivative of the log likelihood with respect to $\beta_1$ instead of the entire Hessian (with respect to $\beta)$. 

After some calculus, we find that if %
\begin{equation}\label{eq:S(Y)=1}
\bm{I}_{\{y\;:\;U_3^\top y = U_3^\top \tilde y\}}(y)=1\; \text{ and }\; \bm{I}_{\{y\;:\;(U_2^\top  y )^2\geq e_{\tilde y}\}}(y)=1,    
\end{equation}
then the second derivative has the expression
\begin{align*}
    \frac{d^2}{(d\beta_1)^2}&\log f_{Y \mid Y \in R(\tilde y;X)}(\beta_1;y)\\
    &= \frac{X_1^\top (P_X-P_{X_{-1}}) X_1}{\sigma^2} \left[ -1 + \bm{I}_{\{e_{\tilde y}>0\}}\left( \frac{[\phi(a) -\phi(b)]^2}{[1- \Phi(b)+\Phi(a)]^2}+\frac{\phi(a)\cdot a - \phi(b) \cdot b}{1-\Phi(b)+\Phi(a)}\right)\right]
\end{align*}
where 
$$a=\frac{-\sqrt{e_{\tilde y}}-U_2^\top X_1\beta_1}{\sigma}, \quad b=\frac{\sqrt{e_{\tilde y}}-U_2^\top X_1\beta_1}{\sigma}.$$ This second derivative is non-random. Additionally, the indicators in \eqref{eq:S(Y)=1} are equivalent to $Y \in \R(\tilde y ;X)$. Thus the leftover Fisher information about $\beta_1$ in $Y$ conditional on $Y \in \R(\tilde y ;X)$ is 
\begin{equation*}
\mathcal{I}_{Y\mid   Y \in R(\tilde y; X)}(\beta_1;Y \in R(\tilde y; X)) =\tfrac{X_1^\top (I-P_{X_{-1}})X_1 }{\sigma^2} \left[ 1 - \bm{I}_{\{e_{\tilde y}>0\}}\left(\tfrac{[\phi(a) -\phi(b)]^2}{[1-\Phi(b)+\Phi(a)]^2}+\tfrac{\phi(a)\cdot a - \phi(b) \cdot b}{1-\Phi(b)+\Phi(a)}\right)\right].
\end{equation*} 
\end{proof}

\subsection{Proof of Proposition~\ref{prop:retropsel}}

\begin{proof}[Proof of \cref{prop:retropsel}]

When we plug in the definitions of $\RSE$, $\RR^2$, and $\FhoM$, we find that $\aretro = a(y) $ and $\dretro = d(y)$ for $a(y)$ and $d(y)$ defined in \eqref{eq:acdr} and $\aretro$ and $\dretro$ defined in Proposition~\ref{prop:retropsel}. That is, 
$$\aretro = \tfrac{\RSE^2\cdot (n-p-1)}{1-\RR^2} = \tfrac{\frac{y^\top (I_{n-1}- P_X) y}{n-p-1}\cdot (n-p-1)}{\frac{ y^\top (I_{n-1}-P_X)y }{ y^\top y}} = y^\top y = a(y) ,$$
and 
\begin{align*}
    \dretro &= \RSE^2 \cdot \left( \tfrac{\RR^2\cdot (n-p-1)}{1-\RR^2}-\FhoM \cdot m \right)\\
    &= \frac{y^\top (I_{n-1}- P_X) y}{n-p-1} \cdot \left(\tfrac{\left(1 - \frac{ y^\top (I_{n-1}-P_X)y }{ y^\top y}\right)\cdot (n-p-1)}{ \frac{ y^\top (I_{n-1}-P_X)y }{ y^\top y}} -\tfrac{n-p-1}{m}\cdot \tfrac{y^\top (P_X - P_{X_{-M}}) y }{y^\top (I_{n-1}- P_X) y} \cdot m \right)\\
    &=y^\top y - y^\top (I_{n-1}-P_X)y - y^\top (P_X - P_{X_{-M}}) y \\
    & = y^\top P_{X_{-M}}y\\
    &  = d(y).
\end{align*}
Hence, beginning from the characterization of $\psel$ in \eqref{eq:pselchisq},
\begin{align*}
\psel(y) &= \Pr \left ( 
        \frac{W}{Z} \geq r(y)\,\middle|\frac{a(y)\cdot W + d(y)\cdot  Z}{(a(y)-d(y))\cdot Z}\geq c\right) \\
        &= \Pr \left (  \frac{W}{Z} \geq \FhoM \cdot \frac{n-p-1}{m}\,\,\middle|\frac{\aretro\cdot W + \dretro\cdot  Z}{(\aretro-\dretro)\cdot Z}\geq c\right),\end{align*}
        as claimed.

\end{proof}

\bibliographystyle{agsm}

\bibliography{refs-zotero}

\end{document}